\newcommand{\Prob}[1]{\mathrm{Pr}\left({#1}\right)}
\newcommand{\Var}[1]{\mathrm{Var}\left({#1}\right)}
\newcommand{\Cov}[1]{\mathrm{Cov}\left({#1}\right)}
\newcommand\fs@booktabsruled{%
  \def\@fs@cfont{\bfseries\strut}\let\@fs@capt\floatc@ruled
  \def\@fs@pre{\hrule height\heavyrulewidth depth0pt \kern\belowrulesep}%
  \def\@fs@mid{\kern\aboverulesep\hrule height\lightrulewidth\kern\belowrulesep}%
  \def\@fs@post{\kern\aboverulesep\hrule height\heavyrulewidth\relax}%
  \let\@fs@iftopcapt\iftrue
}
\newtheorem{lemma}{Lemma}
\newtheorem{definition}{Definition}
\newtheorem{remark}{Remark}
\newtheorem{assumption}{Assumptions}
\newtheorem{theorem}{Theorem}
\newenvironment{thmbis}[1]
{%
   \addtocounter{theorem}{-1}%
   \begin{theorem}}
  {\end{theorem}}
\newcommand{\sgn}[1]{\mathsf{sgn}\left[{#1}\right]}
\begin{document}

\title{Fast Estimation of Sparse Quantum Noise}

\author{Robin Harper}
\thanks{These authors contributed equally.}
\affiliation{Centre for Engineered Quantum Systems, School of Physics, University of Sydney, Sydney, NSW 2006 Australia}
\author{Wenjun Yu}
\thanks{These authors contributed equally.}
\affiliation{Institute for Interdisciplinary Information Sciences, Tsinghua University, Beijing 100084, China}
\author{Steven T.\ Flammia}
\affiliation{AWS Center for Quantum Computing, Pasadena, CA 91125 USA}

\begin{abstract}
As quantum computers approach the fault tolerance threshold, diagnosing and characterizing the noise on large scale quantum devices is increasingly important. 
One of the most important classes of noise channels is the class of Pauli channels, for reasons of both theoretical tractability and experimental relevance.  
Here we present a practical algorithm for estimating the $s$ nonzero Pauli error rates in an $s$-sparse, $n$-qubit Pauli noise channel, or more generally the $s$ largest Pauli error rates.
The algorithm comes with rigorous recovery guarantees and uses only $O(n^2)$ measurements, $O(s n^2)$ classical processing time, and Clifford quantum circuits. 
We experimentally validate a heuristic version of the algorithm that uses simplified Clifford circuits on data from an IBM 14-qubit superconducting device and our open source implementation.
These data show that accurate and precise estimation of the probability of arbitrary-weight Pauli errors is possible even when the signal is two orders of magnitude below the measurement noise floor.
\end{abstract}

\date{\today}

\maketitle

\section{Introduction}

Estimating noise in quantum computers is becoming increasingly important as we begin to test quantum error correction (QEC) on current noisy intermediate-scale devices~\cite{Martinis2015}. 
Much of the current effort in noise estimation is focused on identifying methods that will remain tractable as the system size increases beyond the few qubit regime~\cite{Cramer2010a, Lanyon2017, Helsen2018, Proctor2018, Erhard2019, Sarovar2019, Bairey_2019, Bairey_2020, dumitrescu2019hamiltonian, evans2019scalable, huang2020predicting, hamilton2020scalable, torlai2020quantum, klimov2020snake}. 
In such larger systems it is important to identify not only the errors that occur when qubits are operated in isolation or in small groups but also the additional errors that occur when the device is implementing fault-tolerant QEC circuits and nontrivial quantum algorithms.
If we are able to characterize the noise and noise types (such as control errors, decoherence and crosstalk errors) in such a system then that will allow us to better diagnose and fix such errors, for instance by enabling calibration in the presence of crosstalk. 
Characterization of the noise will also allow the construction of tailored quantum error-correcting codes and decoders and customized fault-tolerance protocols designed to counteract the specific noise in the system. 
Such bespoke systems have been shown to outperform their generic counterparts at quantum error correction~\cite{Aliferis2008, Tuckett2018, Puri2019, Guillaud2019, Tuckett2019, Tuckett2020}.

Noise estimation is possible in principle using quantum process tomography \cite{Chuang1997}, but in practice this is often not desirable for several reasons. 
First, even using methods such as compressed sensing \cite{Gross2010, Shabani2011, Flammia2012, Rodionov2014, Kalev2015, Riofro2017}, the enormous Hilbert space of a multi-qubit machine makes it difficult to efficiently estimate all possible parameters beyond a handful of qubits. 
Second, standard tomography protocols are susceptible to state preparation and measurement (SPAM) errors~\cite{Merkel2012}, which limit the accuracy in estimating noise in quantum gates.

One promising approach to make noise characterization more tractable is to reduce the noise to a smaller set of relevant parameters that can be estimated in a SPAM-free way. 
A natural candidate for this approach is to learn the \textit{Pauli projection} of a quantum noise channel.
This is the channel obtained when the noise channel is twirled over the set of $n$-qubit Pauli operators.
The remaining parameters of the channel, known as the Pauli error rates, are the most relevant parameters for near-term applications of QEC and fault tolerance because of the dominant role played by stabilizer codes~\cite{Terhal2015}.
Moreover, practical methodologies have been developed to implement the Pauli projection without substantially changing the average error rate in a given round of gates \cite{Knill2005, Wallman2016, Ware2018}. 
Furthermore, QEC tends to make noise less coherent~\cite{Huang2018,Beale2018,iverson2019coherence}, which further justifies the Pauli approximation at the logical level. 
Finally, Pauli error rates can be learned in a SPAM-free way~\cite{Flammia2019,Harper2019}. 

Focusing on Pauli channels reduces the number of parameters required for complete noise estimation to $4^n$, where $n$ is the number of qubits of the device. 
Although this has better scaling than other SPAM-robust methods that attempt to learn an entire noise channel (e.g.~\cite{Kimmel2013, Blume-Kohout2016}), this is unfortunately already too large to be tractable for some present-day quantum devices~\cite{Arute2019}. 
There are several ways to try to reduce this parameter count even further while still capturing the most relevant parameters for fault tolerance and QEC. 
For example, when the Pauli error rates form a bounded-degree Markov field, then the channel can be learned efficiently in $n$~\cite{Flammia2019}; this algorithm was experimentally validated in Ref.~\cite{Harper2019}. 
Ref.~\cite{Flammia2019} also gave an efficient algorithm for estimating the class of $s$-sparse Pauli channels, i.e.\ those with at most $s$ nonzero Pauli error rates.
These two classes of Pauli channels are motivated by the fact that quantum devices approaching the fault-tolerant regime will have very few significant errors (and therefore are approximately sparse) and will have errors that are only weakly correlated (and therefore are approximated by a low-degree Markov field).

\subsection{Main Results}

In this paper, we give a new algorithm for estimating $s$-sparse Pauli channels that is distinct from Ref.~\cite{Flammia2019}. 
This algorithm can reconstruct an $s$-sparse Pauli channel with the following recovery guarantee. 
We assume first that an experiment can be modeled as having access to a noisy oracle that can return an eigenvalue of an unknown Pauli channel with some independent Gaussian noise with variance $\xi^2$. 
Then using at most $O(sn)$ queries to the noisy oracle, the algorithm returns $s$ estimated error rates $\hat{p}_j$ that agree with the channel error rates $p_j$ with precision $|\hat{p}_j - p_j| \le O\bigl(\frac{\xi}{\sqrt{s}}\bigr)$. 
In fact, the bound is slightly stronger than this.
The precise statement is given in \Cref{Thm:main}, together with \Cref{random_support_assumption} which lay out the precise mathematical assumptions used in the derivation.

We then show how to break open the oracle and perform the entire estimation efficiently.  
We show that noisy eigenvalues can be estimated to within variance $\xi^2$ by using only Clifford quantum circuits and computational basis measurements.
Our results use modifications of the algorithm from~\cite{Flammia2019} and show how the relevant noisy eigenvalue queries can be obtained with only  $O\bigl(\frac{n^2}{\xi^2}\bigr)$ measurements. 

Next, we validate these algorithms using experimental data from a 14-qubit superconducting device~\cite{Harper2019}. 
The original experiment exhaustively estimated the averaged eigenvalues in this device. 
We use these data to construct our eigenvalue oracle. 
We then simulate various levels of measurement noise on top of this ``true'' experimental signal to validate our algorithms. 
Our results are depicted in \Cref{fig:initial_results}. 
We show that when the noise added to the eigenvalues has any standard deviation in the range of $10^{-3}$--$10^{-5}$ then we can accurately recover Pauli error rates as small as \textit{two orders of magnitude} less than the noise added on the eigenvalues. 
Importantly, even when we artificially add arbitrary many-body Pauli errors with comparable error probabilities, we still recover these strongly correlated errors with high relative precision. 

Our results suggest that practical characterization of all Pauli error rates with probabilities greater than $10^{-4}$ or $10^{-5}$ in a quantum device with $10$--$20$ qubits can be achieved with around $10^6$ or $10^7$ experimental measurements. 
In such quantum devices having sub-microsecond gate times, this puts practical noise characterization within reach on a time scale of hours, not days or weeks. 

Finally, we have written open source code, available on GitHub~\cite{GITHUB}, which reproduces all the figures in this manuscript and contains other examples which explain how to use the algorithms in real experiments.

The remainder of this paper is organized as follows. 
We provide some notation and background in \cref{sec:prelim} followed by an intuitive overview of our recovery algorithm in \cref{sec:sparse}. 
We state our precise recovery guarantees in \cref{sec:guarantee}.  
We describe the circuits we use for practical eigenvalue estimation and provide details of our validation results in \cref{sec:heuristic,sec:validation}. 
We have deferred the precise definition of the algorithm until \cref{sec:algorithm} and the proofs until \cref{sec:proof,sec:tailbounds}.
We conclude in \cref{sec:conclusion}.

\begin{figure*}[th]
    \centering
    \begin{tikzpicture}
	\node[inner sep=0pt] at (2.5,4.5)         {\includegraphics[width=1.6\columnwidth]{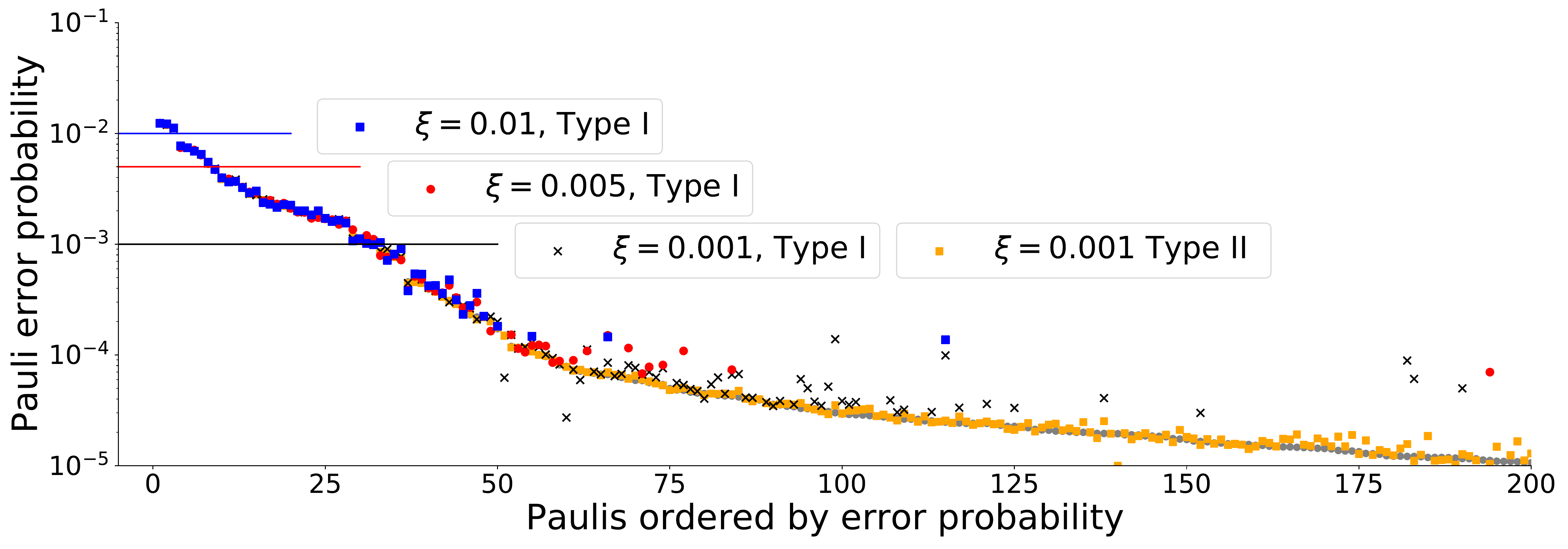}};
	\node at (-4.5,7) {\textbf{(a)}};
	\node[inner sep=0pt] at (2.5,-4)         {\includegraphics[width=1.8\columnwidth]{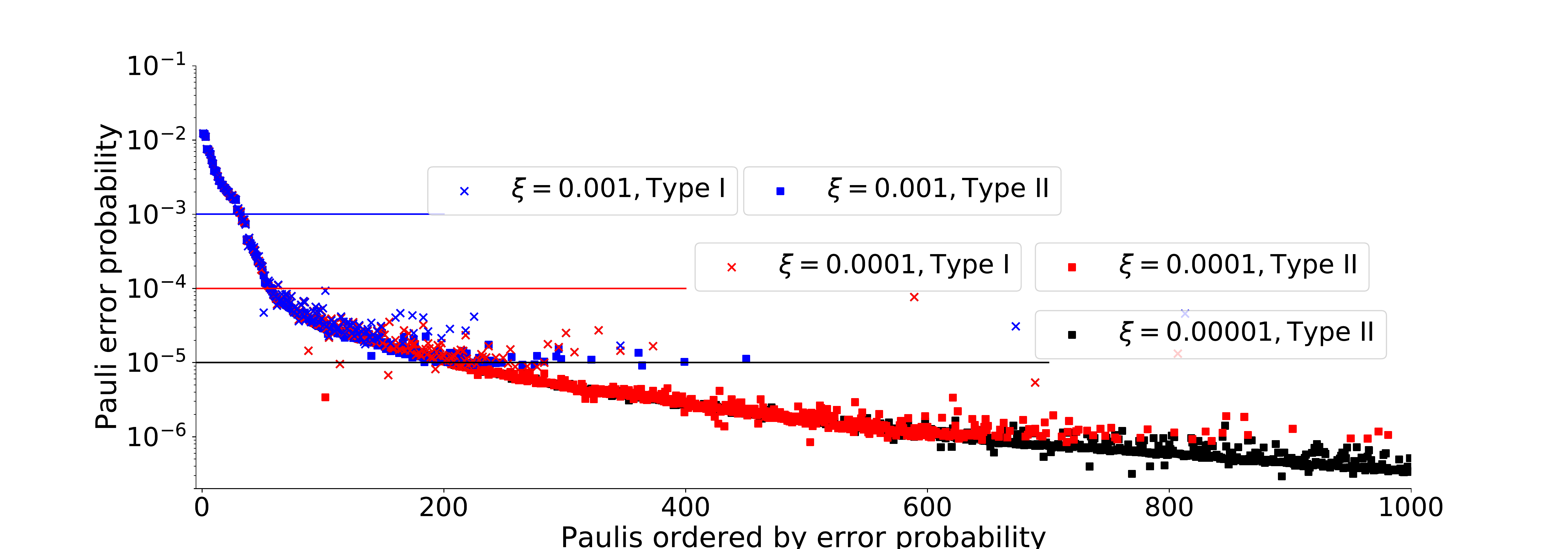}};
	\node[inner sep=0pt] at (10,5)         {\includegraphics[width=0.62\columnwidth]{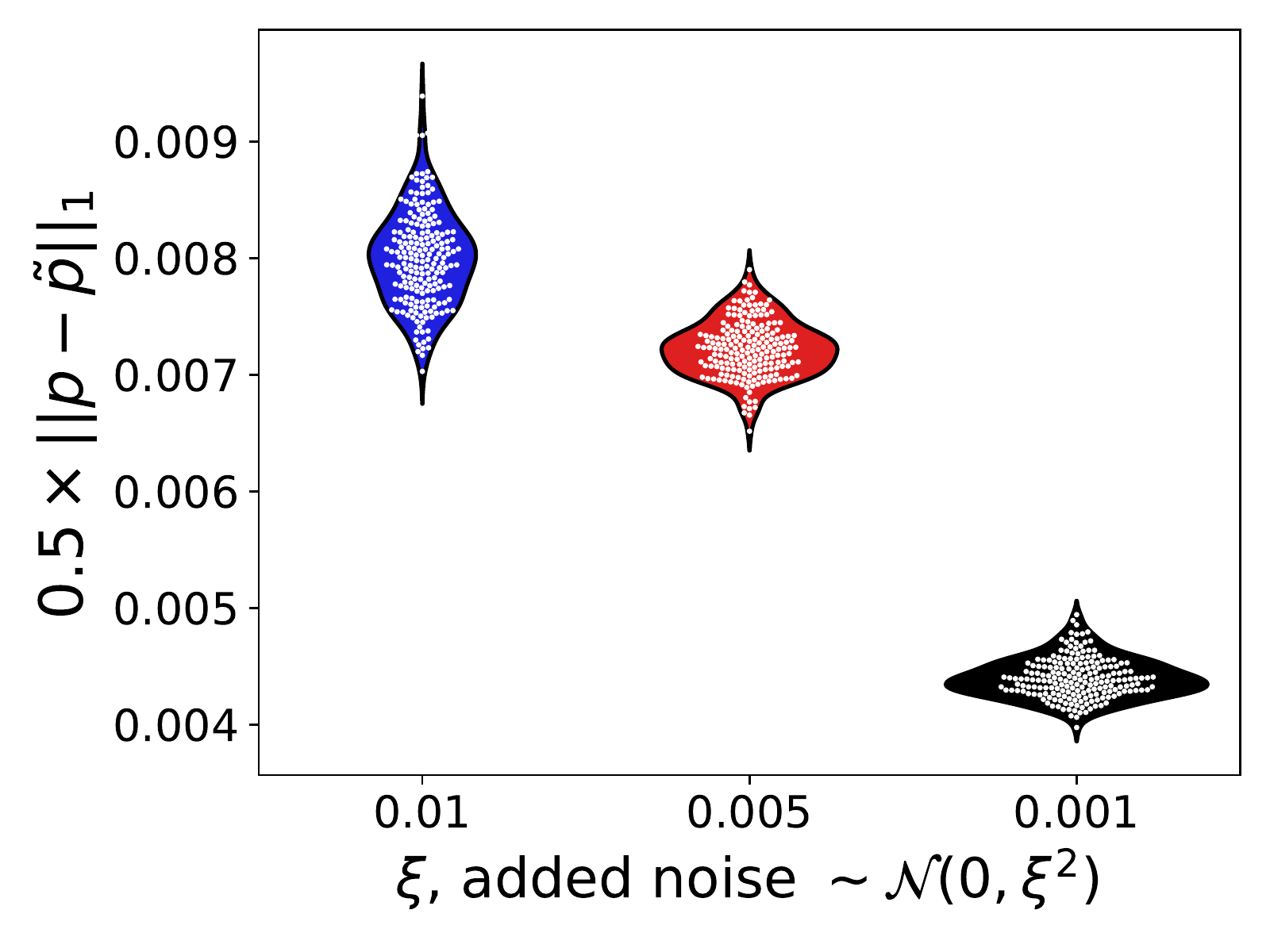}};
	\node at (-4,1.6) {\textbf{(c)}};
	\node (R) at (8.7,-5.6) {};
	\node (B) at (6,-1.2) {};

	\node[inner sep=0pt] at (1.4,0)  {\includegraphics[width=1.1\columnwidth]{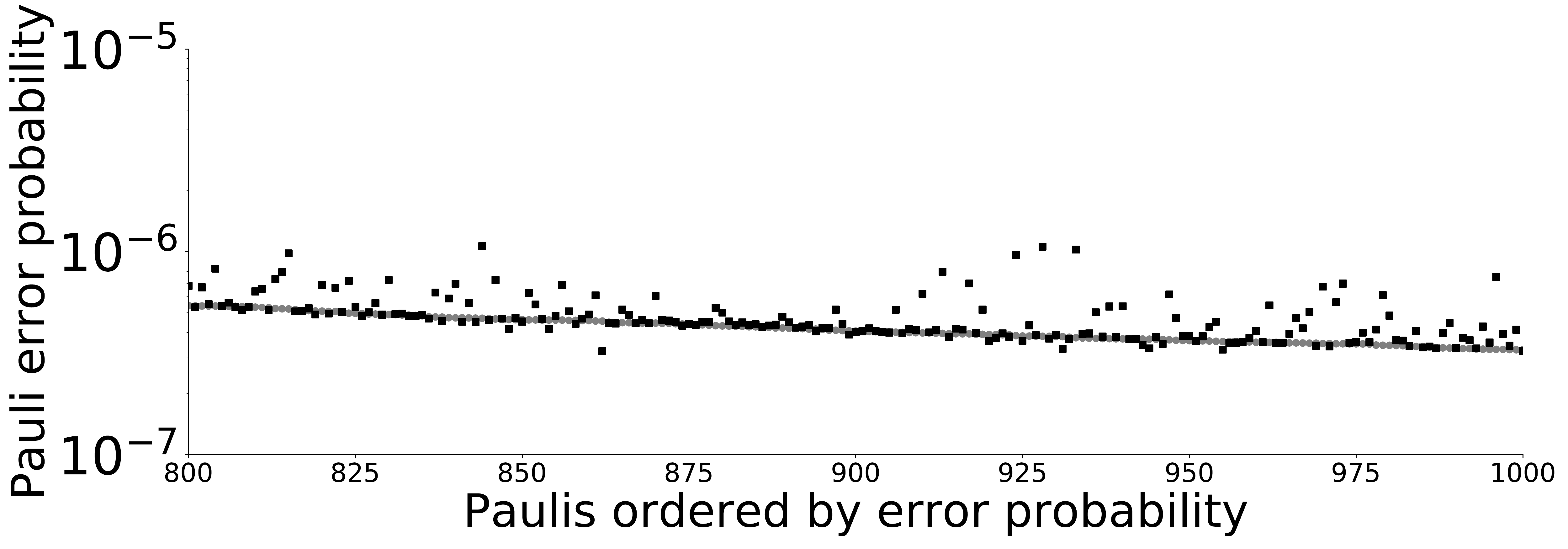}};
	\draw (6.3,-5.6) -- (8.7,-5.6) -- (8.7,-6.1) -- (6.3,-6.1) -- (6.3,-5.6);
	\node at (-4.5,-1.3) {\textbf{(b)}};
	\node at (6.8,7) {\textbf{(d)}};
	\node at (7,1.5) {\textbf{(e)}};
	\node[inner sep=0pt] at (10,-0.4)         {\includegraphics[width=0.6\columnwidth]{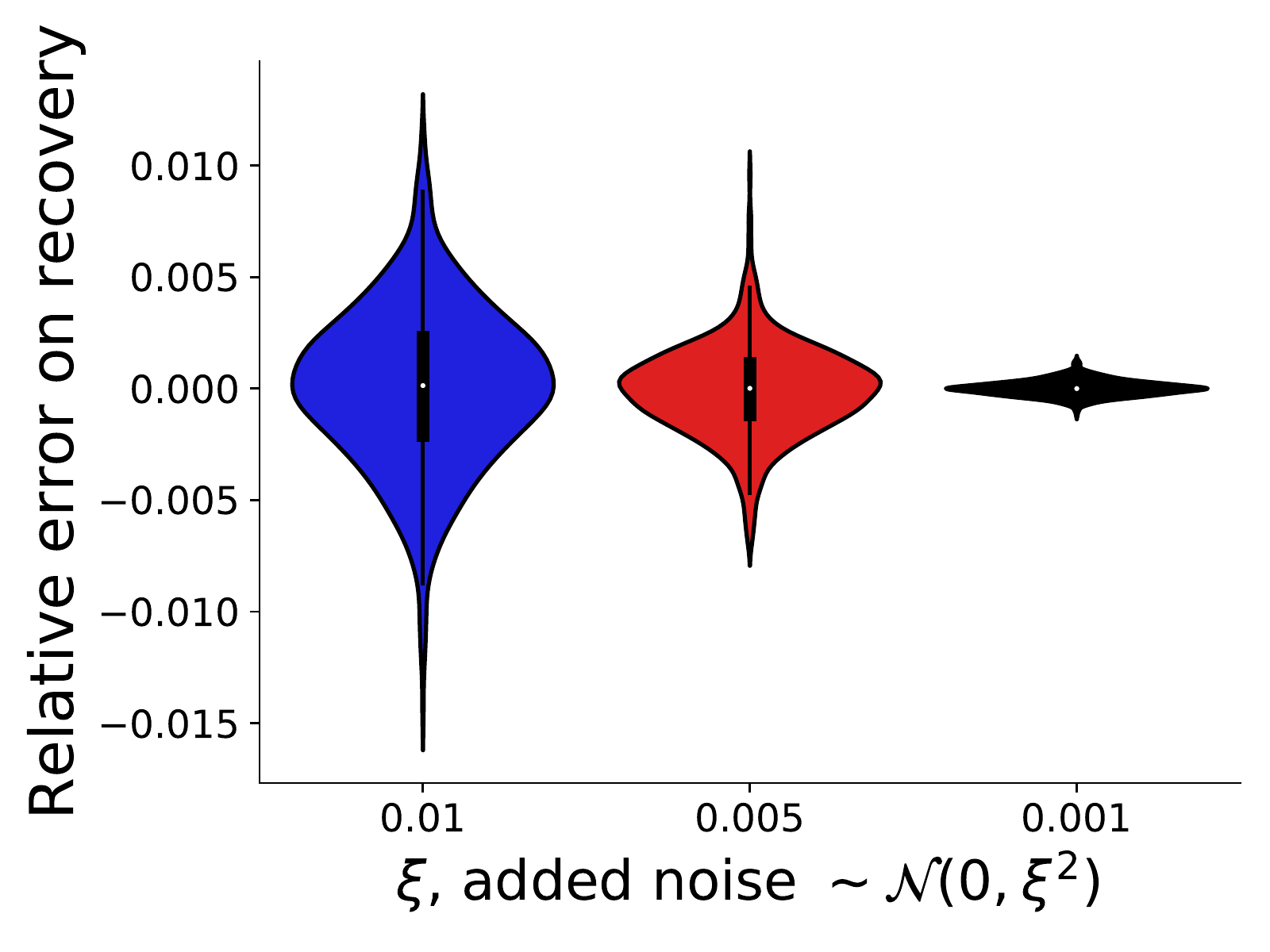}};
	\draw[<->]   (R) to[out=0,in=0] (B);
    
	\end{tikzpicture}	

    \caption{\textbf{(a)} This figure shows the ability of the reconstruction algorithms to recover sparse Pauli error rates from experimental data. 
    The ``true'' error rates (gray line) were constructed using data from a 14-qubit experiment~\cite{Harper2019} as described in the main text (\cref{sec:validation}).
    Dots indicate recovered Pauli error rates using our algorithm with artificially added normally distributed noise on top of the true error rates to simulate finite sampling and other noise sources. 
    The reconstruction used two experimental designs using a number of randomized benchmarking style experiments: Type I used $58$ and Type II used $365$ such experiments, each with the same number of samples per experiment. 
    The Type II experimental design runs more experiments and therefore takes more data overall, but allows recovery of an increasing number of Paulis while keeping constant the number of measurements per experiment. 
    \textbf{(b)} shows the recovery of 1000 different error rates as low as $10^{-7}$ with high relative precision, when the experimental noise varies between $\xi=10^{-3}$--$10^{-5}$. 
    Notably, the error rates are recovered with a precision almost two orders of magnitude below the standard deviation $\xi$ of the noise added to the signal. 
    \textbf{(c)} a more detailed look at the recovery in the regime between $10^{-6}$ and $10^{-7}$ with noise levels of $\xi=10^{-5}$.
    \textbf{(d)} Violin plot of the total variational (1-norm) distance between the original probability distribution ($\boldsymbol{p}$) and the reconstructed probability distribution ($\boldsymbol{\tilde{p}}$), being  $\frac12\|\boldsymbol{p}-\boldsymbol{\hat{p}}\|_1$. 
    The charts show the spread of recovery error over 200 different randomly generated samples of noise. 
    As can be seen the entire probability distributions are consistently recovered to high precision. 
    \textbf{(e)} shows a separate experiment where 4 distinct uniformly random many-body Paulis were added to the oracle with error rates chosen randomly from a normal distribution $\mathcal{N}(0.005, 0.001)$. 
    The algorithm was run with this additional signal to test if it can recover these Paulis as well. 
    In all cases the planted Paulis were recovered with small relative error, as shown.}
    \label{fig:initial_results}
\end{figure*}

\section{Notation and Background}\label{sec:prelim}

Given a set of $n$ qubits with Hilbert space dimension $2^n$, we can introduce the following notation. 
Let $\boldsymbol{\mathcal{P}}^{n}$ denote the group of Pauli operators on all $n$ qubits and $\boldsymbol{\mathsf{P}}^{n}=\boldsymbol{\mathcal{P}}^n/\langle i\rangle$ be the Paulis modulo phase. 
There is a natural isomorphism between multiplication on $\boldsymbol{\mathsf{P}}^n$ and bit-wise addition of $2n$-bit strings $\mathbb{F}_{2}^{2n}$ given by
\begin{gather}\label{eq:iso}
    a\in\mathbb{F}^{2n}_2,\ a\longleftrightarrow P_a=P_{a_xa_z}=i^{a_x\cdot a_z}X[a_x]Z[a_z],
\end{gather}
where $a_x,a_z\in\mathbb{F}^n_2$ and $X$ and $Z$ are the standard single-qubit Pauli matrices, and $P_a \in \boldsymbol{\mathcal{P}}^n$ is understood to be a canonical coset representative. 
Here $X[a_x] = X^{a_{x_1}}\otimes \ldots \otimes X^{a_{x_n}}$, and similar for $Z[a_z]$.
Using this isomorphism, we can directly use $a\in\mathbb{F}_2^{2n}$ to denote the Pauli matrix $P_a$. 
For any two Pauli matrices $P_a$ and $P_b$, we have $P_aP_b=(-1)^{\langle a,b\rangle}P_bP_a$ where the symplectic inner product
\begin{gather}\label{eq:PauliInnerProduct}
    \langle a,b\rangle=a_x\cdot b_z+a_z\cdot b_x\ \bmod2 
\end{gather}
is symmetric and bilinear.

We define a \textit{stabilizer group} $\boldsymbol{\mathsf{S}}$ to be a linear subspace of $\mathbb{F}^{2n}_2$ such that for all $a,b\in\boldsymbol{\mathsf{S}}$, $\langle a,b\rangle=0$. 
Thus a stabilizer group forms a commuting subgroup of the full Pauli group by the mapping in \eqref{eq:iso}. 

An $n$-qubit \textit{Pauli channel} $\mathcal{E}$ acting on a quantum state $\rho$ is of the form 
\begin{align}\label{eq:errorratedef}
    \mathcal{E}(\rho) = \sum_j p_j P_j \rho P_j\,, 
\end{align}
where $p_j$ is the error rate associated with the Pauli operator $P_j$. 
The \textit{Pauli error rates} $p_j$ form a probability distribution over all $N = 4^n$ elements of the $n$-qubit Pauli group modulo phases. 
These are closely related to, but distinct from, the \textit{Pauli channel eigenvalues}, which are defined as
\begin{align}\label{eq:channeleigdef}
    \lambda_j = \frac{1}{2^n}\mathrm{Tr}\bigl(P_j \mathcal{E}(P_j)\bigr)\,.
\end{align}
Because it will be clear from context, we will often refer to these simply as the ``error rates'' and the ``eigenvalues''.
Thus, when a state $\rho$ is subjected to the noisy channel $\mathcal{E}$, the error rate $p_j$ describes the probability of a multi-qubit Pauli error $P_j$ affecting the system. 
In contrast, the eigenvalues describe how faithfully a given multi-spin Pauli operator is transmitted through the channel.
The error rates $p_j$ and eigenvalues $\lambda_j$ are related by a \textit{Walsh-Hadamard transform} (WHT). 
From equations~\eqref{eq:errorratedef} and \eqref{eq:channeleigdef} and the orthogonality relations of the Pauli group, we can compute the Walsh-Hadamard transform coefficients:
\begin{equation}\label{eq:WHT}
    \lambda_k = \sum_{j\in\mathbb{F}^{2n}_2}(-1)^{\langle k,j\rangle}p_j\,.
\end{equation}
The symmetrical nature of the Walsh-Hadamard transform means we also have the inverse relation:
\begin{equation}\label{eq:reverse:WHT}
    p_j = \frac{1}{N}\sum_{k\in\mathbb{F}^{2n}_2}(-1)^{\langle j,k\rangle}\lambda_k\,.
\end{equation}
Note that our WHT is ordered by Pauli commutation relations---see \Cref{app:WHTordering} for a further discussion of this subtlety. 
Finally, for any natural number $N$, we then write $[N]$ to mean $\{0,\dots,N-1\}$. 

In an analogy with discrete Fourier transforms, the error rates can be thought of as the frequency domain components of the time domain signal, which in this case is the eigenvalues.
Our goal is to sparsely sample the dense time domain signal (the eigenvalues) and reconstruct the entire (but sparse) frequency domain (the error rates).
The theory of compressed sensing allows us to do this in principle with very few measurements, namely $O(s \log N)$. 
However the standard reconstruction methods that use convex optimization require $\textrm{poly}(N)$ classical computation, which is too expensive since $N=4^n$. 
Therefore, unlike in compressed sensing, we must reconstruct the sparse frequency domain signal using only $\textrm{poly}(s,\log N)$ resources for our algorithm to be considered efficient, which is indeed what we achieve here. 

Throughout this paper, we will restrict to a sparsity regime with only $s\ll4^{n/2}$ nonzero error rates, each having probabilities greater than a specified cutoff $\epsilon_0$.
(In our proofs, we assume that any error rate less than $\epsilon_0$ is identically zero, although the heuristic algorithm is more forgiving.) 
This is what we mean when we refer to an $s$-sparse model. 
This allows our algorithm to perform in the regime where $s$ is exponential in $n$. 
When such an exponential scaling holds, it makes our algorithm inefficient in $n$, but this is also a relevant regime if we wish to estimate Pauli channels with an extensive entropy. 
Distributions with extensive entropy will generally require an exponential number of error rates to estimate them with arbitrary accuracy. 

Our recovery methodology builds on one of the main results of Ref.~\cite{Flammia2019} and an adaptation of the classical algorithms described in Refs.~\cite{Scheibler2015,Li2015}. 
In Ref.~\cite{Flammia2019}, the authors show how to recover all $N = 4^n$ Pauli channel eigenvalues to \textit{relative} precision $\epsilon$ using $O\bigl(\epsilon^{-2}n 2^n\bigr)$ measurements. 
The circuit modifications we require on top of that algorithm are shown in \Cref{fig:circuits}.
The recovery of all $N$ eigenvalues would require $2^n+1$ applications of depth $O(m+ n^2/\log n)$ Clifford circuits, or $3^n$ applications of depth $O(m)$ Clifford circuits. 
Here $m$ is a constant that depends on the channel being estimated and is $O(1/\Delta)$ with $\Delta = 1-\max_{j\neq 0}\lambda_{j}$ being the spectral gap of the channel. 
We can consider $m = O(1)$, although the implied constant might be large for high-fidelity quantum channels.
While the depth of this algorithm is efficient, the number of distinct circuits required is clearly not scalable in $n$. 
A single individual eigenvalue can still be learned to relative precision $\epsilon$ using only $O\bigl(\epsilon^{-2}\bigr)$ measurements, however. 
It is the need to sweep through $2^n+1$ (or more) sets that leads to the factor of $O\bigl(n 2^n\bigr)$ in the sample complexity. 

In Ref.~\cite{Flammia2019}, the authors also derived what is essentially a variant of the Kushilevitz-Mansour algorithm~\cite{Kushilevitz1993} for learning decision trees via the Fourier spectrum and applied it to the case of Pauli channels. 
The idea is to breadth-first search through the marginal Pauli error rates, keeping those with large probability mass and pruning the search tree when the mass is below a threshold. 
This algorithm is theoretically efficient in $s$ and $n$, however our numerical experiments using this algorithm suggest that the number of eigenvalues required per recovered error rate will make it difficult to use in practice, at least in its current instantiation and in the relevant regime for quantum computing applications.

\section{Algorithm Overview}\label{sec:sparse}

The problem of reconstructing a sparse set of Pauli error rates by measuring few eigenvalues is closely related to a classical problem of computing a sparse Walsh-Hadamard transform.
This problem was studied by Scheibler \textit{et al}.~\cite{Scheibler2015} and later (in the regime of noisy signals) by Li \textit{et al}.~\cite{Li2015} by decoding a signal $x\in\mathbb{R}^N$ which contains $2^n$ points indexed by $j\in\mathbb{F}^n_2$. 
In our circumstances we are not analyzing the frequency domain of a signal, but rather the global probability distribution of the Pauli error rates in a quantum device and the eigenvalue distribution of the Paulis in a super-operator representation of a Pauli noise channel, so this formalism requires some adaptation. 

Given the WHT mapping in equations \eqref{eq:WHT} and \eqref{eq:reverse:WHT} the algorithms presented in \cite{Scheibler2015} and \cite{Li2015} are broadly applicable, but require some modifications.
We will note where adjustments have to be made.
One major difference is our inability to simultaneously measure noncommuting Pauli operators. 
Below we give a broad overview of the reconstruction algorithms as applicable to our needs.
A complete and rigorous analysis can be found in \cref{sec:proof}, but the main recovery guarantee is stated below in \Cref{Thm:main}.
We first deal with the noiseless case. 

The main idea behind the algorithm is to note that each eigenvalue is made up of a linear combination of all the error rates. 
By subsampling the eigenvalues, we are able to split up the error rates, figuratively creating `bins' of error rates, where each bin contains a linear combination of a smaller number of error rates. 
Provided that there are sufficient bins, then in the sparse regime most of these bins will only contain a few error rates with weight $\geq \epsilon_0$. 
Using \textit{aliasing}, we can identify these bins and can therefore evaluate these error rates. 
This information will allow us to reconstruct all the sparse error rates. 
With this in mind, the reconstruction algorithm can be broken down into three main steps:
\begin{enumerate}
    \item Determine the \textit{subsampling} bins and perform the experiments to measure the required eigenvalues.
    \item Calculate and measure the \textit{aliased} bins to enable identification of single Pauli-bins (\textit{single-tons}) and the Pauli error rates that occupy them.
    \item Run a decoder to `\textit{peel back}' single-tons, converting multi-Pauli bins to single-Pauli bins and repeat until all error rates are identified.
\end{enumerate}
We describe these three steps in an intuitive manner below and relegate the analysis and proofs to \cref{sec:algorithm}.

\paragraph*{Step 1---Subsampling.}
The intuition behind the first step is that it is possible to sample a specific pattern of eigenvalues that will allow the reconstruction of the global probability vector, but where various probabilities are binned (i.e.\ added together). 
For instance, given a global probability vector with $N = 4^n$ values it is possible to rewrite this as a ``reduced'' vector ($\boldsymbol{\tilde{p}}$) with $B=2^b$ values, each value being composed of the summation of $N/B$ of the original global probability values (possibly with signs). 
In the regime where our sparsity is $s < 2^n$ then we will show that with appropriate random sampling a large number of these reduced vector values (which we will call \textit{bins}), will be composed of none or one of our sparse Pauli errors, i.e.\ those with a weight $\ge \epsilon_0$ for a parameter $\epsilon_0$ to be chosen later.
In what follows we will always choose $B=2^n$, but we will occasionally use the notation $B=2^b$ (so that $b=n$) to illustrate where a given numerical factor originates from. 

Whereas Ref.~\cite{Scheibler2015} imagined using specific bit patterns of binary strings to index the requisite eigenvalues to sample, we wish to exploit the ability of a quantum device with independent measurement on each qubit to sample from a bit string of $2^n$ values. 
As previously discussed, the protocol in \cite{Flammia2019} shows how to measure, to multiplicative precision, the Pauli eigenvalues of $2^n$ commuting Paulis using one randomized-benchmarking style experiment with $n$-bit spin measurements at the output. 
The constraint that the Paulis measured be mutually commuting is exactly the constraint we require for the subsampling to allow us to create the required reduced probability vector $\boldsymbol{\tilde{p}}$.

Suppose we have a specific stabilizer group $\boldsymbol{\mathsf{S}}$. 
We will postpone how to choose this group until later. 
We can represent the entire stabilizer group by an $n \times 2n$ binary matrix $S$ whose $j^{\text{th}}$ row is the stabilizer generator $s_j$. 

Now let $v\in \mathbb{F}^n_2$ label the elements of the chosen stabilizer group, for example via the mapping $v.S$, where $v$ is thought of as a row vector. 
Our reduced probability vector $\boldsymbol{\tilde{p}}$ then consists of $B$ bins each containing a sum of $N/B$ distinct Pauli errors. 
It is labeled by a string $j \in \mathbb{F}_2^b$ and is given by:
\begin{equation}
    \tilde{p}_j = \frac{1}{B}\sum\limits_{v\in \mathbb{F}^n_2}\lambda_{v.S}(-1)^{j\cdot v}\,.\label{eq:quasiProbl}
\end{equation}
The effect of this is that the sampled Pauli eigenvalues from the stabilizer group, when transformed by the Walsh-Hadamard transform, give us $B$ bins each containing a sum of $2^n = N/B$ error rates, many of which will be zero in general. 

The binning is chosen in such a way that with high probability there will be a large number of bins that only contain a \textit{single} Pauli error rate with a weight $\ge \epsilon_0$ (the other Pauli errors allocated to that bin being, effectively, zero).
This will depend on the size of the bins and the sparsity of the Pauli error rates, and is discussed further in \cref{sec:algorithm}. 
An simple example of the subsampling and binning idea is shown in \Cref{fig:bipartite}. 

So how do we construct our stabilizer group? 
The most obvious way is to sample a random $n$-qubit Clifford (see \cite{Koenig2014,bravyi2020hadamardfree} for how to do this). 
However as $n$ grows past a few qubits, then on current devices the number of single and multi-qubit gates required to construct a generic element of the Clifford group requires circuits of depth $O(n^2/\log n)$, and if these circuits are noisy then this will wash out the signal required to estimate the eigenvalues. 
A better way for current devices is to use a random subset of $n$-qubit stabilizers that can be formed from a single round of non-overlapping 2-qubit Clifford gates. 
This has the added advantage of making it trivial to work out how to perform step 2.

\begin{figure*}[thb]
    \includegraphics[width=1.8\columnwidth]{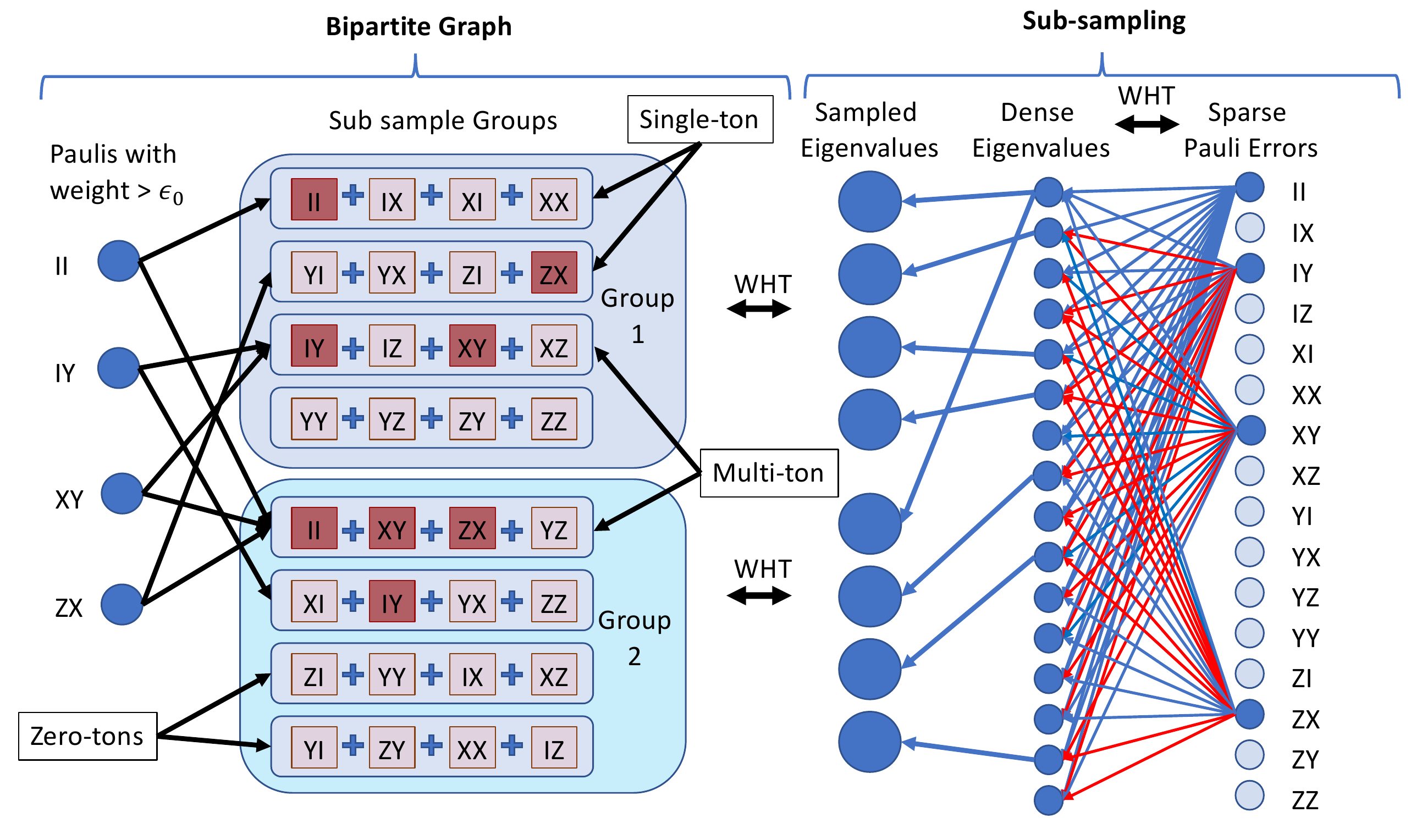}
    \caption{Illustrative diagram of the bipartite graph that is used to extract information from the subsampling bins. 
    Here we show a simple example for two qubits, where only three non-trivial Paulis have errors (shown in dark red). 
    The subsampling algorithms split the Paulis up as shown ($II$, $IX$, $XI$ and $XX$ stabilizers for group 1 and $II$, $ZY$, $XZ$ and $YX$ for the group 2), separating them into bins consisting of singletons, multi-tons and zero-tons, as described in the text. 
    In this example it can be seen that the $IY$ Pauli exists as a single-ton in group~2, allowing its value to be recovered. 
    It can then be `peeled' from the third bin in group~1 converting that bin from a multi-ton to a single-ton containing just Pauli~$XY$. 
    This then allows Pauli~$XY$ to be recovered as well, which would not otherwise be possible, as the signal would conflate with that of~$IY$ from group~1 alone.
    Iterative peeling in this fashion will eventually recover all of the nonzero Pauli error rates. 
    On the right side of the figure we illustrate how the sub-sample groups are formed. 
    The sparse Pauli errors are transformed into the dense eigenvalues by the WHT transform. 
    Here we have only added the lines relating to the nonzero Paulis, with blue a positive and red a negative contribution to the relevant eigenvalue. 
    The sampled eigenvalues are transformed to define the subsamping groups. 
    }\label{fig:bipartite}
\end{figure*}

\paragraph*{Step 2---Aliasing.}
The question then becomes: how do we detect which bins contain a single Pauli error rate? 
To do this the reconstruction algorithm uses the shift/modulation property of the WHT.
Specifically, if we let $\{p_k\}$ be the WHT of $\{\lambda_m\}$ we have:
\begin{equation}
    \lambda_{m+n}\, \overset{\text{WHT}}\longleftrightarrow\, (-1)^{\langle n,k \rangle}p_k\,.
\end{equation}
By taking each element of the stabilizer group and offsetting the sample with a shifting bit pattern (e.g.\ for four qubits the sample would be offset by the five following  bit patterns $[0,0,0,0]$, $[1,0,0,0]$, $[0,1,0,0]$, $[0,0,1,0]$, $[0,0,0,1]$) then the Pauli error rates consigned to that bin are no longer merely summed but rather are added or subtracted depending on whether the inner product of their `bit-strings' and the relevant pattern is zero or one. 
This result will be illustrated in more detail in \Cref{alg:subsampling} and \Cref{lm:prop_hashing_obs}, where we also discuss how to use bit-flip error detection codes to make the decoding more robust to noise. 

This leads to a number of remarkable effects. 
If the bin is empty (i.e.\ contains no Pauli error rates with non-zero errors) each of the \emph{offset bins} (i.e.\ for a particular $j$ each $\tilde{p}_{j,d}, d\in \{2^0\ldots2^{2n}\}$) will also be zero. 
If the bin contains only one non-zero Pauli error rate then the magnitude of the sum of each of the offset bins will be constant, and the sign of the sums will identify exactly which Pauli has the non-zero error rate. 
(For example using the four qubit offsets shown above, if the absolute values of the bins were all $0.001$ and the signs of the 4 offset bins were $(+,-,-,+)$, this could only be caused by a single Pauli error rate of $0.001$, with a bit string of $0110$). 
In every other case, the bin contains multiple Pauli error rates (a \textit{multi-ton bin}), which leads us to the PEELING decoder (see step 3).

So how can we construct the experiments that will allow us to extract the `shifted' eigenvalues? 
For instance, one might note that for any particular stabilizer group $\mathcal{S}$, the offset bit pattern applied to each of the elements of the group are unlikely to form a stabilizer group.

It transpires that where we use a stabilizer group created by local two-qubit Cliffords (on each qubit pair), we can do this simply by iterating each distinct qubit pair through four further (different) two-qubit stabilizer patterns. 
There are five two-qubit stabilizer groups, the union of whose bases form a complete set of mutually unbiased bases; let us label them $S^{\otimes 2}_{1\dots5}$. 
We set out a specific choice of these groups in detail, together with the two-qubit circuit needed to create them, in \Cref{fig:circuits}. 
The initial stabilizer is chosen by selecting randomly from $S^{\otimes 2}_{1,2}$ for each qubit pair.
This becomes the stabilizer for the purpose of Step 1. 
This circuit is used to conduct the first experiment and extract $2^n$ Pauli eigenvalues. 
The offset pattern required for this in Step 2 is constructed by iterating over each qubit pair, and replacing the circuit chosen in Step 1 with one of the other 5 (for a total of 4 further experiments per qubit pair). 
The total number of experiments required is therefore $2n+1$. 
By analyzing each of the experiments formed we will be able to pull out of the all of the eigenvalues determined by such experiments. 
\Cref{fig:circuits} shows the circuits used for each experiment and illustrates the method described above.

\paragraph*{Step 3---Peeling.}
If we use a variety of subsampling matrices (that is we repeat Steps 1 and 2 for more than one random initial choice of Cliffords) we are now in the position where we have identified a number of Pauli error rates (from bins that contain only one Pauli error rate) and we will also have a number of bins that contain more than one Pauli error rate (multi-ton bins). 
In general, for any two stabilizer groups, different Pauli error rates will get hashed into different bins. 
Where we have identified a single Pauli error rate under, say, stabilizer group 1, that same error rate may be in a different bin under stabilizer group 2, a bin it shares with one or more different high weight Paulis (i.e.\ it may be in a multi-ton bin under stabilizer group 2). 
However, because we know the value of this Pauli error rate (since it was a \textit{singleton} under stabilizer group 1), we can remove it from the bin created by stabilizer group 2 by simple subtraction. 
After this removal, some bins that were previously multi-ton bins will now become singletons, or at the very least they will be closer to being singleton in that we are left with a bin that now has one fewer Pauli error rate in it.
This removal of the value of a previously identified singleton from a different stabilizer group's bin is known as `peeling back' the known values, giving the PEELING decoder its name. 
The goal is that when we peel back our identified error rates, we create more and more bins that now contain only one Pauli error rate. 
This can be applied in an iterative fashion. 
We can then iterate this until we have either identified all the Pauli error rates (all the bins are empty) or until we have no further single Pauli error rates to peel back.
All of these steps can be viewed in \Cref{alg:peeling}. 
In the latter case the reconstruction algorithm has failed, although we will at least know the magnitude of the error rates we have failed to identify, and can perform additional experiments to try to learn them.

\subsection{Dealing with noise}

Using the ideas in Ref.~\cite{Li2015}, we can modify the reconstruction algorithm to handle noise of the form 
\begin{equation}
    \boldsymbol{\lambda} \to \boldsymbol{\lambda} + \boldsymbol{w}\,,
\end{equation}
where $\boldsymbol{w}$ is a Gaussian distributed noise vector, $\boldsymbol{w}\sim\mathcal{N}(0,\xi^2 \mathbbm{1})$. 
It is only for simplicity in the proof that we consider the isotropic case, and small dependencies and correlations do not substantially affect the observed numerical performance.

In our case, the noise arises as the estimation error in our eigenvalues caused by finite sampling. 
These finite sampling errors occur because of the limited number of random sequences and measurement shots per sequence occurring when the eigenvalue estimation experiments are carried out.
Errors of this nature have been analyzed and in Ref.~\cite{Flammia2019}.
To reduce noise, the number of sequences and shots per experiment needs to be increased, and this sample complexity was also bounded in Ref.~\cite{Flammia2019}.
In the relevant regime of high precision, the estimation error on the eigenvalues will be approximately normally distributed, and empirical estimates of the variance and covariance can be determined by bootstrapping from the observed measurement outcomes~\cite{Harper2019}.

The PEELING decoder only requires two adjustments to account for such noise: the zero Pauli verification and the single Pauli search protocols. 

For the former, in the noiseless model we identify a bin as being empty if the value of the bin (and each of the offset bins) is zero. 
Where we have noise, we simply relax the requirement that the bins are exactly equal to zero before identifying them as empty.
We can bound an acceptable small value as indicating an empty bin, given the number of `noisy' zeros in the bin and our estimate of the noise variance. 
This will lead to a noise floor of Pauli error weights we can recover.
That is, we are unlikely to recover those Pauli errors with a value so small they are swamped by the noise in the bins.
This is an inevitable consequence of the noise.

The latter case of single Pauli identification has two aspects that need to be considered.
The first is `does the bin contain only a single Pauli?', and the second is `if so: which Pauli?'. 
For a noisy version the first question is dealt with the same way as the noisy zero, i.e.\ we only require the magnitudes of the offset bins to match to within some estimated noise window. 
While this runs the risk of not noticing some small Pauli error rates that are also in the bin, it appears to work well in practice.
The second is more akin to a noisy bit flip channel, in that the noise may cause us to incorrectly identify a `1' as a zero or vice-versa. 
(This is more likely when the noise is commensurate with or greater than the Pauli error weight.) 
One simple method of dealing with this is to repeat sample with different offsets, and then take a majority vote, however our numerical simulations do not suggest that this is necessary. 
Finally we can use a number of random offsets and some additional fixed offsets chosen in such a way they form a classical error correction code to further protect the algorithm from noise. 
When an appropriate classical code is chosen this does not alter the sample complexity scaling, though it does increase slightly the number of experiments.
It also comes with a robust recovery guarantee as described in the next section and \cref{sec:algorithm}.

\section{Recovery guarantee from noisy eigenvalues}\label{sec:guarantee}

Using the algorithm illustrated above and leveraging some proofs contained in \cite{Li2015}, we can construct the following recovery guarantee that relates our ability to recover Pauli error rates with bounded error to the noise in the estimated Pauli eigenvalues. 
The intuition behind the guarantee is that by increasing the number of offset observations we can reduce the chance of incorrectly detecting whether the bin occupancy is zero, or one, or more than one. 
If the bin detection succeeds, then the peeling step will succeed with high probability for appropriate choices of the subsampling and aliasing designs. 

Our recovery guarantee does however rely on several assumptions, which we now state explicitly.
\begin{assumption}\label{random_support_assumption}
\rm Let $\boldsymbol{p}\in\mathbb{R}^{N}$ be the target Pauli error rates with support $\mathcal{K}=\mathrm{supp}(\boldsymbol{p})$ and sparsity $s = |\mathcal{K}|$. 
\begin{itemize}
	\item[$\bf A1$] \textit{(Random sparse support.)} The support set $\mathcal{K}$ is chosen uniformly at random from all subsets of $[N]$ of size exactly $s$, where $s = 4^{\delta n}$ is sub-linear in the dimension $N = 4^n$ for some $0<\delta<1/2$. 
	\item[$\bf A2$] \textit{(Independent Gaussian noise.)} Each queried Pauli eigenvalue $\lambda_j$ has noise given by independent Gaussian noise centered around the eigenvalue with variance $\xi^2$. 
	\item[$\bf A3$] \textit{(Good signal-to-noise.)} Each error rate $p_m$ for $m\in\mathcal{K}$ is lower bounded by $p_m \ge \epsilon_0$ for some $\epsilon_0 > 0$, the eigenvalue noise variance is upper bounded as $\xi^2\leq\min(\frac{B}{s^2},1)$, and the two are related via $\epsilon_0 \geq 2\xi/\sqrt{B}$. 
	Here $B$ is the number of bins in a single subsampling group. 
\end{itemize}
\end{assumption}

Our main theorem is then the following.
\begin{theorem}\label{Thm:main}
Suppose the Assumptions \ref{random_support_assumption} hold for an unknown Pauli channel with eigenvalues $\boldsymbol{\lambda}$ and error rates $\boldsymbol{p}$. 
Then with failure probability $\mathbb{P}_F\leq \mathrm{e}^{-O(n)}$, Algorithms~\ref{alg:subsampling}, \ref{alg:peeling}, and \ref{alg:bin_detect} estimate the $s$-sparse Pauli error rates $\widehat{\boldsymbol{p}}$ such that $\|\widehat{\boldsymbol{p}}-\boldsymbol{p}\|_{\infty}\leq 2\xi/\sqrt{B}$ using $O\bigl(s n\bigr)$ eigenvalue queries and $O(sn^2)$-time classical computation.
\end{theorem}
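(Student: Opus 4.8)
The plan is to transplant the robust sparse Walsh--Hadamard recovery analysis of Refs.~\cite{Scheibler2015,Li2015} onto the stabilizer-group subsampling of \eqref{eq:quasiProbl} under the noise model of \Cref{random_support_assumption}. I would build the argument from three lemmas, proved in \cref{sec:proof,sec:tailbounds}: a \emph{hashing} lemma describing how each random subsampling group distributes the $s$ supported Paulis among the $B$ bins; a \emph{bin-detection} lemma showing that \Cref{alg:bin_detect}, fed the $O(n)$ coded offset observations generated in Step~2, correctly labels a bin as zero-ton, single-ton, or multi-ton---and, in the single-ton case, returns the right Pauli together with an accurate estimate of its rate---except with probability $\mathrm{e}^{-\Omega(n)}$; and a \emph{peeling} lemma showing that with a constant number $C=O(1)$ of independent subsampling groups the iteration of \Cref{alg:peeling} clears every bin except with probability $\mathrm{e}^{-\Omega(n)}$. \Cref{Thm:main} then follows by a union bound over these events together with a resource count read off from \Cref{alg:subsampling,alg:peeling}.

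For the hashing lemma I would substitute the WHT relation \eqref{eq:WHT} into \eqref{eq:quasiProbl} to obtain the clean identity $\tilde p_j=\sum_{k\in\mathcal{K}:\,h_S(k)=j}p_k$, where $h_S\colon\mathbb{F}_2^{2n}\to\mathbb{F}_2^n$ is the linear map $k\mapsto S\Lambda k$ built from the stabilizer matrix $S$ and the symplectic form $\Lambda$, and the offset-by-$d$ variant $\tilde p_{j,d}=\sum_{k:\,h_S(k)=j}(-1)^{\langle d,k\rangle}p_k$. Choosing $S$ by the local two-qubit random-Clifford rule of \cref{sec:heuristic} makes $h_S$ behave like a random surjective linear map, so (this is the content of \Cref{lm:prop_hashing_obs}) any fixed supported Pauli lands in a uniform bin and any two distinct supported Paulis collide with probability at most $1/B$; with \textbf{A1} this is all the randomness the rest of the argument uses. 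For the bin-detection lemma, note that the transform in \eqref{eq:quasiProbl} is orthogonal up to the scalar $1/\sqrt{B}$, so under \textbf{A2} the noise on each offset-bin value is $\mathcal{N}(0,\xi^2/B)$: a threshold-and-majority test on the $\Theta(n)$ offset-bin magnitudes separates zero-tons from occupied bins because a single-ton has value of size $p_k\ge\epsilon_0\ge 2\xi/\sqrt{B}$ by \textbf{A3} while a zero-ton has size $0$, with failure probability $\mathrm{e}^{-\Omega(n)}$ by a Chernoff bound on the Gaussian tail; the sign string $d\mapsto(-1)^{\langle d,k\rangle}$ over the coded offset set is a codeword of a fixed classical linear code, so its syndrome decoding returns $k$ exactly unless the number of noise-flipped offsets exceeds the correction radius, again an $\mathrm{e}^{-\Omega(n)}$ event; and sign-correcting and averaging the $\Theta(n)$ offset observations of the bin gives an estimate of $p_k$ of accuracy $O(\xi/\sqrt{B})$. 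Multi-ton bins are exactly those that fail both tests, so they need no separate treatment.

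For the peeling lemma I would run the standard erasure-peeling / stopping-set argument: the iteration halts prematurely only if there is a nonempty $T\subseteq\mathcal{K}$ such that, in every one of the $C$ groups, each bin occupied by an element of $T$ is shared by at least two elements of $T$. Using the pairwise-collision bound from the hashing lemma, the probability that a fixed $T$ of size $\ell$ is such a stopping set is at most $(c\ell/B)^{(C-1)\ell}$ for an absolute constant $c$; summing $\binom{s}{\ell}$ of these over $2\le\ell\le s$ totals $\mathrm{e}^{-\Omega(n)}$ because \textbf{A1} and \textbf{A3} force $s^2/B$ to be exponentially small in $n$ (this is precisely the role of $0<\delta<1/2$ together with $\xi^2\le B/s^2$), and a sufficiently large constant $C$ drives the exponent as negative as desired. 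Combined with the bin-detection lemma and a union bound over the $O(CB)$ bins this gives $\mathbb{P}_F\le\mathrm{e}^{-\Omega(n)}=\mathrm{e}^{-O(n)}$ and shows that every supported error rate is identified.

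It remains to control the accuracy of the reported $\widehat{p}_k$, the delicate point being that the $O(\xi/\sqrt{B})$ single-ton noise could compound as it is subtracted during peeling. I would report, for each $k$, the estimate from its least-contaminated single-ton occurrence---ideally a bin that was a single-ton before any peeling---and, where peeling is unavoidable, bound the propagated error by the same pairwise-collision budget as in \cite{Li2015}, which limits both the peeling depth any one bin undergoes and the number of rates peeled from it; averaging the sign-corrected offsets of the reported bin then brings the residual to the stated $2\xi/\sqrt{B}$, with the constant verified in \cref{sec:tailbounds}. The complexity claim is then bookkeeping: \cref{sec:algorithm} uses $C=O(1)$ subsampling groups, each queried with $O(n)$ offsets and $B$ subsampling points with $B$ of order $s$, for $O(sn)$ noisy eigenvalue queries in total, while the classical cost is dominated by the peeling loop, which runs for $O(n)$ rounds over $O(s)$ bins with $O(n)$ arithmetic per bin, for $O(sn^2)$ time. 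The step I expect to be the real obstacle is exactly this interplay: the stopping-set union bound and the error propagation both have to be absorbed by the single slack parameter $s^2/B\lesssim\xi^{-2}$ supplied by \textbf{A3}, and making the constants line up so that the residual is precisely $2\xi/\sqrt{B}$---rather than merely $O(\xi/\sqrt{B})$---is where most of the care in \cref{sec:tailbounds} goes.
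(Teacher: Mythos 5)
Your architecture (subsample, detect bins, peel, union bound) matches the paper's, which itself follows the SPRIGHT framework of Ref.~\cite{Li2015}; your stopping-set analysis of the peeling iteration is a legitimate alternative to the paper's route, which instead conditions on the event that directed neighborhoods in the bipartite graph are cycle-free (\Cref{lm:tree-like}) and runs a density-evolution recursion (Remark~\ref{iteration number}) to get a failure probability $O(1/s)=\mathrm{e}^{-O(n)}$. However, there are two genuine gaps. First, your claim that ``multi-ton bins are exactly those that fail both tests, so they need no separate treatment'' is wrong: the single-ton \emph{false positive} event $\Pr(\widehat{\mathfrak{B}}=\texttt{s}\,|\,\mathfrak{B}=\texttt{m})$ is a real failure mode in which the decoder subtracts a fabricated value and corrupts every other bin containing those Paulis, and bounding it requires an anti-concentration argument showing that the residual $\|\mathbf{S}_c\mathbf{p}-\widehat{p}_{\widehat{m}}\mathbf{s}_{c,\widehat{m}}\|^2$ stays above the verification threshold. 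In the paper this is the bound on $f_4$ in \Cref{lm:failure_bound}, which needs the $\epsilon_0$-essential support $\mathcal{L}$, a balls-and-bins estimate on four-tuples of support elements summing to zero in the null space of $\mathbf{M}_c^T$, and a variance bound on the random cross terms $R_i$; nothing in your sketch supplies this.

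Second, your treatment of error propagation is not an argument but a placeholder, and it is precisely where the stated constant comes from. \Cref{alg:peeling} does not keep a ``least-contaminated occurrence''; it commits to the first detected single-ton and peels, so one must track how the variance of the bin noise grows under repeated subtractions. The paper does this with the array $\mathbf{T}$: \Cref{lm:independence} shows the offset errors stay independent on cycle-free peeling routes, \Cref{lm:variance} shows the update rule in Line~\ref{algline:T} correctly tracks the variance $T\nu^2$, and \Cref{maximum_T} (via the bounded right-degree event of \Cref{lem:Bounded_degree}) shows $T\le 4$ throughout, which is exactly why the value error is $\sqrt{4\nu^2}=2\xi/\sqrt{B}$ rather than an unspecified $O(\xi/\sqrt{B})$. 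Relatedly, your assertion that \textbf{A1} and \textbf{A3} force $s^2/B$ to be exponentially small is false in the paper's regime: the proof takes $B=\Theta(s)$, so $s^2/B=\Theta(s)$ is exponentially \emph{large}; what \textbf{A3} actually buys is $\xi^2\le B/s^2$ and $\epsilon_0\ge 2\xi/\sqrt{B}$, i.e.\ a signal-to-noise separation, while the stopping-set (or tree-like) argument is closed by taking the constant $C$ large enough, not by shrinking $s^2/B$.
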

\begin{proof}
The proof is given in \cref{sec:proof}.
\end{proof}

Note that our main theorem references a noisy eigenvalue oracle rather than a direct sample complexity for estimating the eigenvalues. 
From~\cite{Flammia2019}, $O(sn)$ queries to the eigenvalue oracle can be approximated to within variance $\xi^2$ using only $O\bigl(\frac{n^2}{\xi^2}\bigr)$ samples. 
While a variant of the protocol in~\cite{Flammia2019} can make the noise independent, it will not be exactly isotropic Gaussian noise, so we can only heuristically claim this as the sample complexity. 
This is why we state the formal main result in terms of query complexity. 

It is worth remarking on the strength of the assumptions that go into the statement of the theorem. 
Assumption A1 is mathematically convenient, but is certainly too strong physically since most errors in near-term quantum devices are likely to have low weight. 
This could in principle be compensated by incorporating a randomizing permutation into the experimental design. 
However, our experiments (see the next section) do not seem to require such a compensation for convergence.
Assumption A2 is again mathematically convenient, and it will only ever be approximately true in practice. 
We believe that other error models with weak correlations and bounded variance will have similar guarantees, but an analysis of this would introduce significant complications without elucidating anything about the algorithm. 
Weakening A2 in this way would be interesting future work, as it would let us make direct formal statements about the sample complexity.
As for our final assumption, a signal-to-noise assumption along the lines of Assumption A3 seems to be a mathematical necessity for convergence. 
However, it may be possible that a guarantee could still be proven with a smaller signal-to-noise ratio or with weaker restrictions on $\epsilon_0$ and $\xi$. 
For example, a simple corollary of our result is a guarantee in the total variation distance (1-norm) such that $\frac12\|\widehat{\boldsymbol{p}}-\boldsymbol{p}\|_{1}\leq s\xi/\sqrt{B}$, which is nontrivial exactly when $\xi < \sqrt{B}/s$ (cf.\ A3). 
It might be easier (and more natural) to directly prove this implication of our result, or it may be possible to prove this using weaker assumptions.

\section{Experimental validation}\label{sec:validation}

To validate our algorithm we use data extracted from a 14-qubit superconducting device build by IBM. 
In Ref.~\cite{Harper2019} the complete distribution of locally averaged Pauli error rates in the device was estimated. 
In this work, we recycle the data from that experiment to validate our new algorithms. 

The data set from Ref.~\cite{Harper2019} consists of $2^{14}$ \textit{locally averaged} eigenvalue estimates, meaning that each eigenvalue is labeled by a $14$-bit string that labels the presence or absence of a nontrivial Pauli on each corresponding qubit. 
This is in contrast to the full eigenvalues, each of which would require a $28$-bit label and could additionally resolve the entire set of $2^{28}$ Pauli eigenvalues, without local averaging. 
Although we could run our algorithm on the $2^{14}$ locally averaged eigenvalues, to make it more challenging we have looked at random self-consistent extrapolations of the data onto the full set of $2^{28} \approx 2.7\times10^{8}$ eigenvalues.

The random interpolation proceeds as follows. 
From the estimated eigenvalues of Ref.~\cite{Harper2019}, we reconstruct the locally averaged error rates. 
(In fact, this step was already done in~\cite{Harper2019}.) 
For each locally averaged error rate, we pick a uniformly random point in the probability simplex of the Paulis supporting the local average. 
This defines a new probability distribution on the full set of $2^{28}$ Paulis.
Every such extrapolation has the property that locally averaging it will return the original experimentally observed data. 
We construct a ``true'' set of known Pauli channel eigenvalues by transforming (using the Walsh-Hadamard transform) on these extrapolated error rates. 
This gives us a family of experimentally derived eigenvalue oracles that we can use to validate our numerical reconstructions.

The data from Ref.~\cite{Harper2019} have a `no-error' probability of about 0.86, and upon extrapolation they have approximately 200 Paulis with an error rate above $10^{-5}$, about 600 above $10^{-6}$, and about 2,000 above $10^{-8}$. 
Although the original estimation cannot resolve error rates as small as $10^{-8}$ with meaningful error bars, our eigenvalue oracle still has access to these numbers as part of the simulation. 
For this discussion, we will focus on reconstructing errors in the regime above $10^{-5}$, as these are the most relevant. 
This corresponds to a sparsity $s = 4^{\delta n}$ with roughly $\delta\approx\frac14$.

As can be seen from \Cref{fig:initial_results}, the sparse recovery protocol performs well in this regime ($\delta \lesssim 0.25$), requiring only a fraction of the eigenvalues that would be required for a full recovery of all Pauli error rates. 
The limiting factor in this regime is the noise in the oracle, which equates directly to the number of measurements and sequences sampled as part of the original experiment (see \cite{Flammia2019} for relevant reconstruction guarantees). 
It appears that the effect of the protocol is to allow recovery of the Pauli error rates to (approximately) an order of magnitude or more \textit{less} than the noise in the oracle. 

Importantly, if a device has unexpected many-body correlations (for example through unexpected qubit interactions or crosstalk), then we should also be able to find these errors whenever their probability is above our noise floor. 
We have validated this feature of the algorithms as well by injecting known high-weight Pauli errors into the oracle. 
Our algorithm reveals and evaluates such Pauli error rates to a high degree of relative accuracy, as shown in \Cref{fig:initial_results}(e).

Section \ref{sec:additional} discusses the regime where ($\delta \geq 0.25$). 
In that case continued recovery of Paulis with low error rates requires some changes to the local stabilizer groups used (or a switch to global random stabilizer groups).

\subsection{Experiments in the regime \texorpdfstring{$\frac{1}{4} < \delta < \frac{1}{2}$}{0.25 < delta < 0.5}}\label{sec:additional}

While the experimental protocol presented above is likely to be all that is required in most practical regimes, if the number of Paulis to be recovered is large then a slight modification might be needed. 
Unlike the situation where one is using completely random stabilizer groups, the local stabilizer protocol can fail when trying to reconstruct many low-error Paulis that differ only in one or two Paulis, in such a way that they cannot be separated by the local stabilizers.
This might occur, for instance, in the regime where $\delta > 0.25$. 
In such circumstances, one can cycle each distinct set of two qubit pairs through the 5 stabilizer groups identified in \Cref{fig:circuits}(c), and then generate the offset bins for each of them. 
The number of experiments that need to be performed are the original experiment (1), then a further 4 for each qubit pair (4), times the number of qubit pairs ($n/2$), times the number of experiments needed to generate the offsets on the remaining $n-2$ qubits ($4(n-2)$), for a total of $1+8n(n-2)=O(n^2)$ total experiments. 
The eigenvalues gathered this way allow the creation of $n/2$ properly offset subsampling matrices of $2^{n+2}$ bins each containing $2^{n-2}$ Paulis. 
Empirically, this appears to be sufficient to exactly recreate the global probabilities up to $\delta=0.5$. 
\Cref{fig:initial_results}(b) illustrates the extra recovery power available in the highest precision regime.

\section{Heuristic noise reconstruction}\label{sec:heuristic}

Here we describe in more detail the intuition behind the algorithm, the experiments prescribed and a simplified, practical extraction algorithm. 
Our GitHub repository~\cite{GITHUB} contains code and examples showing how the algorithms can be used to recreate the figures in this paper.

\subsection{Determining a suitable number of subsampling groups}\label{sec:bounds}

Our proofs relating to the recovery of $s$-sparse Pauli errors require an assumption that each element in the support set $\mathcal{K}$ is chosen independently and uniformly at random from $[N]$. 
At first glance it may appear that this is not likely to be the case in a quantum device as the Pauli errors are likely to cluster around low-weight Pauli errors rather than be uniformly distributed over the $4^n$ different possible Pauli errors. 
However where we choose random $n$-qubit stabilizers (global stabilizer groups) as the basis for sampling the Pauli eigenvalues, this effectively randomizes the bin into which we consign any specific Pauli error rate, which (empirically) allows us to satisfy the uniformly random distribution requirement.

Given this we can continue to use the ``balls-and-bins'' model utilized in \cite[Appendix B]{Li2015}.
We can use this insight, together with our ability to simultaneously sample $2^n$ commuting eigenvalues, to determine practical values for the number of subsampling groups $C$ given our bin size of $B=2^n$.

Since the sparsity $s\ll N$, we have that the expected number of Paulis (balls) in one bin will be $\frac{s}{N}\times \frac{N}{B} = \frac{s}{B}$, which in the sparsity regime of interest will be $< 2$. 
Assuming we have an $s$-sparse distribution, with $B=2^n$ being the number of bins sampled, we define the sparsity coefficient $\eta = \frac{B}{s}$, which is at least 1. 
This means that we only require $C$, the number subsampling groups, to be $2$ in order to recover all of the edges in $O(s)$ iterations with probability at least $1-O(1/s)$ (see \cite[Appendix B]{Li2015}). 
It can then be seen that as each experimental run recovers $2^n$ eigenvalues, we need to perform at least one experimental run for each bin plus one for each of the offsets of the bin times the number of subsampling groups.
This means that the minimum number of experimental runs is $2(2n+1)$. 
As we discuss later, by increasing the number of offsets we can increase the recovery guarantees, but at the cost of sampling more eigenvalues (although still only scaling proportional to $n$). 

In the case where our device is too large for $\eta \geq 1$, for instance where we have had to marginalize over the measurements as $\log_2(B)\geq 30$, then we can increase our effective $C$ by marginalizing over randomly chosen qubits and creating our subsampling matrices from such randomly chosen sub-samples of the measurement outcomes. 
This will allow us to retain the recovery guarantees without increasing the number of experiments on the device, although this incurs an increased computational cost in setting up and performing the peeling decoder.

\begin{figure*}
    \centering
 \begin{tikzpicture}
	[w/.style={fill=white,thick},
	b/.style={fill=black,thick},
	p/.style={fill=white!20!blue,fill opacity = .5},
	c/.style={fill=white!20!red,fill opacity = .5},
	h/.style={fill=white!20!green,fill opacity = .5},
    i/.style={fill=white!10!red,fill opacity = .6}]

	\begin{scope}[shift={(0,0)}]
	\foreach \y in {1,...,2} {
		\draw[thick] (0,3-0.75*\y) -- (3,3-0.75*\y);
		\filldraw[w] (9/8,2.75-0.75*\y) rectangle (15/8,3.25-0.75*\y);
		\fill[p] (9/8,2.75-0.75*\y) rectangle (15/8,3.25-0.75*\y);
		\node at (3/2,3-0.75*\y) {$P_{\y}$};
		\node at (26/8,3-0.75*\y) {$|0\rangle$};
		\node at (-5/16,3-0.75*\y) {$\langle x_{\y}|$};
	}
	\foreach[evaluate=\y as \o using int(4-\y)] \y in {3,...,4} {
		\draw[thick] (0,2.5-0.75*\y) -- (3,2.5-0.75*\y);
		\filldraw[w] (9/8,2.75-0.75*\y) rectangle (15/8,2.25-0.75*\y);
		\fill[p] (9/8,2.75-0.75*\y) rectangle (15/8,2.25-0.75*\y);
		\node at (3/2,2.5-0.75*\y) {\ifthenelse{\o=0}{$P_n$}{$P_{\!\scriptscriptstyle{\!n\!-\!\o}}$}};
		\node at (26/8,2.5-0.75*\y) {$|0\rangle$};
		\node at (-5/16,2.5-0.75*\y) {\ifthenelse{\o=0}{$\langle x_n|$}{$\langle x_{\scriptscriptstyle {\!n\!-\!\o\!}}|$}};
	}
	\node at(-5/16,3) {\textbf{(a)}};
	
	\filldraw[w] (1/4,-3/4) rectangle (3/4,5/2);
	\fill[c] (1/4,-3/4) rectangle (3/4,5/2);
	\filldraw[w] (9/4,-3/4) rectangle (11/4,5/2);
	\fill[c] (9/4,-3/4) rectangle (11/4,5/2);
	\node[yscale=3.5,xscale=1.5] at (0.9,7/8) {$\Biggl($};
	\node[yscale=3.5,xscale=1.5] at (2.1,7/8) {$\Biggr)$};

	\node at (1/2,8/8) {$C^\dagger$};
	\node at (5/2,8/8) {$C^{\vphantom{\dagger}}$};
	\node at (2+2/16,43/16) {$m$};

    \node at (-0.1,1){\vdots};
	\node at (1.6,1){\vdots};
	\node at (3.1,1){\vdots};

\end{scope}

\begin{scope}[shift={(4.5,0)}]
	\foreach \y in {1,...,2} {
		\draw[thick] (0,3-0.75*\y) -- (3,3-0.75*\y);
		\filldraw[w] (9/8,2.75-0.75*\y) rectangle (15/8,3.25-0.75*\y);
		\fill[p] (9/8,2.75-0.75*\y) rectangle (15/8,3.25-0.75*\y);
		\node at (3/2,3-0.75*\y) {$P_{\y}$};
		\node at (26/8,3-0.75*\y) {$|0\rangle$};
		\node at (-5/16,3-0.75*\y) {$\langle x_{\y}|$};
	}
	
	\foreach[evaluate=\y as \o using int(4-\y)] \y in {3,...,4} {
		\draw[thick] (0,2.5-0.75*\y) -- (3,2.5-0.75*\y);
		\filldraw[w] (9/8,2.75-0.75*\y) rectangle (15/8,2.25-0.75*\y);
		\fill[p] (9/8,2.75-0.75*\y) rectangle (15/8,2.25-0.75*\y);
		\node at (3/2,2.5-0.75*\y) {\ifthenelse{\o=0}{$P_n$}{$P_{\!\scriptscriptstyle{\!n\!-\!\o}}$}};
		\node at (26/8,2.5-0.75*\y) {$|0\rangle$};
		\node at (-5/16,2.5-0.75*\y) {\ifthenelse{\o=0}{$\langle x_n|$}{$\langle x_{\scriptscriptstyle {\!n\!-\!\o\!}}|$}};
	}
	
	\foreach[evaluate=\y as \o using int(1+\y)] \y in {0,...,0} {

		\filldraw[w] (9/4,1.25-2*\y) rectangle (11/4,2.5-2*\y);
		\fill[c] (9/4,1.25-2*\y) rectangle (11/4,2.5-2*\y);
		\node at (5/2,1.875-2*\y) {$C^{\vphantom{\dagger}}_{\o}$};

		\filldraw[w] (1/4,1.25-2*\y) rectangle (3/4,2.5-2*\y);
		\fill[c] (1/4,1.25-2*\y) rectangle (3/4,2.5-2*\y);
	    \node at (1/2,1.875-2*\y) {$C^{\dagger}$};
		
	}
	\foreach[evaluate=\y as \o using int(1+\y)] \y in {1,...,1} {

		\filldraw[w] (9/4,1.25-2*\y) rectangle (11/4,2.5-2*\y);
		\fill[c] (9/4,1.25-2*\y) rectangle (11/4,2.5-2*\y);
		\node at (5/2,1.875-2*\y) {$C^{\vphantom{\dagger}}_{\!\frac{n}{2}}$};

		\filldraw[w] (1/4,1.25-2*\y) rectangle (3/4,2.5-2*\y);
		\fill[c] (1/4,1.25-2*\y) rectangle (3/4,2.5-2*\y);
	    \node at (1/2,1.875-2*\y) {$C^{\dagger}$};
		
	}
	
	\node at(-5/16,3) {\textbf{(b)}};
    \node at (-0.1,1){\vdots};
	\node at (1.6,1){\vdots};
	\node at (3.1,1){\vdots};

	\node[yscale=3.5,xscale=1.5] at (0.9,7/8) {$\Biggl($};
	\node[yscale=3.5,xscale=1.5] at (2.1,7/8) {$\Biggr)$};

	\node at (2+2/16,43/16) {$m$};
\end{scope}

	\node[inner sep=0pt] at (6,-6) {\includegraphics[width=1.5\columnwidth]{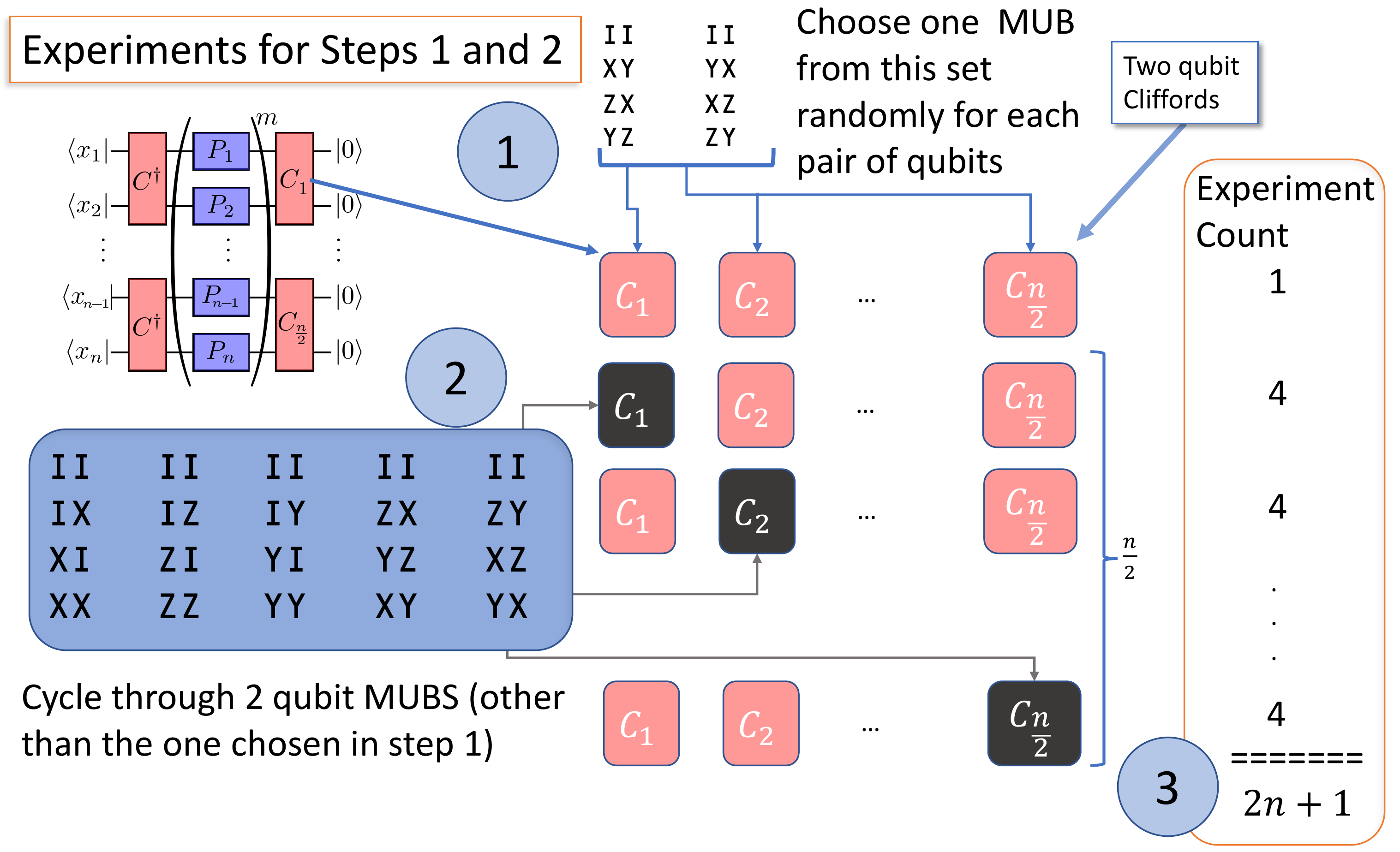}};
	\node at (-0.1,-1.5) {\textbf{(c)}};
	\node at (8.5,3) {\textbf{(d)}};
	
	\node[inner sep=0pt] at (11.5,3) {$\begin{array}{c}
			\Qcircuit @C=.6em @R=.6em {
  			\lstick{|q_1\rangle}	& \gate{H} &\qw& \targ & \gate{P}	& \targ & \qw \\
 			 \lstick{|q_2\rangle}	& \gate{H} &\gate{P} & \ctrl{-1} & \qw & \ctrl{-1}& \qw \\
			}
			\end{array} $};
	\node[inner sep=0pt] at (12.8,2.2) {\small{\textbf{II ZX YZ XY}}};
	\node[inner sep=0pt] at (10.5,1) {$\begin{array}{c}
		\Qcircuit @C=.6em @R=.6em {
		  \lstick{|q_1\rangle}	&\gate{H}	& \qw & \qw \\
		  \lstick{|q_2\rangle}	& \gate{H}	& \qw & \qw \\
		}
		\end{array}$};
	\node[inner sep=0pt] at (10.2,0.1) {\small{\textbf{ II IX XI XX  }}};
	\node[inner sep=0pt] at (11.5,-1) {$\begin{array}{c}
		\Qcircuit @C=.6em @R=.6em {
		  \lstick{|q_1\rangle}	& \gate{H} &\gate{P}\qw& \targ &\gate{P}\qw	& \targ & \qw \\
		  \lstick{|q_2\rangle}	& \gate{H} &\qw & \ctrl{-1} & \qw & \ctrl{-1}& \qw \\
		}
		\end{array}$};
		\node[inner sep=0pt] at (12.8,0.1) {\small{\textbf{II IY YI YY }}};
	
	\node[inner sep=0pt] at (13,1) {$\begin{array}{c}
		\Qcircuit @C=.6em @R=.6em {
		  \lstick{|q_1\rangle}	& \gate{H}  & \gate{P}	& \qw \\
		  \lstick{|q_2\rangle}	&\gate{H}  & \gate{P}	& \qw \\}
		\end{array}$};
		\node[inner sep=0pt] at (12.8,-2) {\small{\textbf{II ZY XZ YX }}};
	
\end{tikzpicture}	
    \caption{\textbf{(a)} shows the type of circuit described in \cite{Flammia2019} that allows the recovery of $2^n$ eigenvalues of the averaged noise channel of the device.
    The random Pauli gates (blue) are used to twirl the channel and by averaging over a number of random choices of Pauli, the noise channel in the device is transformed into a Pauli channel. 
    In a similar way to randomized benchmarking, by repeating the twirl for a certain number of Pauli gates ($m$) and then returning the system into the computation basis by choosing the Pauli inverting the twirl, a decay curve will be induced and this can then be fit to determine the eigenvalues independently of state preparation and measurement errors. 
    The single $n$-qubit Clifford (and its inverse at the end) determines which of the $4^n$ Paulis are sampled and an appropriate Clifford can be used to select any $n$-qubit stabilizer set. 
    \textbf{(b)} shows a further modification of the circuit, where instead of using a generic $n$-qubit Clifford only two-qubit Cliffords are used. 
    (Where the device has an odd-number of qubits a single Clifford can be used on one of the qubits.) 
    As discussed in the text, for each chosen value of $m$ the circuit is repeated for multiple sequences with different randomly chosen Paulis, but for fixed Cliffords. 
    Collectively each of the runs for multiple choices of Paulis carried out over several different lengths of $m$ are defined as \emph{an experiment}. 
    \textbf{(c)} shows how once an experiment has been chosen in Step 1 of the procedure, further experiments are created in order to determine the offsets required to identify the Pauli (see text). 
    As shown in Step 2, each of the two qubit Cliffords needs to be cycled sequentially through the four other 2-qubit stabilizer groups (i.e.\ the four that are different from the initial choice). 
    This means that for each sequence in Step 1, a further $2n$ experiments need to be performed, leading to $2n+1$ experiments per chosen stabilizer group. 
    For the second group an offset of one qubit should be chosen, meaning the local stabilizer groups now span different qubit pairs. 
    Simple Pauli twirls can be carried out on any odd or isolated qubits.
    \textbf{(d)} Some example sub-circuits required to perform the transform into the local stabilizer group listed in (c)-Step 2. 
    The inverse gate will be of a similar form.}
    \label{fig:circuits}
\end{figure*}

\subsection{Heuristic algorithm using local circuits}

Our numerical simulations based on the data collected in the experiments from Ref.~\cite{Harper2019} indicate that randomly chosen global stabilizers are not in fact necessary to distribute the Pauli errors widely enough to allow recovery. 
It appears that \textit{local} stabilizer groups suffice. 
This allows us to dramatically reduce circuit complexity while keeping the number of experiments required to a minimum. 
\Cref{fig:circuits}(c) details the local two-qubit stabilizer groups that can be selected to perform an extraction experiment that is viable on most current devices. 
In \Cref{fig:circuits}(d) we show the local Clifford circuits that can be used (with their corresponding inverse)  at the beginning (end) of the measurement circuits to create these local stabilizer groups. 
In Ref.~\cite{Flammia2019} it was shown that using such circuits we can estimate $2^n$ Pauli eigenvalues with relative precision $\epsilon$, using $O(\epsilon^{-2}n)$ measurements.

Having chosen the series of stabilizers to measure, together with the circuits for offsets, we will have all the relevant eigenvalues required to use the noisy peeling decoder. 

In \Cref{alg:NoisyDecoder} we show how to operate the decoder on a practical level, assuming that there are a large number of Paulis sitting below the level of interest (i.e.\ with an error rate $\ll \epsilon_0$). 
To understand how it works one should note that there are two main components to dealing with the noise.
The first is when deciding if the bin is zero, i.e.\ when the only values in the bin and its offsets are noise. 
We initially start willing to assume this is the case and slowly become less willing (by $\delta_z$) to accept that the bin is really zero as we start to try and recover smaller and smaller error rates. 
This means that initially the decoder will concentrate only on bins that have relatively large Pauli errors in them and will be less likely to mistake noise as indicative of an error.

For instance, assuming a reconstruction error normally distributed with a standard deviation of $0.01$, then if a bin contained $2^{14}$ 0-error Paulis with such noise, we would expect the mean of such a bin to be centered around 0, with a standard deviation of $\sqrt(0.01^2/2^{14})\approx7.8\times10^{-5}$. 
Therefore allowing 3 standard deviations we would expect the square of the noise in the bin to be less that $\approx 5.5\times10^{-8}$.  
By ignoring bins with a squared value less than  $5.5\times10^{-8}$ and slowly decreasing this number to, say, $5.5\times10^{-10}$ one can ensure that higher error rate Paulis are first recovered, before exploring possibly empty bins for low error rate Paulis.

The second component is the willingness to accept that there is only one value in the bin offsets. 
This time we start with a strict check, and we only accept a Pauli if the noise is below a threshold, then slowly relax this (by $\delta_s$) as we aim to recover Paulis that happen to have increasing amounts of noise in the bins with them.

\begin{table}[thb]
\begin{algorithm}[H]
	\caption{\label{alg:NoisyDecoder} Noisy Peeling Decoder}
	\begin{algorithmic}[1]
		\Require M $\gets$ Paulis in groups ($C$ sets),\Comment \Cref{fig:circuits}
		\Require $\lambda_{\Psi_{l,c}}$ for $l\in M_c$ $c\in [C]$,\Comment Eigenvalues 
		\Require $\lambda_{(\Psi_{l,c}\oplus b)}$ for $l\in M_c$ $c\in [C]$,$b\in [2^{2n}]$\Comment Offsets
		\Require $\delta_z,\delta_s$\Comment Relaxation parameters (see text)
		\State $Z_s\gets$ initial zero sensitivity
        \State $S_s\gets$ initial singleton sensitivity
        \State $\mathcal{P}\gets$ initialize empty list of Paulis + errors
        \State ~\Comment Set up quasi probability `bins'.
        \For{c=$1,\dots, C$}
		\State  $\tilde{p}_{j,c}\gets \frac{1}{2^n}\sum\limits_{l\in[2^n]}\lambda_{\Psi_{l,c}}(-1)^{\langle j,l\rangle}, j \in\mathbb{F}^n_2$\Comment \Cref{eq:quasiProbl}
		\State  $\tilde{p}_{j,c,b}\gets \frac{1}{2^n}\sum\limits_{l\in[2^n]}\lambda_{(\Psi_{{l,c}}\oplus b)}(-1)^{\langle j,l\rangle}, j \in\mathbb{F}^n_2, b\in 2^{[2n]}$
		\EndFor
		\State ~\Comment Populate $\mathcal{P}$ with singletons.
		\For{$\_=1,2,...,\text{arbitrary}$}
		\For{c=$1,\dots, C$}
		\For{$\tilde{p} \in [\tilde{p}_{j,c},\tilde{p}_{j,c,b}], j\in \mathbb{F}^n_2, b\in 2^{[2n]}$}
		\If{$\text{not IsCloseToZero}(\tilde{p},Z_s)$}
		\If{$\text{IsSingleton}(\tilde{p},S_s)$}
		\State (P,E) $\gets$ SingletonPauliAndSize($\tilde{p}$)
		\State $\mathcal{P}\gets $(P,E)
		\State ~\Comment Remove from other sets
		\State PeelBack(M$_{[C]/c}$,(P,E))
		\EndIf
		\EndIf
		\If{$\sum\mathcal{P}\approx 1$}
		\State \Return $\mathcal{P}$\Comment Success!
		\EndIf
		\EndFor		
		\EndFor
		\If{no new Paulis added since previous iteration}
		\State \Comment Relax search requirements.
		\State $Z_s\gets Z_s-\delta_z$
		\State $S_s\gets S_s+\delta_s$
		\EndIf
		\EndFor
		\State \Return Incomplete $\mathcal{P}$\Comment !Success
	\end{algorithmic}
\end{algorithm}
\end{table}

\section{Provable Recovery Algorithm}\label{sec:algorithm}
In this section, we describe in detail a hashing-based subsampling recovery algorithm for which we prove a recovery guarantee. 
For convenience, we have collected the notation used in the provable recovery algorithm into a glossary of symbols in \Cref{tab:Glossary}.

\begin{table}[th]
    \centering
    \begin{tabular}{cp{0.8\columnwidth}}
        \toprule
        \multicolumn{2}{c}{Glossary of symbols}\\
        \midrule
        $n$ & number of qubits\\
        $N$ & $4^n$, the number of Pauli operators modulo phase\\
        $s$ & sparsity, $s = 4^{\delta n}$ for $0<\delta<\frac12$ \\
        $\delta$ & related to sparsity by $s = 4^{\delta n}$ and $0 < \delta < \frac12$\\
        $B$ & number of bins in a single subsampling group, in the experimental regime described in this paper, typically $B = 2^n$\\
        $b$ & $B=2^b$ for $0<b\leq 2n$. Typically $B=2^n$, but this $b$ is used to indicate the bin number\\
        $\eta$ & the sparsity coefficient, being $B/s$\\
        $C$ & is the number of subsampling groups\\
        $P_1$ & the number of random offsets chosen for each subsampling group\\
        $P_2$ & the number of extra offsets chosen for each subsampling group to form an error correcting code\\
        $P$ & $P=P_1+P_2$\\
        $U_{c,t}[j]$ & the bin generated from subsampling group $c\in[C]$ with index $j\in\mathbb{F}_2^b$, and the subscript $t\in[P]$ indicates the offset this bin uses\\
        $\xi$ & the standard deviation of the noise in the estimated Pauli eigenvalues created by shot noise from the original experiments\\
        $\sigma$ & the standard deviation of the noise in the Pauli error rates created by WHT from the noisy Pauli eigenvalues\\
        $\nu$ & the standard deviation of the noise in a given bin, created by subsampling the noisy Pauli eigenvalues\\
        $\epsilon_0$ & the lower bound on the nonzero Pauli error rates, required by assumption A3\\
        $w_k$ & Noise on the eigenvalue $\lambda_k$, distributed like $\mathcal{N}(0,\xi)$.\\
        $W_m$ & Noise on error rate $p_m$, induced by the WHT.\\
        \bottomrule
    \end{tabular}
    \caption{Glossary of symbols used throughout the proof.}
    \label{tab:Glossary}
\end{table}

Consider a Walsh-Hadamard transformation among $n$-qubit Pauli eigenvalues and Pauli error rates like \eqref{eq:WHT}. 
In order to recover a set of sparse error rates with noisy eigenvalues, it is necessary to consider a noisy variation
\begin{align}\label{eq:noisyWHT}
    \widehat{\lambda}_k=\sum_{m\in\mathbb{F}_2^{2n}}(-1)^{\langle k,m\rangle}p_m+w_k,\ \ k\in\mathbb{F}_2^{2n}.
\end{align}
Note that we employ $w_k$ to indicate the sampling errors of the eigenvalues, and for simplicity we are assuming that they are all independent Gaussian random variables with distribution $\mathcal{N}(0,\xi^2)$.
The proposed algorithm follows the SPRIGHT framework of Ref.~\cite{Li2015}.
It first samples Pauli eigenvalues and forms several groups of bins. 
The algorithm then implements the Peeling Decoder algorithm \ref{alg:peeling} to recover individual Pauli error rates from the sampled bins.

To subsample bins from noisy eigenvalues, this algorithm employs $C$ \textit{subsampling groups}.
Each subsampling group is specified by a binary matrix $\mathbf{M}_c\in\mathbb{F}_2^{2n\times b}$ for $c\in [C]$ and a set of $P$ \textit{offsets}. 
The binary matrices $\mathbf{M}_c$ serve as hash functions to isolate individual Pauli error rates into $B$ bins with high probability. 
In our experimental setting, this $B$ is always chosen as $B=2^n$ for convenience, while in the following proof it's sufficient for $B$ to be as large as $s$, and increasing $B$ exponentially to $s$ is enough to find out the magnitude of $s$. 
Therefore in the proof, we use the general case that $B=O(s)$.
The $P$ offsets provide redundancy designed to make the recovery algorithm robust to limited amounts of sampling noise. 

Before constructing explicit algorithms, we shall introduce a method for choosing offsets by using good error correcting codes~\cite{Li2015}.

\begin{definition}[Offsets]\label{SO}
Let $P=P_1+P_2$ with $P_i={O}(n)$ for $i=1,2$. 
We choose $P_1$ \emph{random offsets} $\mathbf{d}_t$ for $t=0, \cdots,P_1-1$ chosen independently and uniformly at random over $\mathbb{F}_2^n$, and $P_2$ \emph{coded offsets} $\mathbf{d}_t$ for $t=P_1,\cdots,P-1$ such that the offset matrix $\mathbf{G}=[\cdots;\mathbf{d}_t;\cdots;]\in \mathbb{F}_2^{P_2\times 2n}$ constitutes a generator matrix of a linear code with parameters $[P_2,2n,\beta P_2]$ with $\beta > \mathbb{P}$. 
Here $\mathbb{P}$ is an upper bound on the probability that the sample error will change the sign of a single-ton bin (i.e., a bin with a single nonzero Pauli error rate).
\end{definition}

It will be convenient in what follows to define a $2n \times 2n$ matrix $J_n$ given by
\begin{align}
    J_n = X\otimes I_n = \begin{pmatrix}
    0_n & I_n \\
    I_n & 0_n
    \end{pmatrix}\,,
\end{align}
where $0_n$ is the $n \times n$ zero matrix and $I_n$ is the $n \times n$ identity matrix.
This is the symplectic form that controls the commutation relations in the Pauli group. 
That is, if $p,q\in\mathbb{F}_2^{2n}$, then
\begin{equation}
    \langle p,q\rangle = p^T J_n q\,,
\end{equation}
where the arithmetic is implicitly modulo 2.

\begin{table}[t!hb]
 \begin{algorithm}[H] 
  \caption{Subsampling and WHT}\label{alg:subsampling}
  \begin{algorithmic}[1]
    \State \texttt{Input:} Offsets $\mathbf{d}_{c;t}$ for observation index $t \in [P]$ and subsampling index $c \in [C]$; 
    \State \texttt{Input:} Subsampling matrices $\mathbf{M}_c\in\mathbb{F}_2^{2n\times b}$ for some $b>0$ and $c \in [C]$. 
    \State \texttt{Modify:} $\mathbf{M}_c'\leftarrow J_n \mathbf{M}_c J_b$ $\forall\,c\in[C]$.
    \ForAll{ $c\in[C]$, $t\in[P]$, and $\ell\in\mathbb{F}_2^{b}$}
        \State $k \leftarrow \mathbf{M}_c' \ell+\mathbf{d}_{c;t}$
        \State ${\tt Estimate}:$ $\widehat{\lambda}_k$ 
    \EndFor
    \State $B \leftarrow 2^{b}$
    \ForAll{ $c\in[C]$ and $t\in[P]$}
    	\State \label{algline:W} 
	$U_{c;t}[j] \leftarrow\frac{1}{B} \sum_{\ell\in\mathbb{F}_2^{b}} (-1)^{\langle j,\ell\rangle}\widehat{\lambda}_{\mathbf{M}_c' \ell+\mathbf{d}_{c;t}}$
	\State Return $U_{c;t}[j]$
	\EndFor	
  \end{algorithmic}
\end{algorithm}
\end{table}

We now introduce \Cref{alg:subsampling} to use for data preprocessing and bin construction.
The indices on each array are considered to be modulo their respective dimension, and each element of the summation $\mathbf{M}_c' \ell+\mathbf{d}_{c;t}$ is calculated in the field $\mathbb{F}_2$. 
The algorithm calculates bin coefficients using the corresponding binary matrices and by taking sums over the whole space $\mathbb{F}_2^{b}$. 
After this subsampling process, each subsampling group contains $P$ sets of $B=2^{b}$ bins, where $b$ is a free parameter. 
The result of applying \Cref{alg:subsampling} is summarized in the following lemma. 

\begin{lemma}[\bf Basic Observation Model]\label{lm:prop_hashing_obs}
The $B$-point WHT subsampled bin coefficients with index $j\in\mathbb{F}_2^{b}$ can be written as:
\begin{align}\label{eq:U_cp}
	{U}_{c;t}[j] &= \sum_{m:\,\mathbf{M}_c^Tm = j} p_m(-1)^{\langle\mathbf{d}_{c;t},m\rangle}+{W}_{c;t}[j], \,\forall t\in[P].
\end{align}
 Moreover the sample error is as follows
\begin{align*}
    {W}_{c;t}[j] &= \sum_{m:\,\mathbf{M}_c^Tm = j} {W}_m(-1)^{\langle\mathbf{d}_{c;t},m\rangle},
\end{align*}
where ${W}_m$ is the noise of the Pauli error rate $p_m$. 
\end{lemma}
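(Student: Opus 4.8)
The plan is to trace the effect of \Cref{alg:subsampling} line by line, starting from the noisy WHT relation \eqref{eq:noisyWHT} and substituting the sampling pattern $k = \mathbf{M}_c'\ell + \mathbf{d}_{c;t}$ into the definition of $U_{c;t}[j]$ on line~\ref{algline:W}. First I would write out
\begin{align*}
    U_{c;t}[j] = \frac{1}{B}\sum_{\ell\in\mathbb{F}_2^b}(-1)^{\langle j,\ell\rangle}\Bigl(\sum_{m\in\mathbb{F}_2^{2n}}(-1)^{\langle \mathbf{M}_c'\ell+\mathbf{d}_{c;t},\,m\rangle}p_m + w_{\mathbf{M}_c'\ell+\mathbf{d}_{c;t}}\Bigr),
\end{align*}
split it into a ``signal'' term and a ``noise'' term, and handle each separately. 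For the signal term, swap the order of summation so the inner sum is over $\ell$, pull the $m$-dependent phase $(-1)^{\langle \mathbf{d}_{c;t},m\rangle}$ outside, and use bilinearity of the symplectic form to rewrite $\langle \mathbf{M}_c'\ell,m\rangle = \ell^T (J_b \mathbf{M}_c'^T J_n) m$ or, more directly, $\langle j,\ell\rangle + \langle \mathbf{M}_c'\ell, m\rangle = \langle \ell,\, j' + (\text{something linear in }m)\rangle$ for an appropriate identification. The key computation is to show that the modification $\mathbf{M}_c' \leftarrow J_n \mathbf{M}_c J_b$ in line~3 is precisely what makes $\langle \mathbf{M}_c'\ell, m\rangle = \ell^T J_b (\mathbf{M}_c^T m)$ after using $J_n^T = J_n$, $J_n^2 = I$, so that the $\ell$-sum becomes $\sum_{\ell}(-1)^{\langle \ell,\, j + \mathbf{M}_c^T m\rangle}$, which by the standard WHT orthogonality relation equals $B$ when $\mathbf{M}_c^T m = j$ and $0$ otherwise. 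This collapses the double sum to $\sum_{m:\,\mathbf{M}_c^T m = j} p_m (-1)^{\langle \mathbf{d}_{c;t}, m\rangle}$, which is the first term of \eqref{eq:U_cp}.

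For the noise term, I would \emph{define} $W_m$ to be the WHT image of the eigenvalue noise, i.e. the quantity such that applying the same subsampling-and-WHT operation to the vector $\boldsymbol{w}$ produces $W_{c;t}[j] = \frac{1}{B}\sum_\ell (-1)^{\langle j,\ell\rangle} w_{\mathbf{M}_c'\ell + \mathbf{d}_{c;t}}$; concretely $W_m = \frac{1}{N}\sum_{k}(-1)^{\langle m,k\rangle} w_k$ is the inverse WHT of the noise, consistent with the glossary entry for $W_m$. Then running the identical algebraic manipulation as above, but with $w_k$ in place of the signal, gives $W_{c;t}[j] = \sum_{m:\,\mathbf{M}_c^T m = j} W_m (-1)^{\langle \mathbf{d}_{c;t}, m\rangle}$, which is the stated formula for the sample error. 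The point is purely that the map from $\boldsymbol{\lambda}$ (or $\boldsymbol{w}$) to $\{U_{c;t}[j]\}$ is linear, so it distributes over the decomposition $\widehat{\boldsymbol{\lambda}} = \boldsymbol{\lambda} + \boldsymbol{w}$, and the signal part is exactly the noiseless subsampling identity one would get from \eqref{eq:WHT}.

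I expect the only genuine obstacle to be bookkeeping with the symplectic form and the two different ambient dimensions: the subsampling matrices live in $\mathbb{F}_2^{2n\times b}$, so $\ell$ and $j$ are $b$-bit while $m$ and $k$ are $2n$-bit, and one must be careful that $\langle\cdot,\cdot\rangle$ on the $\ell$-side is the $b$-dimensional symplectic product (with form $J_b$) while on the $m$-side it is the $2n$-dimensional one (with form $J_n$). Verifying that $\mathbf{M}_c' = J_n \mathbf{M}_c J_b$ is exactly the transpose-conjugation needed so that $\langle \mathbf{M}_c'\ell,\, m\rangle_{2n} = \langle \ell,\, \mathbf{M}_c^T m\rangle_{b}$ — i.e. that $J_b \mathbf{M}_c'^T J_n = \mathbf{M}_c^T$ — is the one place where a sign or transpose error would break the claim, and it is the reason the \texttt{Modify} step exists at all. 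Everything else is the standard orthogonality identity $\sum_{\ell\in\mathbb{F}_2^b}(-1)^{\langle\ell, x\rangle} = B\,\mathbbm{1}[x = 0]$ together with linearity, so once the conjugation identity is nailed down the lemma follows immediately for both the signal and noise terms.
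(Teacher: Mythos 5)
Your proposal is correct and follows essentially the same route as the paper: the paper's proof hinges on exactly the conjugation identity $\langle m,\mathbf{M}_c'\ell\rangle = \langle \mathbf{M}_c^T m,\ell\rangle$ derived from $\mathbf{M}_c' = J_n\mathbf{M}_c J_b$ and $J_n^2 = I_{2n}$, followed by the WHT orthogonality collapse of the $\ell$-sum. The only cosmetic difference is that the paper absorbs signal and noise into $\tilde{p}_m = p_m + W_m$ up front and does the computation once, whereas you split by linearity and run the identical computation twice; these are interchangeable.
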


We remark that the noise $W_m$ on the error rate $p_m$ is induced by the Gaussian noise $\{w_k\}$ on the noisy eigenvalues $\{\lambda_k\}$ by the WHT. 
It is important to keep these two noise sources separate, although we will not make much direct use of $w_k$ in the remainder of the paper. 

\begin{proof}
Denote $p_m+W_m$ by $\tilde{p}_m$, so the noisy Pauli eigenvalues can be transformed to 
\begin{gather*}
    \widehat{\lambda}_k=\sum_{m\in\mathbb{F}_2^{2n}}(-1)^{\langle k,m\rangle}\tilde{p}_m,\ \forall\,k\in\mathbb{F}_2^{2n}.
\end{gather*}
From \Cref{alg:subsampling}, a specific bin $U_{c;t}[j]$ for some $c\in[C]$, $t\in[P]$, and $j\in\mathbb{F}_2^{2n}$ is constructed as follows, 
\begin{align*}
    U_{c;t}[j]=\frac{1}{B}\sum_{\ell\in\mathbb{F}_2^{b}}\sum_{m\in\mathbb{F}_2^{2n}}(-1)^{\langle m,\mathbf{M}'_c\ell\rangle}(-1)^{\langle j,\ell\rangle}(-1)^{\langle m,\mathbf{d}_{c;t}\rangle}\tilde{p}_m.
\end{align*}
A key observation is 
\begin{align*}
    \langle m,\mathbf{M}'\ell\rangle
    =&m^TJ_n\mathbf{M}'\ell\\
    =&m^TJ_n\cdot J_n\mathbf{M}J_b\ell\\
    =&(\mathbf{M}^Tm)^TJ_b\ell\\
    =&\langle \mathbf{M}^Tm,\ell\rangle,
\end{align*}
where the first equation is from the definition of the symplectic inner product in~\cref{sec:prelim}, and the second equation comes from {\tt Modify} part in \Cref{alg:subsampling}, and the third is due to the following property of $J$
\begin{gather}
    J_n\cdot J_n = I_{2n}\ \forall\,n\in\mathbb{N}.
\end{gather}
Thus the bin can be simplified as follows,
\begin{align*}
U_{c;t}[j]&=\frac{1}{B}\sum_{m\in\mathbb{F}_2^{2n}}\sum_{\ell\in\mathbb{F}_2^{b}}(-1)^{\langle\mathbf{M}_c^Tm+j,\ell\rangle}(-1)^{\langle m,\mathbf{d}_{c;t}\rangle}\tilde{p}_m \\
&=\sum_{m:\,\mathbf{M}_c^Tm = j}\tilde{p}_m(-1)^{\langle m,\mathbf{d}_{c;t}\rangle}\\
&= \sum_{m:\,\mathbf{M}_c^Tm = j} p_m(-1)^{\langle\mathbf{d}_{c;t},m\rangle}+{W}_{c;t}[j],
\end{align*}
where ${W}_{c,t}[j]$ is defined in the lemma.
\end{proof}
We note that from Line~\ref{algline:W} in \Cref{alg:subsampling}, the fact that the original noise $\boldsymbol{w}$ is isotropic, and the fact that the $\ell$-bit WHT is proportional to an orthogonal transformation, it follows that the noise in each bin $\boldsymbol{W}_{c}[j]$ remains Gaussian distributed, but according to the distribution $\mathcal{N}(0,\nu^2\mathbbm{1})$ where $\nu^2=\frac{\xi^2}{B}$. 
Moreover, we can combine each $U_{c;t}[j]$ for different $t\in[P]$, and get a vector
\begin{gather*}
    \mathbf{U}_c[j]:=[U_{c;0}[j],\cdots,U_{c;P-1}[j]]^T,
\end{gather*}
and an analogous vectorization can be implemented on the offsets
\begin{gather*}
    \mathbf{D}_c:=[\mathbf{d}_{c;0}\cdots\mathbf{d}_{c;P-1}].
\end{gather*}
Therefore, \Cref{lm:prop_hashing_obs} can be rewritten as follows.

\begin{lemma}[\bf Bin Observation Model]\label{lm:rewritten_prop_hashing_obs}
The $B$-point WHT subsampled bin with index $j\in\mathbb{F}^{b}$ in the $c$-th subsampling group is
\begin{gather}\label{eq:bin_description}
    \mathbf{U}_{c}[j] = \sum_{m:\,\mathbf{M}_c^Tm = j} p_m(-1)^{\langle\mathbf{D}_c,m\rangle}+\mathbf{W}_{c}[j],
\end{gather}
where the noise $\mathbf{W}_c[j]=\sum_{m:\,\mathbf{M}_c^Tm=j}W_m(-1)^{\langle\mathbf{D}_c,m\rangle}$ is distributed as $\boldsymbol{W}_{c}[j]\sim\mathcal{N}(0,\nu^2\mathbbm{1})$ with $\nu^2=\frac{\xi^2}{B}$, and ${W}_m$ is the WHT noise of Pauli error rate $p_m$.
\end{lemma}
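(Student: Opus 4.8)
The plan is to read off the first display \eqref{eq:bin_description} directly from the vectorization of \Cref{lm:prop_hashing_obs}, and then to establish the Gaussian law of $\mathbf{W}_c[j]$ by exhibiting it as a linear image of the isotropic eigenvalue noise and computing the resulting covariance.

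First I would stack the $P$ scalar identities \eqref{eq:U_cp}, one for each $t\in[P]$, into a single $P$-vector equation: collecting the signs $(-1)^{\langle\mathbf{d}_{c;t},m\rangle}$ into $(-1)^{\langle\mathbf{D}_c,m\rangle}$ and the noise terms $W_{c;t}[j]$ into $\mathbf{W}_c[j]$ gives \eqref{eq:bin_description} together with the claimed expression $\mathbf{W}_c[j]=\sum_{m:\,\mathbf{M}_c^Tm=j}W_m(-1)^{\langle\mathbf{D}_c,m\rangle}$, with no computation beyond what \Cref{lm:prop_hashing_obs} already provides.

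The real content is the distribution of $\mathbf{W}_c[j]$. Using Line~\ref{algline:W} of \Cref{alg:subsampling} and keeping only the noise part of $\widehat{\lambda}_k = \sum_m (-1)^{\langle k,m\rangle}p_m + w_k$, each component can be written directly in the eigenvalue noise as $W_{c;t}[j] = \tfrac1B\sum_{\ell\in\mathbb{F}_2^b}(-1)^{\langle j,\ell\rangle}\,w_{\mathbf{M}_c'\ell+\mathbf{d}_{c;t}}$; hence $\mathbf{W}_c[j] = A\,\boldsymbol{w}$ where $A\in\mathbb{R}^{P\times N}$ is deterministic, its $t$-th row being supported on the affine coset $\mathcal{C}_t := \{\mathbf{M}_c'\ell+\mathbf{d}_{c;t} : \ell\in\mathbb{F}_2^b\}$ with entries equal to $\pm\tfrac1B$ there. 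Since $\boldsymbol{w}\sim\mathcal{N}(0,\xi^2\mathbbm{1}_N)$ is isotropic, $\mathbf{W}_c[j]$ is zero-mean Gaussian with covariance $\xi^2 A A^T$, so it remains to show $A A^T = \tfrac1B\mathbbm{1}_P$. For the diagonal: $\mathbf{M}_c'$ is obtained from the hash matrix $\mathbf{M}_c$ by the invertible transformations of the {\tt Modify} step and so has the same (full) column rank, making $\ell\mapsto\mathbf{M}_c'\ell+\mathbf{d}_{c;t}$ injective and $|\mathcal{C}_t| = B$, whence $(AA^T)_{tt} = B\cdot B^{-2} = B^{-1}$. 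For the off-diagonal: for $t\ne t'$ the cosets $\mathcal{C}_t$ and $\mathcal{C}_{t'}$ are translates of the common subspace $\mathrm{col}(\mathbf{M}_c')$ by distinct offsets, and by the offset construction of \Cref{SO} these offsets lie in distinct cosets of $\mathrm{col}(\mathbf{M}_c')$, so $\mathcal{C}_t\cap\mathcal{C}_{t'}=\emptyset$ and $(AA^T)_{tt'}=0$. This yields $\mathbf{W}_c[j]\sim\mathcal{N}(0,\nu^2\mathbbm{1}_P)$ with $\nu^2 = \xi^2/B$, which is the assertion. (Equivalently, as remarked before the lemma, for each fixed $t$ the extraction of $U_{c;t}[j]$ is $B^{-1/2}$ times an orthogonal WHT acting on the restriction of $\widehat{\boldsymbol{\lambda}}$ to $\mathcal{C}_t$, and disjoint cosets carry independent pieces of the i.i.d.\ noise $\boldsymbol{w}$.)

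The only step requiring genuine care — hence the main obstacle — is the vanishing of the off-diagonal covariances, which rests entirely on the hash matrices having full column rank and on the $P$ offsets of a given subsampling group landing in pairwise-distinct cosets of $\mathrm{col}(\mathbf{M}_c')$. If two offsets could coincide modulo $\mathrm{col}(\mathbf{M}_c')$ one would obtain only approximate independence across observations rather than the exact $\mathcal{N}(0,\nu^2\mathbbm{1})$ stated; I would therefore make the full-rank/distinct-coset property of the constructions in \Cref{SO} and \Cref{alg:subsampling} explicit and invoke it precisely at this point.
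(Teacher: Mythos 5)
Your proposal is correct and follows essentially the same route as the paper: the paper's proof of \Cref{lm:rewritten_prop_hashing_obs} is literally the one-line remark that it is a vectorization of \Cref{lm:prop_hashing_obs}, with the distributional claim justified in the preceding paragraph by noting that the $B$-point WHT is proportional to an orthogonal transformation acting on isotropic Gaussian noise. What you add beyond the paper is the explicit computation of $AA^T$ and, in particular, the observation that the vanishing of the off-diagonal covariances is not automatic: it requires the offsets $\mathbf{d}_{c;t}$ to lie in pairwise-distinct cosets of $\mathrm{col}(\mathbf{M}_c')$, since otherwise two rows of $A$ would be supported on the same coset and the corresponding components of $\mathbf{W}_c[j]$ would be perfectly correlated rather than independent. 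The paper only addresses cross-offset independence later (in \Cref{lm:independence}) and does so without this caveat; for the uniformly random offsets of \Cref{SO} the distinct-coset condition fails for a given pair only with probability $B/N$, so the stated isotropic law holds with overwhelming probability but not unconditionally. Your suggestion to make the full-rank and distinct-coset properties explicit is therefore a genuine (if minor) tightening of the paper's argument rather than a redundancy.
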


\begin{proof}
This is a variation of \Cref{lm:prop_hashing_obs}.
\end{proof}

After subsampling and calculating bins, it's straightforward to design a protocol to extract information from these bins. 
The idea is to construct a bipartite graph $G$, as in \Cref{fig:bipartite}, with $s$ left nodes representing nonzero Pauli error rates and $BC$ right nodes representing bin vectors $\mathbf{U}_c[j]$.

We draw an edge from each left node (a nonzero Pauli error rate) to every right node that contains that Pauli. 
Each Pauli error rate will occur exactly once in each subsampling group, the degree of the left nodes is therefore $C$. 
We can use the resulting degrees of the right hand nodes to partition them into three types.
We call a bin with exactly one nonzero Pauli error rate a \textit{single-ton}, and similarly there are \textit{zero-ton} and \textit{multi-ton} bins that contain exactly zero or contain more than one Pauli error rate respectively. 
(Recall that this graph is depicted in \Cref{fig:bipartite}.)
Shortly we will describe in detail a method to detect which type of bin a particular node has been partitioned into.
After invoking such a bin detector, the peeling decoder can be designed to peel out the detected single-ton Pauli error rates by subtracting them from every multi-ton bin in which they appear, removing the associated edge from the graph.
This will reduce the degree of that right-hand node, potentially turning it from a multi-ton bin into a single-ton bin.
For the range of parameters that we have chosen and the assumptions outlined above, iterating this decoder to discover new single-tons and reduce multi-tons will converge to reduce the graph to only zero-ton and single-ton bins with high probability. 

In the \Cref{alg:peeling}, we apply an array $\mathbf{T}$ that indicates the variance of the propagated noise part, $W_{c;t}[j]$, in each bin. 
These numbers help track the propagation of error in the bin detector from the calculation in Line~\ref{algline:U} of \Cref{alg:peeling}. 
The equation in Line~\ref{algline:T} of that algorithm describes how to update $\mathbf{T}$. 
\Cref{lm:variance} below shows the need and utility of this parameter.

One subtlety to applying the peeling decoder to this graph is that the graph might have cycles. 
Peeling on a graph with cycles will in general lead to dependencies in the random variables, which complicates the analysis. 
However, as we show below in \Cref{lm:tree-like}, large local neighborhoods of the peeling graph look locally tree-like with high probability, therefore we can peel for a large number of steps before encountering a cycle. 
With the correct choice of parameters, the tree-like neighborhood can be made large enough throughout the graph to ensure convergence of the peeling decoder. 

As we previously mentioned, the peeling decoder algorithm is based on a subroutine that we call \textit{Bin Detector} (it is set out in \Cref{alg:bin_detect}). 
We will denote it by $\mathrm{BD}(\mathbf{U}_c,\mathbf{D}_c,T)$. 
The subroutine, $\mathrm{BD}$, will take a bin, the offsets chosen, and a noise parameter $T$ as inputs, and it will output an estimate for the type (zero-ton, single-ton, or multi-ton) of the bin $\widehat{\mathfrak{B}}$, and if the bin is a single-ton it also returns the estimated index $\widehat{m}$ and Pauli error rate $\widehat{p}_{\widehat{m}}$.
The subroutine $\mathrm{BD}$ also depends on two parameters $\gamma_1$ and $\gamma_2$, but these can be chosen as arbitrary constants in the interval $(0,1)$. 
Their only purpose is to ensure exponential decay of the failure probability of bin detection, as we discuss in \Cref{lm:failure_bound}. 

\begin{table}[thb]
\begin{algorithm}[H] 
  \caption{Peeling Decoder\label{alg:peeling}}
  \begin{algorithmic}[1]
    \State ${\tt Input}:$ observation vectors
    $\mathbf{U}_c[\emph{j}]$, offsets $\mathbf{D}_c$ and array $\mathbf{T}_{c}[\emph{j}]$ initialized by 1 for $\emph{j} \in \mathbb{F}_2^{b}$, $c\in[C]$; 
    \State ${\tt Input}:$ the number of peeling iterations $I$;  
    \State $\mathcal{P}\gets$ initialize empty list of Paulis ($\widehat{m},\widehat{{p}}_{\widehat{m}}$)
    \For{$i\in[I]$}
    	\ForAll{$c\in[C]$ and $\emph{j}\in\mathbb{F}_2^{b}$}
			\State $(\widehat{\mathfrak{B}},\widehat{m},\widehat{{p}}_{\widehat{m}}) \leftarrow \mathrm{BD}(\mathbf{U}_c[\emph{j}],\mathbf{D}_c,\mathbf{T}_{c}[\emph{j}])$
			\If{$\widehat{\mathfrak{B}}= \textrm{single-ton}$}
			\State $\mathcal{P}\gets$ ($\widehat{m},\widehat{{p}}_{\widehat{m}}$)
			\ForAll{$c'\in[C]$ and $c'\neq c$}
			    \State ${\tt Locate~bin~index}$ $\emph{j}_{c'} \leftarrow \mathbf{M}_{c'}^T \widehat{m}$
			    \State \label{algline:T} $\mathbf{T}_{c'}[j_{c'}]\leftarrow\mathbf{T}_{c'}[j_{c'}]+\frac{\mathbf{T}_c[j]}{P_1}+\frac{(P_1-1)B}{P_1N}$
			        \State \label{algline:U} $\mathbf{U}_{c'}[\emph{j}_{c'}] \leftarrow \mathbf{U}_{c'}[\emph{j}_{c'}] - \widehat{p}_{\widehat{m}}(-1)^{\langle\mathbf{D}_{c'},\widehat{m}\rangle}$
			 \EndFor
			\ElsIf {${\widehat{\mathfrak{B}}}\neq \textrm{single-ton}$}
				 \State continue to next $\emph{j}$.
			\EndIf			
		\EndFor
    \EndFor
    \State Return: $\mathcal{P}$
  \end{algorithmic}
\end{algorithm}
\end{table}

By using the sparsity assumption and our choice of subsampling matrices, this peeling process will succeed with high probability. 
Intuitively we can see this from our ability to choose subsampling matrices $\mathbf{M}$ in such a way that we can find bins that typically contain only zero or one nonzero Pauli error rate. 
In \cite{Li2015} the authors provide a proof that if a bin detector algorithm always returns an exactly correct answer, then the oracle-based peeling decoder has a failure probability that vanishes with the signal size. 
So it suffices to propose a suitable design for the bin detector and a corresponding recovery guarantee.

In designing such a bin detector we will need to estimate the index of the relevant Pauli in the bin. 
For index estimation in the setting where there is noise we will need to make our estimation robust. 
One approach is to use some repetition of detection and a majority voting. 
A better approach is to use some form of error correcting code for the offsets, as discussed above in \Cref{SO}. 
In what follows, we will use the following definition of a sign function,
\begin{equation}
    \sgn{x} = \begin{cases}
    0 & \text{if } x \ge 0,\\
    1 & \text{if } x < 0.
    \end{cases}
\end{equation}
With this definition, we have the following lemma, which confirms that offsets chosen in accordance with \Cref{SO} can be used to estimate the indices.

\begin{lemma}\label{lm:MLE}
Given a single-ton bin $(m,p_m)$ observed with noise
\begin{align}\label{single-ton}
	U &= p_m(-1)^{\langle\mathbf{d},m\rangle} + W,
\end{align}
and supposing that the variance in each row (offset, $\mathbf{d}$)  of the bin is equal to $T\nu^2$, then the sign of each observation satisfies
\begin{align}\label{sign}
	\sgn{U}
	&= \langle\mathbf{d},m\rangle\oplus Z,
\end{align}
where $Z$ is a Bernoulli random variable with probability $\Pr(Z=1) \leq \mathbb{P}_m \colonequals \sqrt{\frac{T\nu^2}{2\pi p^2_m}}\mathrm{e}^{-\frac{p_m^2}{2T\nu^2}}$.
\end{lemma}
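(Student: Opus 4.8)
The plan is to reduce the statement to a one-dimensional Gaussian tail estimate. First I would note that $(-1)^{\langle\mathbf{d},m\rangle}$ is $\pm1$, so writing $\tau\colonequals(-1)^{\langle\mathbf{d},m\rangle}\in\{+1,-1\}$ the single-ton model \eqref{single-ton} reads $U=\tau\,p_m+W$, where $W\sim\mathcal{N}(0,T\nu^2)$ by the hypothesis on the row variance and $p_m\ge\epsilon_0>0$ by Assumption A3. Define the bit $Z\colonequals\sgn{U}\oplus\langle\mathbf{d},m\rangle$; by construction \eqref{sign} holds, and $\{Z=1\}$ is exactly the event that the measured sign disagrees with the noiseless one.

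Next I would unpack $Z$ according to the two values of $\langle\mathbf{d},m\rangle$. If $\langle\mathbf{d},m\rangle=0$ then $U=p_m+W$ and $Z=\sgn{U}$, which equals $1$ iff $U<0$, i.e.\ iff $W<-p_m$. If $\langle\mathbf{d},m\rangle=1$ then $U=-p_m+W$ and $Z=\sgn{U}\oplus1$, which equals $1$ iff $U\ge0$, i.e.\ iff $W\ge p_m$. In both cases, using the symmetry of the centered Gaussian, $\Pr(Z=1)=\Pr(W>p_m)=Q\bigl(p_m/\sqrt{T\nu^2}\bigr)$, where $Q(x)=\Pr\bigl(\mathcal{N}(0,1)>x\bigr)$ is the standard normal tail function and the boundary event $U=0$ has probability zero. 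In particular this bound is uniform over the choice of offset $\mathbf{d}$, which is what lets a single threshold be used across all rows of a bin.

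Finally I would apply the elementary Mills-ratio (Chernoff-type) bound $Q(x)\le\frac{1}{x\sqrt{2\pi}}\mathrm{e}^{-x^2/2}$, valid for all $x>0$, at $x=p_m/\sqrt{T\nu^2}>0$. This gives $\Pr(Z=1)\le\frac{\sqrt{T\nu^2}}{p_m\sqrt{2\pi}}\mathrm{e}^{-p_m^2/(2T\nu^2)}=\sqrt{\frac{T\nu^2}{2\pi p_m^2}}\,\mathrm{e}^{-p_m^2/(2T\nu^2)}=\mathbb{P}_m$, which completes the proof. There is no genuine obstacle here: the only points needing care are verifying that the sign-flip event is the same $W$-tail in both parity cases (so the estimate is independent of $\mathbf{d}$), using $p_m>0$ so that $x>0$ and the tail bound applies, and the Gaussian tail inequality itself. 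Morally, the lemma simply records that the offset sign pattern of a single-ton is read out through an effective binary symmetric channel with crossover probability at most $\mathbb{P}_m$, which is the fact that the error-correcting-code offsets of \Cref{SO} are designed to exploit.
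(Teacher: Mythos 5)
Your proposal is correct and follows essentially the same route as the paper's proof: identify the sign-flip event with the one-sided Gaussian tail $\Pr(W>p_m)$ and then apply the standard Mills-ratio bound $Q(x)\le\frac{1}{x\sqrt{2\pi}}\mathrm{e}^{-x^2/2}$ at $x=p_m/\sqrt{T\nu^2}$. Your explicit case analysis over the two values of $\langle\mathbf{d},m\rangle$ is in fact a slightly cleaner justification of the step the paper states more loosely (the paper writes $\Pr(Z=1)=\tfrac12\Pr(|W|>p_m)$, which your two cases make precise).
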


\begin{proof}
The first term in \eqref{sign} follows trivially from the sign of the power of minus one in \eqref{single-ton} and the fact that $p_m$, being a probability, is always positive. 
The second term, $Z$, will be 1 if and only if $|W|$ is larger than $p_m$ so that it can change the sign generated by the first term of \eqref{single-ton}. 
Therefore, $Z$ is Bernoulli distributed with a probability that we can bound as follows. 
Recalling that $W$ is a Gaussian random variable, we can use the relevant tail bounds for our assumption on the variance (for details see~\cite{feller2008introduction}) to obtain
\begin{align}
    \Pr(Z = 1) &= \frac{1}{2}\Pr(|W|>p_m)=\Pr(W>p_m)\notag\\
    &\leq \sqrt{\frac{T\nu^2}{2\pi p^2_m}}\mathrm{e}^{-\frac{p_m^2}{2T\nu^2}}=\mathbb{P}_m,
\end{align}
where $T$ is the number extracted from the array $\mathbf{T}$ in \Cref{alg:peeling}. 
\end{proof}
\begin{remark}\label{strictly_small}
\rm If we assume that the maximum degree of right nodes in the bipartite graph $G$ is not larger than $\tfrac{1}{2}P_1$, $\mathbb{P}_m<\frac{1}{2}$ is satisfied for all $m\in\mathbb{F}_2^{2n}$ (using \textbf{A1} in \Cref{random_support_assumption} and \Cref{maximum_T}). 
We will bound the probability that this right-hand degree assumption fails in \Cref{lem:Bounded_degree}.
\end{remark}

In what follows, we will ignore the subscripts $c$ and indices $j$ of the bins when it will not lead to any misunderstanding. 

Now \Cref{lm:MLE} can be used to identify $\widehat{m}$, the index of the Pauli error rate in a single-ton bin. 
Given the offsets chosen in \Cref{SO} and recalling \Cref{lm:rewritten_prop_hashing_obs}, we have the following equation from the code generator $\mathbf{G}$ for the signs of every element in a bin,

\begin{align}
    \begin{bmatrix}
		\sgn{U_{P_1}}\\
		\vdots\\
		\sgn{U_{P-1}}
	\end{bmatrix}
	&=
	\langle\mathbf{G},m\rangle
	\oplus
	\begin{bmatrix}
		Z_{P_1}\\
		\vdots\\
		Z_{P-1}
	\end{bmatrix}.\label{eq:sign_code}
\end{align}
Since the bit length of the index $m$ is $2n$, we can choose the number $P_2$ as follows. 
We choose any linear code with rate $R$ and distance $d$, and a decoder that can decode up to at least a minimum distance $\beta 2n / R$ for parameters $\beta, R = \Theta(1)$. 
Obviously this requires $d \ge \beta 2n /R$. 
The additional constraint on $\beta$ is that $\beta$ is larger than the probability $\mathbb{P}$ of any of the Bernoulli random variables $Z_i$ to be 1. 
Then we can choose $P_2=2n/R$.
That is, we are looking for a classical linear code with parameters $[2n/R, 2n, d \ge \beta 2n / R]$.
There are a number of pre-existing candidate codes that can be decoded up to a constant fraction of the minimal distance in linear time in the length of the code exist that satisfy these stringent conditions. 
For example, expander codes~\cite{sipser1996expander} can be implemented to construct the code generator $\mathbf{G}$ and the parity check matrix $\mathbf{H}$, and the greedy linear-time decoder~\cite{sipser1996expander} can correct errors with weight up to $d/4$.
The decoder of the corresponding code is required to retrieve the estimate $\hat{m}$.
Since the manner of coding and decoding is flexible, here we only use $\texttt{Decode}$ to indicate the decoder.
\begin{gather}
    \widehat{m}=\texttt{Decode}\left( \begin{bmatrix}
		\sgn{U_{P_1}}\\
		\vdots\\
		\sgn{U_{P-1}}
	\end{bmatrix}\right).
\end{gather}

With this we can specify the modified algorithm to detect the bin $U$ with the offsets as in \Cref{SO} along with the corresponding number $T$.

We are now ready to give a precise specification of the bin detector algorithm. 
\begin{table}[thb]
  \begin{algorithm}[H]
  \caption{Bin Detector: $\mathrm{BD}(\mathbf{U}_c,\mathbf{D}_c,T)$}\label{alg:bin_detect} 
  \begin{algorithmic}[1]
  \State {\tt Input}: bin $\mathbf{U}_c$, offsets $\mathbf{D}_c$ and the number $T$ to indicate error size;
  \State\label{algline:zero-ton} {\tt Parameter}: real numbers\label{algline:parameter} $\gamma_1,\gamma_2\in(0,1)$;
  \If{$\frac{1}{P_1}\sum_{t=0}^{P_1-1}U_{c;t}^2\leq T(1+\gamma_1)\nu^2$}
    \State $\widehat{\mathfrak{B}} \leftarrow $\texttt{zero-ton}
    \State Return ($\widehat{\mathfrak{B}}$, \textit{nil, nil})    \Comment \textit{zero-ton verification}
  \EndIf
  \State $\widehat{m}\leftarrow \texttt{Decode}([\sgn {U_{P_1}},\cdots,\sgn{U_{P-1}}]^T)$
  \State $\widehat{p}_{\widehat{m}}\leftarrow\frac{1}{P_1}\sum_{t=0}^{P_1-1}(-1)^{\langle\mathbf{d}_{c;t},\widehat{m}\rangle}U_{c;t}$ \Comment \textit{single-ton search}
  \If{$\frac{1}{P_1}\sum_{t=0}^{P_1-1}(U_{c;t}-(-1)^{\langle\mathbf{d}_{c;t},\widehat{m}\rangle}\widehat{p}_{\widehat{m}})^2\leq T(1+\gamma_2)\nu^2$}
    \State $\widehat{\mathfrak{B}} \leftarrow$ \texttt{single-ton}
    \State\label{algline:single-ton_verification} Return ($\widehat{\mathfrak{B}}$, $\widehat{m}$, $\widehat{p}_{\widehat{m}}$) \Comment \textit{single-ton verification} 
  \Else 
  \State $\widehat{\mathfrak{B}} \leftarrow$ \texttt{multi-ton}
  \State Return ($\widehat{\mathfrak{B}}$,\textit{nil, nil}) 
  \EndIf
  \end{algorithmic}
\end{algorithm}
\end{table}

\section{Proof of Main Theorem}\label{sec:proof}

We now repeat the statement of the main theorem. 
\begin{thmbis}{Thm:main}
Suppose the Assumptions \ref{random_support_assumption} hold for an unknown Pauli channel with eigenvalues $\boldsymbol{\lambda}$ and error rates $\boldsymbol{p}$. 
Then with failure probability $\mathbb{P}_F\leq \mathrm{e}^{-O(n)}$, Algorithms~\ref{alg:subsampling}, \ref{alg:peeling}, and \ref{alg:bin_detect} estimate the $s$-sparse Pauli error rates $\widehat{\boldsymbol{p}}$ such that $\|\widehat{\boldsymbol{p}}-\boldsymbol{p}\|_{\infty}\leq 2\xi/\sqrt{B}$ using $O\bigl(s n\bigr)$ eigenvalue queries and $O(sn^2)$-time classical computation.
\end{thmbis}

Recall that using the protocol in \cite{Flammia2019}, we can estimate $O(s n)$ eigenvalues to within variance $\xi^2$ by doing $O(\frac{n^2}{\xi^2})$ measurements. 
Therefore, heuristically, the entire algorithm needs $O(\frac{n^2}{\xi^2})$ measurements to achieve this recovery guarantee. 

\begin{proof}
Firstly we consider the stated query and computational complexities. 
From \cite[Theorem 2]{Li2015}, it's shown that the oracle-based peeling decoder succeeds with probability $1-O(1/s)$ for a random sparse set (obeying assumption \textbf{A1}) as long as $C=O(1)$ and $B=O(s)$. 
Therefore, to prepare these bins, the number of queried eigenvalues is $BPC=O(Ps)$. 

To construct the bins and the corresponding graph, the computational complexity can be calculated by the complexity from the construction algorithm. 
Note there are $P$ offset coefficients $\mathbf{d}$ and each $\mathbf{U}_{c,t}[j]$ comes from the sum of $B$ samples in \Cref{alg:subsampling}. 
To construct the total set $\{\mathbf{U}_{c,t}[j]\}_{C,P,B}$, the we can use a fast WHT (which has complexity $O(B\log B)$ to calculate a $B$-point WHT) for each offset. Therefore, the computational complexity for this part is
\begin{gather*}
    Z_1=O(PB\log B)=O(Psn)\,. 
\end{gather*}
The second part of computational complexity comes from the computation of \Cref{alg:peeling}. 
Each step in the bin detector checks the type of the bin with $O(P)$ calculations, and there are  $O(B)$ iterations. 
Accordingly, the complexity is 
\begin{gather*}
    Z_2=O(PB)=O(Ps)\,.
\end{gather*}
Therefore, the total computational complexity is $Z=Z_1+Z_2=O(Psn)$.

Let us denote by $E_{\text{bin}}$ the event that any invocation of the bin detector (in the execution of \Cref{alg:peeling}) returns one or more of the following: (a) an incorrect identification of the type of bin; (b) wrong indices for a detected single-ton, or (c) a mis-estimate of the Pauli error rates of a detected single-ton by more than  $2\nu=2\xi/\sqrt{B}$. 
Furthermore, let $D$ denote the event that the maximum degree of the right nodes in the graph $G$ is less than or equal to $P_1$. 
Let $H$ be the event that all the peeling routes in the procedure are cycle-free. 
Then utilizing the law of total probability we can bound the failure rate of the entire algorithm as
\begin{align}
    \mathbb{P}_F 
	\leq &\Prob{\text{Peeling decoder fails}\big|{E}_{\rm bin}^c}\notag\\
	&+\Prob{E_{\rm bin}|D,H}+\Prob{D^c}+\Prob{H^c} \label{error_rate}.
\end{align}
Here the subscript $c$ denotes the complement of the event, e.g.~${E}_{\rm bin}^c$, denotes that no bin detection error occurred in the entire execution of \Cref{alg:peeling}.

The first term in \eqref{error_rate} is the chance that the oracle-based peeling decoder fails, even though the bin decoder is always correct. This probability scales as  $O(1/s)$ (Proposition 4 in \cite{Li2015}). 

To bound the second term, it will be more convenient to consider the probability that every invocation of a bin detector works correctly given $D$ and $H$. 
Let $M$ denote the number of times the peeling decoder calls the  bin detector subroutine. This probability can be expressed as follows,
\begin{align*}
    &\Prob{\bigcap_{i=1}^M E_i^c\bigg|D,H}\\
    &=\Prob{E_M^c\bigg|\bigcap_{i=1}^{M-1}E_i^c,D,H}\Prob{\bigcap_{i=1}^{M-1}E_i^c\bigg|D,H}\\
    &=\Prob{E_M^c\bigg|\bigcap_{i=1}^{M-1}E_i^c,D,H}\cdots\Prob{E_1^c|D,H},
\end{align*}
where $E_i$ denotes the event that the $i$th call of bin detector returns a wrong answer.
According to \Cref{lm:variance}, the parameter $T$ will always correctly estimate the  variance if all the earlier bin detectors worked correctly. 
From \Cref{lm:failure_bound}, each term in the above equation will be lower bounded by
\begin{gather*}
    \Prob{E^c|D,V,H}\geq 1-\mathrm{e}^{-O(P_1)},
\end{gather*}
where $V$ here just indicates that all the previous bin detectors work correctly.
So we have 
\begin{align*}
    \Prob{\bigcap_{i=1}^M E_i^c\bigg|D,H}&\geq\left(1-\mathrm{e}^{-O(P_1)}\right)^M\,.
\end{align*}
Moreover, since $M$, the number of times the bin detector routine is called, is at most $O(BCs)$, the upper bound of the second term is
\begin{align*}
    \Prob{{E}_{\rm bin}|D,H} &\leq 1 - \left(1-\mathrm{e}^{-O(P_1)}\right)^M \\
    &\leq O(BCs)\mathrm{e}^{-O(n)} \leq \mathrm{e}^{-O(n)}\,.
\end{align*}

\Cref{lem:Bounded_degree} provides that the third term in \eqref{error_rate} is also exponentially decaying with $P_1$:
\begin{gather*}
    \Prob{D^c}\leq \mathrm{e}^{-O(P_1)}\leq\mathrm{e}^{-O(n)},
\end{gather*}
where the last inequality comes from the definition of $P_1$ from \Cref{SO}.
Similarly \Cref{lm:tree-like} and Remark~\ref{iteration number} provide the bound on the probability of $H^c$:
\begin{align*}
    \Prob{H^c}\leq O\left(\frac{\log^{\log\log (s)} s}{s}\right) \leq \mathrm{e}^{-O(n)}\,.
\end{align*}
Therefore, the total failure probability of our peeling decoder algorithm is vanishing exponentially with the number of qubits $n$. 
And from \Cref{SO}, the total number of offsets consists of $P_1=\Theta(n)$ random offsets and $P_2=\Theta(n)$ coding offsets, thus $P=\Theta(n)$ and stated complexities have been proven.
\end{proof}

\section{Tail bounds\label{sec:tailbounds}}

In this section, we prove several statements bounding the failure probabilities of various events that can cause the bin detector outlined as \Cref{alg:bin_detect} to fail. 
One of the main lemmas that we will need is the following tail bound on Gaussian random variables.

\begin{lemma}[{Tail bound~\cite[Lemma 11]{Li2015}}]\label{lem:tailbound}
Given $\textbf{g},\textbf{k} \in \mathbb{R}^N$ where $\textbf{k}$ is an isotropic Gaussian random variable $\textbf{k}\sim\mathcal{N}(0,\nu^2\mathbbm{1}_N)$, then the following tail bound holds:
\begin{align}
    \Pr\left(\frac{1}{N}\|\textbf{g}+\textbf{k}\|^2\geq\tau_1\right)\leq&\mathrm{e}^{-\frac{N}{4}\left(\sqrt{2\tau_1/\nu^2-1}-\sqrt{1+2\theta_0}\right)^2}\label{tailbound1}\\
    \Pr\left(\frac{1}{N}\|\textbf{g}+\textbf{k}\|^2\leq\tau_2\right)\leq&\mathrm{e}^{-\frac{N}{4}\frac{\left(1+\theta_0-\tau_2/\nu^2\right)^2}{1+2\theta_0}}\label{tailbound2},
\end{align}
for $\tau_1, \tau_2$, and $\theta_0$ satisfying
\begin{gather*}
    \tau_1\geq\nu^2(1+\theta_0),\ \tau_2\leq\nu^2(1+\theta_0),\ \theta_0 = \frac{\|\textbf{g}\|^2}{N\nu^2}.
\end{gather*}
\end{lemma}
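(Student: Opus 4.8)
The plan is to reduce the quantity $\tfrac{1}{\nu^2}\|\mathbf{g}+\mathbf{k}\|^2$ to a noncentral chi-squared random variable and then invert a standard Laurent--Massart-type concentration inequality. By rotational invariance of the isotropic Gaussian $\mathbf{k}\sim\mathcal{N}(0,\nu^2\mathbbm{1}_N)$ we may assume $\mathbf{g}=\|\mathbf{g}\|\,\mathbf{e}_1$; writing $\mathbf{k}=\nu\mathbf{z}$ with $\mathbf{z}\sim\mathcal{N}(0,\mathbbm{1}_N)$ gives $Y\colonequals\tfrac{1}{\nu^2}\|\mathbf{g}+\mathbf{k}\|^2=(\|\mathbf{g}\|/\nu+z_1)^2+\sum_{i=2}^N z_i^2$, so that $Y$ is a noncentral chi-squared variable with $N$ degrees of freedom and noncentrality parameter $\lambda=\|\mathbf{g}\|^2/\nu^2=N\theta_0$. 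In particular $\mathbb{E}[Y]=N+\lambda=N(1+\theta_0)$, which explains why the hypotheses $\tau_1\ge\nu^2(1+\theta_0)$ and $\tau_2\le\nu^2(1+\theta_0)$ are the natural thresholds separating an upper-tail from a lower-tail event.

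The key external tool is the standard Laurent--Massart-type deviation inequality for noncentral chi-squared random variables: for every $x\ge0$,
\begin{gather*}
\Pr\!\bigl(Y\ge N+\lambda+2\sqrt{(N+2\lambda)x}+2x\bigr)\le\mathrm{e}^{-x},\qquad
\Pr\!\bigl(Y\le N+\lambda-2\sqrt{(N+2\lambda)x}\bigr)\le\mathrm{e}^{-x}.
\end{gather*}
This is itself proved by the Chernoff method: one applies Markov to $\mathrm{e}^{sY}$ using the closed-form moment generating function $\mathbb{E}[\mathrm{e}^{sY}]=(1-2s)^{-N/2}\exp\!\bigl(\tfrac{s\lambda}{1-2s}\bigr)$ valid for $s<\tfrac12$ (upper tail) or $s<0$ (lower tail), takes logarithms, and uses the elementary bounds $-\log(1-u)\le u+\tfrac{u^2}{2(1-u)}$ and $-\log(1-u)\le u+\tfrac{u^2}{2}$ (the latter for $u\le 0$) to turn the rate function into a quantity that is quadratic in $\sqrt{x}$ after optimizing over $s$. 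Alternatively one simply quotes Laurent--Massart, or \cite[Lemma 11]{Li2015}, from which this lemma is taken verbatim.

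Given this inequality the two displayed bounds follow by algebra. For \eqref{tailbound1}, setting the threshold $N+\lambda+2\sqrt{(N+2\lambda)x}+2x$ equal to $N\tau_1/\nu^2$ and substituting $\lambda=N\theta_0$ yields the quadratic $2x+2\sqrt{N(1+2\theta_0)}\,\sqrt{x}=N\bigl(\tau_1/\nu^2-1-\theta_0\bigr)$ in the unknown $\sqrt{x}$; its nonnegative root is $\sqrt{x}=\tfrac{\sqrt{N}}{2}\bigl(\sqrt{2\tau_1/\nu^2-1}-\sqrt{1+2\theta_0}\bigr)$, and the hypothesis $\tau_1\ge\nu^2(1+\theta_0)$ guarantees both that the radicand $2\tau_1/\nu^2-1=(1+2\theta_0)+2(\tau_1/\nu^2-1-\theta_0)$ is nonnegative and that this root is itself nonnegative. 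Squaring gives $x=\tfrac{N}{4}\bigl(\sqrt{2\tau_1/\nu^2-1}-\sqrt{1+2\theta_0}\bigr)^2$, which is exactly \eqref{tailbound1}. For \eqref{tailbound2} the computation is the mirror image, now with no $+2x$ term: $2\sqrt{N(1+2\theta_0)}\,\sqrt{x}=N\bigl(1+\theta_0-\tau_2/\nu^2\bigr)$, whose right side is nonnegative precisely because $\tau_2\le\nu^2(1+\theta_0)$, so $x=\tfrac{N\,(1+\theta_0-\tau_2/\nu^2)^2}{4(1+2\theta_0)}$, which is \eqref{tailbound2}.

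The derivation above is essentially mechanical once the right concentration inequality is identified; the only points needing care are (i) the rotational-invariance reduction to a $\chi^2_N(N\theta_0)$ law, (ii) choosing the correct (positive) branch when inverting each quadratic, and (iii) checking that the stated hypotheses on $\tau_1,\tau_2,\theta_0$ are exactly what keeps every radicand and root well-defined. The genuine obstacle, should one insist on a self-contained argument, is proving the Laurent--Massart inequality itself: one must relax the exact Legendre transform of the noncentral-$\chi^2$ log-MGF --- which is sharp but unwieldy --- by precisely the right amount so that the optimized exponent collapses to a perfect square in $\sqrt{x}$, matching the clean $\tfrac14(\cdot)^2$ form. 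As this bound is entirely standard, quoting \cite{Li2015} is the economical choice.
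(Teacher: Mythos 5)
Your proposal is correct, and it actually does more work than the paper, which imports this lemma verbatim from \cite[Lemma 11]{Li2015} without proof. Your route --- reduce $\tfrac{1}{\nu^2}\|\mathbf{g}+\mathbf{k}\|^2$ by rotational invariance to a noncentral chi-squared variable $\chi^2_N(\lambda)$ with $\lambda=N\theta_0$, then invert the Birg\'e/Laurent--Massart deviation inequalities $\Pr\bigl(Y\ge N+\lambda+2\sqrt{(N+2\lambda)x}+2x\bigr)\le \mathrm{e}^{-x}$ and $\Pr\bigl(Y\le N+\lambda-2\sqrt{(N+2\lambda)x}\bigr)\le \mathrm{e}^{-x}$ --- is a standard and valid way to obtain exactly these bounds, and I checked the algebra: solving $2x+2\sqrt{N(1+2\theta_0)}\sqrt{x}=N(\tau_1/\nu^2-1-\theta_0)$ for the nonnegative root indeed gives $x=\tfrac{N}{4}\bigl(\sqrt{2\tau_1/\nu^2-1}-\sqrt{1+2\theta_0}\bigr)^2$, recovering \eqref{tailbound1}, and the lower-tail computation gives \eqref{tailbound2} exactly. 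Your observations that the hypotheses $\tau_1\ge\nu^2(1+\theta_0)$ and $\tau_2\le\nu^2(1+\theta_0)$ are precisely what make the thresholds reachable (the maps $x\mapsto N+\lambda\pm(\cdots)$ are monotone starting from the mean $N+\lambda$ at $x=0$) and keep the roots nonnegative are the right ones. The only thing each approach ``buys'': the paper's citation is economical but opaque, while your derivation makes transparent why the exponents take the $\tfrac{N}{4}(\cdot)^2$ form and why $\mathbb{E}[Y]/N=1+\theta_0$ is the pivot between the two tails; if you wanted a fully self-contained argument you would still need to prove the noncentral-$\chi^2$ concentration inequality via the Chernoff/MGF computation you sketch, which is standard but not in the paper either.
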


Since we will use this \Cref{lem:tailbound} to get a failure bound, it is critical to show the sample errors within the different offsets for a particular bin are independent. 
However, given that bins are created as shown in Line~\ref{algline:U} in \Cref{alg:peeling}, it is not immediately clear that the sample errors remain independent. 
To show this independence let us first extend the definition of the sample errors $W_{c;t}[j]$ to take into account the effect of the peeling decoder \Cref{alg:peeling}. 
Recall that for any particular bin we have $P=P_1+P_2$ offset bins.

\begin{definition}\label{def:error_part}
For a specific bin, regard $\mathbf{U}_c[j]$ as a vector of length $P$ as in \Cref{lm:rewritten_prop_hashing_obs}. 
At a given time step in \Cref{alg:peeling}, denote the set of indices of the current contained non-zero Pauli error rates by $\mathcal{P}$.
Define the \emph{random offset errors} and the \emph{coded offset errors} by the equation
\begin{gather*}
    W_{c;t}[j]:=U_{c;t}[j]-\sum_{m\in\mathcal{P}}(-1)^{\langle\mathbf{d}_{c;t},m\rangle}p_m.
\end{gather*}
where $t \in [P_1]$ for the random offset errors and $t \in P_1 + [P_2]$ for the coded offset errors.
We can combine all of these sample errors to define $\mathbf{W}_c[j]$ following the manner of \Cref{lm:rewritten_prop_hashing_obs}.
\end{definition}

In order to discuss the independence of errors and the evolution of their variance, we first introduce some results to rule out some delicate situations.
To define this more rigorously, consider the \textit{directed neighborhood} $\mathcal{N}_e^l$ in the bipartite graph which consists of nonzero Pauli error rates (left nodes) and all the bins (right nodes). 
The neighborhood $\mathcal{N}_e^{2l}$ with length $2l$ and an edge $e=(v,c)$ is an induced subgraph containing all the edges and nodes on paths $e_1,e_2\cdots,e_{2l}$ from node $v$ where $e\neq e_1$.

Denote by $\mathcal{T}_l$ the event that for every edge in the bipartite graph, this subgraph $\mathcal{N}^{2l}_e$ is cycle-free. 
If $\mathcal{T}_l$ occurs, then all the first $l$ peeling iterations will progress independently and there will be no initial error propagating to any bin more than once in the first $l$ iterations.
It has been shown in \cite{Li2015} that with sufficiently large $s$ and $N$, the effective part of the subsampling-based bipartite graph, similar to \Cref{fig:bipartite}, is sufficiently cycle-free for our purposes, as the following lemma illustrates.
\begin{lemma}[{Ref.~\cite[Lemma 6]{Li2015}}]\label{lm:tree-like}
For any iteration $l$, the probability of the complement of $\mathcal{T}_l$ is bounded as
\begin{align*}
    \Pr{(\mathcal{T}^c_l)}\leq c_0\cdot \frac{\log^l s}{s},
\end{align*}
for some constant $c_0$.
\end{lemma}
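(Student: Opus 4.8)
The plan is to reduce \Cref{lm:tree-like} to the standard analysis of locally tree-like sparse random bipartite graphs, essentially transcribing \cite[Lemma 6]{Li2015} once the subsampling graph of \Cref{alg:subsampling} has been matched to the correct random-graph ensemble. Recall the relevant graph: the left nodes are the $s$ elements of the support $\mathcal{K}$, the right nodes are the $BC$ bins, every left node has degree exactly $C=O(1)$, and inside subsampling group $c$ the left node $m$ is attached to the bin indexed by $\mathbf{M}_c^T m$. My first step would be to observe that, since each index in $\mathbb{F}_2^b$ has exactly $N/B = 2^n$ preimages under $m\mapsto \mathbf{M}_c^T m$ (using that $\mathbf{M}_c$ has full column rank), Assumption~A1 of \Cref{random_support_assumption} — a uniformly random support of size $s$ — induces within each group an exchangeable, near-uniform placement of $s$ balls into $B$ bins, and the $C=O(1)$ groups are jointly distributed like the ensemble analysed in \cite{Li2015}. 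This is the only genuinely quantum-specific ingredient; the rest is classical.

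Next I would fix an edge $e=(v,c)$ and expose the directed neighborhood $\mathcal{N}_e^{2l}$ by a breadth-first exploration from $v$ (without traversing $e$ first), revealing one node at a time and keeping the invariant that the revealed subgraph is so far a tree. Two quantities then have to be controlled. The first is the \emph{size} of the neighborhood: the branching factor is $C-1 = O(1)$ at left nodes and at most the largest right-degree at right nodes, so on the bounded-degree event (on which every bin holds at most $P_1 = \Theta(n)$ Paulis — the same event used to bound $\Prob{D^c}$ in the proof of \Cref{Thm:main}) one gets $|\mathcal{N}_e^{2l}| \le (O(\log s))^{O(l)}$ after substituting $n = \Theta(\log s)$. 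The second is that whenever the exploration reveals a fresh edge out of an already-revealed node, its far endpoint is a near-uniform bin (respectively a near-uniform element of $\mathcal{K}$), so it collides with one of the at most $|\mathcal{N}_e^{2l}|$ already-revealed nodes on that side with probability $O(|\mathcal{N}_e^{2l}|/B) = O(|\mathcal{N}_e^{2l}|/s)$ since $B = \Theta(s)$. A union bound over the $O(|\mathcal{N}_e^{2l}|)$ edges exposed bounds the per-edge failure probability by $O(\mathrm{poly}(\log s)/s)$; the refined edge-by-edge bookkeeping of \cite[Lemma 6]{Li2015} sharpens the polylogarithmic factor to exactly $\log^l s$ and, after a union bound over all $O(s)$ edges $e$, produces the stated $\Pr(\mathcal{T}_l^c)\le c_0\log^l s/s$. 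Finally I would feed the result into the main proof with $l$ of order $\log\log s$ (the number of peeling iterations fixed in the relevant remark), which is where the bound $O(\log^{\log\log s}s / s) \le \mathrm{e}^{-O(n)}$ used for $\Prob{H^c}$ comes from.

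I expect the main obstacle to be the first step rather than the exposure argument: one has to verify that the structured hash maps $\mathbf{M}_c^T$ of \Cref{alg:subsampling} — in particular after the symplectic conjugation $\mathbf{M}_c' = J_n\mathbf{M}_c J_b$ and with the coded offsets of \Cref{SO} layered on — genuinely scatter a uniformly random support uniformly enough, and with enough independence across the $C$ groups, that the classical balls-and-bins analysis applies verbatim. A secondary subtlety is that the right-degrees are random rather than deterministically bounded, so the neighborhood-size estimate must be carried out on the bounded-degree event, whose complement is separately controlled (exponentially small in $n$) in the proof of \Cref{Thm:main}; conditioning on it does not affect the near-uniformity used in the collision bound. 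Once the ensembles are matched and this conditioning is in place, \Cref{lm:tree-like} is exactly \cite[Lemma 6]{Li2015}.
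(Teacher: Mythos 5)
The paper offers no proof of this lemma at all---it is imported verbatim from Ref.~\cite[Lemma 6]{Li2015}---and your sketch is a faithful reconstruction of the standard exposure/branching-process argument used there, including a correct identification of the one genuinely paper-specific ingredient, namely that Assumption A1 together with the full-column-rank hash maps $\mathbf{M}_c^T$ reproduces the balls-and-bins ensemble of the classical analysis, and a correct handling of the random right-degrees by conditioning on the bounded-degree event. Your approach is therefore essentially the same as the paper's (i.e., the cited source's), and I see no gap.
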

\begin{remark}\label{iteration number}
\rm Ref.~\cite{Li2015} shows that the probability $p_l$ that an arbitrary edge remains after $l$ peelings given that given that the event $\mathcal{T}_l$ is true can be calculated recursively as
\begin{align}
    p_l=\begin{cases}
    \ \ \ \ \ \ \ 1&  \text{$l$=0}\\
    \left(1-\mathrm{e}^{-p_{l-1}/\eta}\right)^{C-1}&  \text{otherwise}
    \end{cases},
\end{align}
where $\eta$ is the factor $\frac{B}{s}$ and $C$ is the number of subsampling groups. 
\end{remark}

To illustrate the convergence of \Cref{alg:peeling} given the event $\mathcal{T}_l$, consider taking $C=6$ and $\eta=1$, which are reasonable choices in the regime of interest. 
Then the probability of an edge will decrease to approximately machine precision in only three iterations. 
In general this $p_l$ vanishes exponentially with an exponent of $l$. 
Using the law of total probability, the probability of that there exist any edges after $l= \Omega(\log\log s)$ iterations is
\begin{align}\label{halt_probability}
  p_l + \Pr(\mathcal{T}_l^c) = O\left(\frac{\log^{\log\log s} s}{s}\right).
\end{align}
Therefore, the event that there exist bins getting peeled by some earlier bins in a cyclic manner during the whole process happens with probability of the same magnitude of~\eqref{halt_probability}, which converges to zero with $s$.

\begin{lemma}\label{lm:independence}
For an arbitrary timestamp in \Cref{alg:peeling}, sample errors  in each of the random offset errors for a particular bin $\mathbf{U}_c[j]$ remain independent of each other given that the peeling route is cycle-free.
\end{lemma}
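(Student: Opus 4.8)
The plan is to trace every offset-error all the way back to the primitive Gaussian shot noises $\{w_k\}$ and to exhibit the distinct noise ingredients of the bin as functionals of \emph{disjoint} blocks of that independent family, so that the required independence follows from the elementary fact that linear functionals of disjoint sub-families of a jointly independent Gaussian family are themselves independent (and finite linear combinations of independent Gaussians are Gaussian).

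First I would unroll \Cref{def:error_part} together with the peeling update in Line~\ref{algline:U} of \Cref{alg:peeling}. At any timestamp the error in the $t$-th random offset of the bin $\mathbf{U}_c[j]$ equals the \emph{original} subsampling noise $W^{(0)}_{c;t}[j]=\frac{1}{B}\sum_{\ell}(-1)^{\langle j,\ell\rangle}w_{\mathbf{M}_c'\ell+\mathbf{d}_{c;t}}$ carried over from \Cref{lm:rewritten_prop_hashing_obs}, plus one term $(p_{m'}-\widehat p_{m'})(-1)^{\langle\mathbf{d}_{c;t},m'\rangle}$ for each Pauli $m'$ that has already been peeled into this bin. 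Each estimation error $p_{m'}-\widehat p_{m'}$ is, by the single-ton search line of \Cref{alg:bin_detect}, a fixed linear combination of the random-offset errors of the bin in which $m'$ was declared a single-ton (a bin that is unique on the cycle-free event, by the remark after \Cref{lm:tree-like}), so recursing down to the leaves shows that every contribution bottoms out in a $W^{(0)}$-type functional of the fresh shot noise collected for one particular subsampling group and offset. It is convenient to record the ingredients as the list $\{W^{(0)}_{c;t}[j]\}_{t\in[P_1]}$ together with $\{p_{m'}-\widehat p_{m'}\}_{m'}$ over the Paulis peeled into this bin; this is exactly the decomposition whose variances the array $\mathbf{T}$ accumulates in Line~\ref{algline:T}.

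Second, I would show these ingredients are built from disjoint, hence independent, blocks of shot noise. The $P_1$ random offsets of a bin are distinct experiments, so the fresh noises $W^{(0)}_{c;0}[j],\dots,W^{(0)}_{c;P_1-1}[j]$ are independent of one another by \textbf{A2}. For the peeled-in terms, the cycle-free hypothesis — the event $\mathcal{T}_l$ of \Cref{lm:tree-like} for the relevant directed neighbourhood, equivalently the event $H$ in the proof of \Cref{Thm:main} — guarantees that the part of the peeling forest feeding $(c,j)$ is a tree; consequently each peeled-in Pauli $m'$ was identified in a bin whose ancestor subtree is vertex-disjoint from $(c,j)$ and from the ancestor subtree of any other peeled-in Pauli. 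Since a unit of shot noise enters the construction of at most one leaf bin, and can reach $(c,j)$ along at most one route once cycles are excluded, vertex-disjointness of those subtrees forces disjointness of the underlying blocks of $\{w_k\}$; therefore the entire list of ingredients from the previous step is mutually independent, each is Gaussian, and the variance of each random-offset error equals $\nu^2$ times the quantity tracked by $\mathbf{T}$. This is precisely what is needed to set up \Cref{lm:variance} and to apply the tail bound \Cref{lem:tailbound} inside the bin detector.

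The main obstacle is the bookkeeping in the second step: a single peeled-in single-ton feeds \emph{every} random offset of the receiving bin through the sign $(-1)^{\langle\mathbf{d}_{c;t},m'\rangle}$, so one must argue carefully that the only way a fixed unit of shot noise can arrive at $(c,j)$ twice — once directly, and once through an intermediate single-ton's estimate — is via a cycle in the peeling graph, which the hypothesis forbids. Making this precise requires working on the event that all earlier bin detections have succeeded (the event $V$ of the proof of \Cref{Thm:main}), so that ``the bin in which $m'$ was identified'' is well defined, and using the fact that a Pauli occupies exactly one bin per subsampling group. It is exactly this no-double-counting property that makes the variance bookkeeping of $\mathbf{T}$ exact and keeps the chi-squared tail bounds of \Cref{sec:tailbounds} valid.
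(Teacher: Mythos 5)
There is a genuine gap in your second step. The "disjoint blocks of shot noise" argument establishes that the \emph{ingredients} --- the fresh subsampling noises $W^{(0)}_{c;t}[j]$ for the $P_1$ offsets and the peeled-in estimation errors $p_{m'}-\widehat p_{m'}$ --- are mutually independent, and the cycle-free hypothesis is indeed what makes that true. But the lemma is about the offset errors $W_{c;t}[j]$ themselves, and these are \emph{not} functionals of disjoint subsets of your ingredient list: after a single peel, the very same random variable $p_{m'}-\widehat p_{m'}$ enters \emph{every} offset $t$ of the receiving bin, differing only by the sign $(-1)^{\langle\mathbf{d}_{c;t},m'\rangle}$. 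Conditional on the offsets, $W_{c;t_1}[j]$ and $W_{c;t_2}[j]$ therefore share a common summand up to sign and are manifestly dependent (with one peeled-in Pauli and the fresh noise suppressed they would have identical absolute values). So disjointness of the underlying shot-noise blocks simply fails across the offsets of a single bin, and independence of the ingredients does not transfer to independence of the offset errors, which is what \Cref{lm:independence} asserts.

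The missing idea --- and the crux of the paper's proof --- is to exploit the \emph{randomness of the offsets} $\mathbf{d}_{c;t}$ from \Cref{SO}. The paper proceeds by induction on peeling steps and computes the pairwise covariance directly: the problematic cross term $(p_{m'}-\widehat p_{m'})^2(-1)^{\langle\mathbf{d}_{c;t_1}+\mathbf{d}_{c;t_2},m'\rangle}$ in $\mathbb{E}\bigl[W_{c;t_1}[j]\,W_{c;t_2}[j]\bigr]$ vanishes only because the random offsets are independent and uniform, so that $\mathbb{E}\bigl[(-1)^{\langle\mathbf{d}_{c;t},m'\rangle}\bigr]=0$; the remaining cross terms vanish by zero-mean noise together with cycle-freeness. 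Your no-double-counting observation about a unit of shot noise reaching $(c,j)$ twice is correct and necessary (it is what kills the covariance between the direct $W^{(0)}$-type terms and the peeled-in terms), but it does not address the dominant source of dependence, which lives entirely inside one bin and is neutralized only by averaging over the random offset phases. To repair the argument you would need to add this averaging step explicitly, at which point you are essentially reproducing the paper's covariance induction rather than a disjoint-support argument.
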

\begin{proof}
For an initial bin subsampled from \Cref{alg:subsampling}, consider an arbitrary pair of offsets labeled by $t_1, t_2\in[P_1]$ in the same bin $\mathbf{U}_c[j]$
\begin{align*}
    W_{c;t_1}[j] &= \sum_{m:\,\mathbf{M}_c^Tm = j} {W}_m(-1)^{\langle\mathbf{d}_{c;t_1},m\rangle},\\
    W_{c;t_2}[j] &= \sum_{m:\,\mathbf{M}_c^Tm = j} {W}_m(-1)^{\langle\mathbf{d}_{c;t_2},m\rangle}.
\end{align*}
Since all the errors $W_m$ are i.i.d.\ Gaussian random variables $\mathcal{N}(0,\frac{\xi^2}{N})$, it is obvious that $\mathbb{E}(W_{c;t_1}[j]\cdot W_{c;t_2}[j])=0$. 
So they are independent given that the expected values of the samples errors are 0.

The peeling decoder in Line~\ref{algline:U} in \Cref{alg:peeling} causes errors in the estimate of $W_{c;t}[j]$ to propagate in the following manner
\begin{gather*}
    W_{c;t}[j]\leftarrow W_{c;t}[j]+(p_m-\widehat{p}_{\widehat{m}})(-1)^{\langle\mathbf{d}_{c;t},\widehat{m}\rangle}.
\end{gather*}
We now proceed by induction. 
As discussed above, the noise is initially independent, so the base case is satisfied. 
Now assume that the sample errors before peeling are independent of each other. 
Observing that the updated error still has mean zero, we can calculate the expected value of a product between an arbitrary pair of sample errors in the offsets of a bin to show independence between the offset bins. 
For convenience, denote the updated error by $W_{c;t}[j]'$.
Then we have
\begin{align*}
    \mathbb{E}(W_{c;t_1}[j]'&\cdot W_{c;t_2}[j]') =\\
    \mathbb{E}\bigl(&W_{c;t_1}[j]\cdot W_{c;t_2}[j] \\ 
    &+W_{c;t_1}[j]\cdot(p_m-\widehat{p}_{\widehat{m}})(-1)^{\langle\mathbf{d}_{c;t_2},\widehat{m}\rangle}\\
    &+W_{c;t_2}[j]\cdot(p_m-\widehat{p}_{\widehat{m}})(-1)^{\langle\mathbf{d}_{c;t_1},\widehat{m}\rangle}\\
    &+(p_m-\widehat{p}_{\widehat{m}})^2(-1)^{\langle\mathbf{d}_{c;t_2}+\mathbf{d}_{c;t_1},\widehat{m}\rangle}\bigr)=0.
\end{align*}
The first three terms vanish because the noise has zero mean and the peeling route is cycle-free, and the last term vanishes because the expectation over the independent random offset phases is $\mathbb{E}\bigl((-1)^{\langle\mathbf{d}_{c;t_i},m\rangle}\bigr) = 0$ for any $t_i \in [P_1]$ and $m$. 
\end{proof}

In \Cref{alg:peeling}, we employ an array $\mathbf{T}$ to keep track of the variance of sample error for each bin. 
This array gets updated whenever the algorithm peels a bin using an estimated Pauli error rate. 
We now show that this does indeed correctly track the variance of the sample errors in the bins.

\begin{lemma}\label{lm:variance}
Suppose that at a given arbitrary timestamp in \Cref{alg:peeling} all the bin detector subroutines called earlier have correctly identified their bins and the peeling route is cycle-free.
Then for each bin and its corresponding offsets $\mathbf{U}_c[j]$, the sample error for that bin and each of its offsets $\mathbf{W}_c[j]$ have the same variance $T_c[j]\cdot\nu^2$.
\end{lemma}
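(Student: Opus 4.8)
The plan is to argue by induction on the number of peeling operations \Cref{alg:peeling} has performed by the chosen timestamp. Base case: before any peeling, \Cref{lm:rewritten_prop_hashing_obs} already gives $\mathbf{W}_c[j]\sim\mathcal{N}(0,\nu^2\mathbbm{1}_P)$, so every offset of every bin carries noise of variance $\nu^2 = 1\cdot\nu^2$, consistent with $\mathbf{T}_c[j]$ initialized to $1$. For the inductive step I would isolate one peeling: a bin $(c,j)$ which — by the hypothesis that every earlier bin detector was correct — is a genuine single-ton with index $\widehat m = m$ gets peeled out of the bin $(c',j_{c'})$ with $j_{c'}=\mathbf{M}_{c'}^{T}m$. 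Correct identification means the signal $p_m(-1)^{\langle\mathbf{d}_{c';t},m\rangle}$ is removed exactly from every offset of the target bin (Line~\ref{algline:U}), so its noise becomes $W'_{c';t}[j_{c'}]=W_{c';t}[j_{c'}]+(p_m-\widehat p_{\widehat m})(-1)^{\langle\mathbf{d}_{c';t},m\rangle}$ in the sense of \Cref{def:error_part}; every bin untouched by the step retains its inductive hypothesis verbatim, and the noise stays Gaussian since it is a linear combination of the $\{W_m\}$.

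The core is to bound $\Var{W'_{c';t}[j_{c'}]}$ by $\mathbf{T}_{c'}[j_{c'}]\nu^2$ evaluated \emph{after} the update on Line~\ref{algline:T}. I would first evaluate $\Var{p_m-\widehat p_{\widehat m}}$. Since $(c,j)$ is a single-ton carrying only $p_m$, the single-ton search in \Cref{alg:bin_detect} gives $\widehat p_{\widehat m}-p_m=\tfrac{1}{P_1}\sum_{t'\in[P_1]}(-1)^{\langle\mathbf{d}_{c;t'},m\rangle}W_{c;t'}[j]$; expanding each $W_{c;t'}[j]$ over the underlying eigenvalue noises $\{W_{m'}:\mathbf{M}_c^{T}m'=j\}$, the term $W_m$ enters with coefficient exactly $1$ (regardless of $P_1$), whereas each of the remaining $N/B-1$ noises enters with a coefficient that is an average of $P_1$ random offset signs. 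Collecting the variances — $W_m$ contributing the irreducible floor $\tfrac{\xi^2}{N}=\tfrac BN\nu^2$, each other noise contributing $\tfrac1{P_1}$ of $\tfrac{\xi^2}{N}$ in expectation over the offsets, and the inductive hypothesis supplying the factor $\mathbf{T}_c[j]$ on the averaging part — yields $\Var{p_m-\widehat p_{\widehat m}}=\bigl(\tfrac{\mathbf{T}_c[j]}{P_1}+\tfrac{(P_1-1)B}{P_1N}\bigr)\nu^2$, exactly the increment added to $\mathbf{T}$ on Line~\ref{algline:T}. It then remains to see that the cross term $\Cov{W_{c';t}[j_{c'}],(p_m-\widehat p_{\widehat m})(-1)^{\langle\mathbf{d}_{c';t},m\rangle}}$ is nonpositive: under the cycle-free hypothesis and the genericity of the subsampling matrices, the only eigenvalue noise common to the target bin and the estimation bin is $W_m$ (no correlation is routed through an earlier peeled single-ton), and its contribution is $-\tfrac{\xi^2}{N}\le 0$. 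Hence $\Var{W'_{c';t}[j_{c'}]}\le\Var{W_{c';t}[j_{c'}]}+\Var{p_m-\widehat p_{\widehat m}}\le\mathbf{T}_{c'}[j_{c'}]\nu^2$ with the updated $\mathbf{T}$. Finally, the correction added to offset $t$ carries the sign $(-1)^{\langle\mathbf{d}_{c';t},m\rangle}$, which is also the sign appearing in the covariance, so the variance increment is identical for all $P$ offsets — including the coded offsets $t\in P_1+[P_2]$, for which \Cref{lm:independence} is not invoked.

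I expect the principal obstacle to be precisely this variance bookkeeping in the inductive step: pinning down which underlying noises $W_{m'}$ are shared between the estimation bin and the target bin, using the cycle-free hypothesis (via \Cref{lm:tree-like}) to certify that errors propagated into a common bin by distinct earlier peelings are mutually independent, and confirming that the irreducible floor carried by $W_m$ is exactly what makes the accounting close at the clean increment $\tfrac{\mathbf{T}_c[j]}{P_1}+\tfrac{(P_1-1)B}{P_1N}$ rather than a messier expression. A secondary point requiring care is that a single bin may be peeled more than once over the run, so one must check that the bound accumulates additively — one application of Line~\ref{algline:T} per peeling — which again follows from mutual independence of the propagated errors under the cycle-free assumption.
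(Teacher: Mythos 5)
Your proposal follows essentially the same route as the paper's proof: induction on the peeling steps, the decomposition $W'_{c';t}=W_{c';t}+(p_m-\widehat p_{\widehat m})(-1)^{\langle\mathbf{d}_{c';t},m\rangle}$, and the same variance increment $\bigl(\tfrac{\mathbf{T}_c[j]}{P_1}+\tfrac{(P_1-1)B}{P_1N}\bigr)\nu^2$ for the single-ton estimate (which you in fact derive in more explicit detail than the paper), with cycle-freeness used to control cross terms. The only divergence is your treatment of the covariance as nonpositive via a shared $W_m$, which yields variance $\le\mathbf{T}\nu^2$; the paper instead takes the initial errors in distinct subsampling groups to be independent (fresh noise on each queried eigenvalue), so the covariance vanishes exactly and the equality claimed in the lemma holds.
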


\begin{proof}
Since this statement is based on the premise that all of the earlier bin detector runs were accurate, we can assume that the index $\widehat{m}=m$ is correct. 
We will still write $\widehat{m}$ to distinguish these index estimates from the original index $m$. 
The idea is to calculate the variance after a peel by induction with the fact that for any two random variables,
\begin{align*}
    \Var{X+Y}=\Var{X}+\Var{Y}+2\Cov{X,Y}.
\end{align*}
Assume that the statement in the lemma holds before a peeling. 
Then we need to show that if we subtract an estimated Pauli $\widehat{p}_{\widehat{m}}$ from a bin in a different subsampling group that contains this Pauli, the statement is preserved when updating $\mathbf{T}_c[j]$. 
To do this we will work out the variance of the sample error in each term and the covariance between these sample errors. 
Armed with this we will be able to prove that the statement is preserved after peeling.

The peeling process (Line~\ref{algline:U} in \Cref{alg:peeling}) causes error propagation for the error part of the bin $\mathbf{U}_c[j]$ as follows,
\begin{gather}\label{update}
    W_{c;t}[j]\leftarrow W_{c;t}[j]+(p_m-\widehat{p}_{\widehat{m}})(-1)^{\langle\mathbf{d}_{c;t},\widehat{m}\rangle}.
\end{gather}
The variance of the first term is by induction $\mathbf{T}_c[j]\nu^2$, while the variance of the second term is not so trivial. 
The estimated Pauli comes from the \textit{single-ton search}, where all the first $P_1$ observations get summed after adding a random sign. 
Since all the random signs of the $W_{\widehat{m}}$ terms are annihilated before summation, and all the other error parts still remain random, the variance of this second term in \eqref{update} is
\begin{align}\label{eq:variance of hat_p}
    \Var{p_m-\widehat{p}_{\widehat{m}}}=\frac{T_{c'}[j']\times\nu^2}{P_1}+\frac{(P_1-1)B\times\nu^2}{P_1N}.
\end{align}
Because of the assumption that all the initial errors in different bins are independent and the condition that the every peeling route is cycle-free, the covariance term vanishes. 
Therefore, we have calculated variance as follows
\begin{align*}
    \Var{W_{c;t}[j]}_{\text{after}}/\nu^2=T_{c}[j]+\frac{\mathbf{T}_{c'}[j']}{P_1}+\frac{(P_1-1)B}{P_1N},
\end{align*}
which proves the lemma.
\end{proof}

In order to prove \Cref{Thm:main} and find a bound on the variances of the sample errors, it is necessary to find an upper bound on the parameters $\mathbf{T}_c[j]$, which need to be analyzed for both the graph and the algorithm. 
Denote by $G$ the bipartite graph of which each right node represents a bin observation, each left node represents a nonzero Pauli error rate, and edges come from the hash function relation:
\begin{gather}
    \mathbf{M}^T_cm=j.
\end{gather}
That is, a bin-observation-node $\mathbf{U}_c[j]$ is connected to error-rate-node $p_m$ if and only if it holds that $\mathbf{M}^T_cm=j$. 
A right node is a single-ton node if and only if it has a single edge connected to it. 
Every time we peel a left node (that is, we identify a Pauli error) we remove the edges connecting it and the right nodes. 
Each peeling therefore decreases the degree of the right nodes.

The following two lemmas help us bound the integer array $\mathbf{T}$ as we peel along the graph $G$. 
We first bound the right degree of $G$.

\begin{lemma}\label{lem:Bounded_degree}
The maximum degree of the right nodes in $G$ is less or equal than $\frac{P_1}{2}$ with probability $1-\mathrm{e}^{-O(n)}$.
\end{lemma}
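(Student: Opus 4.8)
The plan is to treat \Cref{lem:Bounded_degree} as a balls-into-bins tail estimate in which the only randomness is the support $\mathcal{K}$ (uniform over size-$s$ subsets of $[N]$ by \textbf{A1} of \Cref{random_support_assumption}), while the subsampling matrices $\mathbf{M}_c$ are regarded as fixed. First I would fix a group $c\in[C]$ and a bin index $j\in\mathbb{F}_2^{b}$. Since the subsampling matrix $\mathbf{M}_c$ has full column rank $b$ by construction, the linear map $m\mapsto\mathbf{M}_c^T m$ is onto $\mathbb{F}_2^{b}$, so the fiber $(\mathbf{M}_c^T)^{-1}(j)$ has exactly $N/B=2^{2n-b}$ elements, and the degree of the right node $\mathbf{U}_c[j]$ in $G$ equals $|\mathcal{K}\cap(\mathbf{M}_c^T)^{-1}(j)|$, i.e.\ the number of support elements landing in a fixed set of size $N/B$.

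Next I would bound the probability that this one bin has degree at least $k$. For any fixed $k$-element subset $T$ of the $N/B$ elements of the fiber, the uniform-support model gives $\Pr[T\subseteq\mathcal{K}]=\binom{N-k}{s-k}/\binom{N}{s}\le(s/N)^k$, since each factor $(s-i)/(N-i)\le s/N$. A union bound over the $\binom{N/B}{k}$ such subsets then yields
\begin{align*}
\Pr\bigl[\deg(\mathbf{U}_c[j])\ge k\bigr]\le\binom{N/B}{k}\Bigl(\frac{s}{N}\Bigr)^k\le\frac{1}{k!}\Bigl(\frac{s}{B}\Bigr)^k\le\frac{c_1^{\,k}}{k!}\,,
\end{align*}
where $c_1=O(1)$ is a bound on the sparsity ratio $s/B$ coming from $B=O(s)$ (indeed $s/B\le1$ in the regime $\eta=B/s\ge1$, so one may take $c_1=1$).

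Finally I would set $k=\lceil P_1/2\rceil$ and union bound over all $BC$ right nodes of $G$:
\begin{align*}
\Pr\Bigl[\,\max_{c\in[C],\;j\in\mathbb{F}_2^{b}}\deg(\mathbf{U}_c[j])\ge\tfrac{P_1}{2}\,\Bigr]\le BC\cdot\frac{c_1^{\,\lceil P_1/2\rceil}}{\lceil P_1/2\rceil!}\,.
\end{align*}
Here $BC=O(s)=\mathrm{e}^{O(n)}$ (using $s=4^{\delta n}$ and $C=O(1)$) and $c_1^{\,\lceil P_1/2\rceil}=\mathrm{e}^{O(n)}$ since $P_1=\Theta(n)$ by \Cref{SO}, whereas the factorial satisfies $\lceil P_1/2\rceil!\ge(\lceil P_1/2\rceil/\mathrm{e})^{\lceil P_1/2\rceil}=\mathrm{e}^{\Theta(n\log n)}$. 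Hence the right-hand side is $\mathrm{e}^{O(n)-\Theta(n\log n)}\le\mathrm{e}^{-O(n)}$ for $n$ large, which is the claim.

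The only step that requires any care is the hypergeometric (negative-correlation) inequality $\Pr[T\subseteq\mathcal{K}]\le(s/N)^{|T|}$, which substitutes for the independence one would enjoy in a Bernoulli inclusion model; everything after that is a factorial overwhelming an exponential and is routine. A minor alternative is to bound the single-bin load by a Chernoff/Poisson tail (the load is sub-Poisson with mean $s/B=O(1)$), which yields the same $\mathrm{e}^{-\Theta(n\log n)}$ decay. One bookkeeping remark: if some $\mathbf{M}_c$ failed to have full column rank its nonempty fibers would merely be of a common larger size, but such degenerate matrices can be excluded by construction, so this does not affect the conclusion.
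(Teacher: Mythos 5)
Your proposal is correct and follows the same overall strategy as the paper's proof --- bound the tail of a single bin's load under the uniform random support assumption \textbf{A1}, then union bound over the $BC$ right nodes --- but the mechanism for the single-bin tail is genuinely different. The paper observes that the indicators $d_{ij}$ of support elements landing in bin $i$ are negatively correlated, dominates them by i.i.d.\ Bernoulli variables with success probability $\frac{sB}{N(B-1)}$, and applies a Chernoff bound, obtaining a per-bin tail of the form $\mathrm{e}^{-\Theta(P_1)}=\mathrm{e}^{-\Theta(n)}$; the final conclusion then relies on the implied constant in that exponent beating the $\mathrm{e}^{O(n)}$ from the union bound over $BC=\Theta(s)$ bins. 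You instead compute the exact hypergeometric inclusion probability $\Pr[T\subseteq\mathcal{K}]=\binom{N-k}{s-k}/\binom{N}{s}\leq(s/N)^{k}$ and take a first-moment union bound over the $\binom{N/B}{k}$ candidate $k$-subsets of the fiber, which yields the sub-Poisson tail $c_1^{k}/k!$ and hence a per-bin failure probability of $\mathrm{e}^{-\Theta(n\log n)}$ at $k=\lceil P_1/2\rceil=\Theta(n)$. This is both more elementary (no stochastic domination or Chernoff machinery) and quantitatively stronger, since the factorial decay overwhelms the $\mathrm{e}^{O(n)}$ union-bound factor without any constant-tracking. Your side remarks are also sound: the full-column-rank assumption on $\mathbf{M}_c$ is implicit in the paper's claim that each bin connects to exactly $N/B$ Paulis, and your handling of the $\geq$ versus $>$ threshold at $P_1/2$ only strengthens the stated conclusion.
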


\begin{proof}
Put the right nodes in some sequential order, and define events $\{X_i\}_{i=1}^{BC}$ where $X_i$ denotes the $i^{\text{th}}$ node and is linked to more than $P_1/2$ left nodes. 
According to the bin observation model \eqref{eq:bin_description}, each bin connects with $\frac{N}{B}$ Pauli error rates (most of which will be zero), so the expected degree of a right node is $\frac{s}{B}$ where $s$ is the number of left nodes. 
Since the algorithm chooses $B=O(s)$, the expected degree $\frac{s}{B}= O(1)$.

Note that by Assumption~\ref{random_support_assumption}, the support of the Pauli error rates is chosen randomly. 
Therefore, concentrating on a specific bin $i$, we can introduce a random variable $d_i$ that denotes the degree of bin-observation-node $i$ (a right node), and introduce the variable $d_{ij}$ which is 0 if the corresponding $j^{\text{th}}$ Pauli error rate is zero or if $p_j$ is not in the $i^{\text{th}}$ bin, and 1 otherwise. 
Then we have the relation
\begin{align}
    d_i = \sum_{j} d_{ij},
\end{align}
since this counts the support in the $i^{\text{th}}$ bin. 
These variables $d_{ij}$ are actually all Bernoulli variables and the only correlation among them comes from the constraint that there are exactly $s$ elements in the entire support.
This constraint means that the $d_{ij}$ are negatively correlated, and so the probability of $d_{ij}=1$ can be upper bounded by considering the event that all the other Pauli rates linked with this bin are zero,
\begin{gather*}
    \Prob{d_{ij}=1}\leq\frac{sB}{N(B-1)}.
\end{gather*}
Now consider another set of i.i.d.\ Bernoulli variables $\{d_{ij}'\}_j$ each of which is 1 with probability $\frac{sB}{N(B-1)}$.
We then have
\begin{gather*}
    \Prob{X_i}=\Prob{\sum_{j=1}^{N/B}d_{ij}\geq \frac{P_1}{2}}\leq\Prob{\sum_{j=1}^{N/B}d_{ij}'\geq\frac{P_1}{2}}.
\end{gather*}
Since the expected value of the sum of $\{d_{ij}'\}$ is $\frac{s}{B-1}$, the Chernoff bound is suitable for this case, and we find
\begin{align*}
    \Prob{X_i}\leq&\Prob{\sum_{j=1}^{N/B}d_{ij}'-\frac{s}{B-1}\geq \frac{P_1}{2}-\frac{s}{B-1}}\\
    \leq&\mathrm{e}^{-\frac{[(B-1)P_1-2s]^2}{2(B-1)[(B-1)P_1+2s]}}.
\end{align*}
According to the union bound, the event $X$ which denotes that there exist some left nodes with degree larger than $\frac{P_1}{2}$ follows from the upper bound,
\begin{align*}
    \Prob{X}\leq BC\Prob{X_i} \leq BC\mathrm{e}^{-\frac{[(B-1)P_1-2s]^2}{2(B-1)[(B-1)P_1+2s]}}.
\end{align*}
That this is at most $\mathrm{e}^{-O(n)}$ follows since $B = \Theta(s)$, $s = \Theta(N^\delta)$, and $P_1 = \Theta(n)$. 
\end{proof}

The degree bound we have just proven allows us to bound the maximum element of the array $\mathbf{T}$. 
\begin{lemma}\label{maximum_T}
Suppose the maximum degree of the right nodes in $G$ is not larger than $\frac{P_1}{2}$. 
Then for any time step in \Cref{alg:peeling}, assuming all previous bin detections succeeded, the maximum element of the array $\mathbf{T}$ is at most 4.
\end{lemma}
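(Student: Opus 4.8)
The plan is to track the array $\mathbf{T}$ through the only operation that modifies it, namely Line~\ref{algline:T} of \Cref{alg:peeling},
\begin{align*}
    \mathbf{T}_{c'}[j_{c'}]\ \longleftarrow\ \mathbf{T}_{c'}[j_{c'}]+\frac{\mathbf{T}_c[j]}{P_1}+\frac{(P_1-1)B}{P_1N},
\end{align*}
starting from the initialization $\mathbf{T}_c[j]=1$ for all $c$ and $j$. Two preliminary observations drive the bound. First, whenever $\mathbf{T}_{c'}[j']$ is incremented it is because a single-ton Pauli $\widehat{m}$ (correctly identified, by the hypothesis that all earlier bin detections succeeded, so $\widehat m=m$) is being peeled from bin $j'=\mathbf{M}_{c'}^{T}m$ in group $c'$; this deletes the edge $(m,(c',j'))$ from the graph $G$, and since each edge of $G$ is deleted at most once, the number of increments to $\mathbf{T}_{c'}[j']$ is at most the initial right-degree of $(c',j')$, which is at most $\tfrac{P_1}{2}$ by hypothesis (cf.\ \Cref{lem:Bounded_degree}). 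Second, the additive constant obeys $\tfrac{(P_1-1)B}{P_1N}\le B/N$, which is exponentially small since $B=\Theta(s)$, $s=4^{\delta n}$ with $0<\delta<\tfrac12$, and $N=4^n$; in particular $\tfrac{P_1B}{2N}\le1$ because $P_1=\Theta(n)$.

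With these facts I would run a ``first violation'' argument to break the apparent circularity that the update to one entry of $\mathbf{T}$ reads another entry. Suppose for contradiction that at some moment during the execution of \Cref{alg:peeling} an entry of $\mathbf{T}$ exceeds $4$, and consider the earliest update at which this first happens, say the $i$-th increment of $\mathbf{T}_{c'}[j']$, which adds $\tfrac{\mathbf{T}_c[j]}{P_1}+\tfrac{(P_1-1)B}{P_1N}$. Because this is the first time any entry exceeds $4$, every entry of $\mathbf{T}$ is at most $4$ at all earlier moments and just before the present update; in particular the current value $\mathbf{T}_c[j]\le4$, and the $i-1$ bin values $T_1,\dots,T_{i-1}$ read during the previous increments of $\mathbf{T}_{c'}[j']$ satisfy $T_l\le4$. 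Writing $T_i:=\mathbf{T}_c[j]$ and using $i\le\tfrac{P_1}{2}$, the value of $\mathbf{T}_{c'}[j']$ right after the present update is at most
\begin{align*}
    1+\sum_{l=1}^{i}\Bigl(\frac{T_l}{P_1}+\frac{(P_1-1)B}{P_1N}\Bigr)
    \ \le\ 1+i\Bigl(\frac{4}{P_1}+\frac{B}{N}\Bigr)
    \ \le\ 1+\frac{P_1}{2}\Bigl(\frac{4}{P_1}+\frac{B}{N}\Bigr)
    \ =\ 3+\frac{P_1B}{2N}\ \le\ 4,
\end{align*}
contradicting the assumed violation. Hence no entry of $\mathbf{T}$ ever exceeds $4$, which proves the lemma.

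I expect the main obstacle to be the bookkeeping in the first observation---pinning down that a fixed bin-node receives at most $\tfrac{P_1}{2}$ increments. This combines the right-degree bound (the hypothesis, ultimately \Cref{lem:Bounded_degree}) with the facts that all earlier bin detections succeeded (so peels always target the correct bin $\mathbf{M}_{c'}^{T}m$) and that the peeling process deletes each edge of $G$ at most once. Once the ``at most $\tfrac{P_1}{2}$ increments, each of size at most $\tfrac{4}{P_1}+o(1)$'' structure is established, the arithmetic $1+\tfrac{P_1}{2}\cdot\tfrac{4}{P_1}=3$ plus the vanishing correction $\tfrac{P_1B}{2N}$ is immediate, and the stated constant $4$ simply absorbs that correction.
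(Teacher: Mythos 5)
Your proof is correct and follows essentially the same route as the paper's: both arguments bound the number of increments to a fixed entry of $\mathbf{T}$ by the right-degree $\le P_1/2$, bound each increment by $T_{\max}/P_1$ plus the exponentially small term $\tfrac{(P_1-1)B}{P_1N}$, and then do the arithmetic. The only difference is in how the self-reference is closed: you use a first-violation contradiction (which in fact yields the slightly sharper bound $3+o(1)$), whereas the paper iterates the recursive inequality $T_{\max}'\le 2+T_{\max}/2$ to its fixed point $4$.
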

\begin{proof}
The recursive equation for $\mathbf{T}$ (Line~\ref{algline:T} in \Cref{alg:peeling}) is
\begin{gather*}
    \mathbf{T}_{c'}[j_{c'}]\leftarrow\mathbf{T}_{c'}[j_{c'}]+\frac{\mathbf{T}_c[j]}{P_1}+\frac{(P_1-1)B}{P_1N}.
\end{gather*}
The algorithm defines a time sequential order for each nonzero Pauli error rate to be detected.
For each step $i$, let $\mathbf{U}_{z_i}$ be the next bin in which we will find a nonzero Pauli rate. 
Let $T_{\max}$ be the current maximum element in a subset $\mathbf{T}_{\text{peeled}}$ of $\mathbf{T}$. 
Then $\mathbf{T}_{\text{peeled}}$ contains those $\mathbf{T}_c[j]$ of bins in which we have already found a nonzero Pauli error.
Also, we use $T_{\max}'$ to indicate the maximum $\mathbf{T}_c[j]$ that will exist after the next step.

According to the assumption, the maximum degree of an arbitrary right node is less than $\min(P_1/2,N/B)$, and the number of peels needed is never more than the maximum degree of that node. 
Note we assume that the previous bins have been accurately detected, so the process will always choose nonzero Pauli error rates (and the corresponding bins) to peel. 
The noise in the peeling bin will increase by at most $\frac{T_{\max}}{P_1}+\frac{(P_1-1)B}{P_1N}$. 
Each $\mathbf{T}$ is initialized as 1, and denote the number of peelings by $\kappa$. 
Therefore
\begin{align*}
    T_{\max}'&\leq 1+\kappa\cdot\left(\frac{T_{\max}}{P_1}+\frac{(P_1-1)B}{P_1N}\right)\\
    &\leq 1+\frac{T_{\max}}{2}+1\,.
\end{align*}
Then by induction, because initially the maximum element in $\mathbf{T}_{\text{peeled}}$ is equal to 0 and in each step this value will increase via the above formula, the maximum element we can get is the limit of this recursive inequality, which is  $T_{\max}\leq4$.
\end{proof}

According to the above two lemmas, the integer $T_{\max}$ indicates that the upper bound of the maximum element in the array $\mathbf{T}$ is not larger than 4 with high probability for a sufficiently large $N$.
In order to compute the failure rates and make the algorithm realizable, we assumed (as in Assumption \textbf{A3}) that the minimum nonzero Pauli error rate $\epsilon_0$ satisfies $\epsilon_0^2\geq4\nu^2=4\xi^2/B$, which makes a distinctive barrier between Pauli error rates and any noise.

In anticipation of applying the union bound, let us define the following error categories and their probabilities.
For brevity, we will denote a zero-ton, single-ton, or multi-ton by just the letters \texttt{z}, \texttt{s}, or \texttt{m}, and we denote the true value of the bin by $\mathfrak{B}$.

\begin{definition}[Failure modes for bin detection]\label{def_error_category}
The bin detection algorithm failure modes are defined as follows:
\begin{itemize}
    \item The single-ton false negative probability:
    \begin{align*}
        \Pr(\mathrm{SFN}) := \Pr(\widehat{\mathfrak{B}} = \texttt{z}\, |\, \mathfrak{B} = \texttt{s})+\Pr(\widehat{\mathfrak{B}} = \texttt{m}\, |\, \mathfrak{B} = \texttt{s})
    \end{align*}
    \item The  single-ton false positive probability:
    \begin{align*}
        \Pr(\mathrm{SFP}) := \Pr(\widehat{\mathfrak{B}} = \texttt{s}\, |\, \mathfrak{B} = \texttt{z})+\Pr(\widehat{\mathfrak{B}} = \texttt{s}\, |\, \mathfrak{B} = \texttt{m})
    \end{align*}
    \item The multi-ton $\leftrightarrow$ zero-ton confusion probability:
    \begin{align*}
        \Pr(\mathrm{MZ}) := \Pr(\widehat{\mathfrak{B}} = \texttt{z}\, |\, \mathfrak{B} = \texttt{m})+\Pr(\widehat{\mathfrak{B}} = \texttt{m}\, |\, \mathfrak{B} = \texttt{z})
    \end{align*}
    \item The index error probability:
    \begin{align*}
        \Pr(\mathrm{I}) :=\Pr(\widehat{\mathfrak{B}}=\texttt{s},\widehat{m}\not=m\, |\,\mathfrak{B}=\texttt{s}, m)
    \end{align*}
    \item The value error probability:
    \begin{align*}
        \Pr(\mathrm{V}) :=\Pr(\widehat{\mathfrak{B}}=\texttt{s},|\hat{p}_{\hat{m}} - p_m|> 2\xi/\sqrt{B} \,|\,\mathfrak{B}=\texttt{s}, m, p_m)
    \end{align*}
\end{itemize}
\end{definition}

Of course these probabilities are not all independent. 
However, by the union bound it suffices to bound each of these bad events individually and the total failure probability will be at most the sum of the probabilities of these failure modes. 
We will show that all of these failure probabilities decay exponentially with $P_1$, the number of randomly chosen offsets. 

\begin{lemma}\label{lm:failure_bound}
Let $E$ denote the event that an arbitrary bin detection with inputs as those in \Cref{alg:peeling} returns either the wrong bin type, the wrong index or an estimated Pauli error rate with error larger than $2\nu = 2\xi/\sqrt{B}$. 
Let $D$ be the event that the maximum degree of the right nodes in $G$ is not larger than $P_1/2$. 
Let $V$ be the event that all prior bin detections succeeded. 
And denote the event that every peeling route is cycle-free by $H$.
Then
\begin{gather}
    \Prob{E|D,V,H} \leq O(1) \mathrm{e}^{-O(n)}.
\end{gather}
\end{lemma}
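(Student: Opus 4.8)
The plan is to bound each of the five failure modes in \Cref{def_error_category} separately and then conclude by the union bound, since the event $E$ is contained in the union of these bad events. Throughout I work under the conditioning on $D$, $V$ and $H$, which is exactly what licenses the three inputs I will use repeatedly: by \Cref{lm:variance} every offset noise $W_{c;t}[j]$ in the bin under consideration is a mean-zero Gaussian of variance $T\nu^2$; by \Cref{lm:independence} the $P_1$ random-offset noises are mutually independent; and by \Cref{maximum_T} (whose hypothesis is $D$) the integer $T$ satisfies $T\le 4$. Together with Assumption \textbf{A3} in the form $\epsilon_0^2\ge 4\nu^2$, this fixes all of the numerology needed to apply the Gaussian tail bound \Cref{lem:tailbound}, which is the workhorse of the proof.

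First the bin-type tests. Write $\nu'^2:=T\nu^2$. For a true zero-ton there is no signal, so $\theta_0=0$ and \eqref{tailbound1} with $\tau_1=T(1+\gamma_1)\nu^2\ge\nu'^2$ bounds $\Pr(\text{not called }\texttt{z}\mid\mathfrak B=\texttt{z})$ by $\mathrm e^{-\Omega(P_1)}$; this settles the $\mathfrak B=\texttt{z}$ contributions to $\Pr(\mathrm{SFP})$ and $\Pr(\mathrm{MZ})$ simultaneously. For a true single-ton with rate $p_m$ the signal energy gives $\theta_0=p_m^2/(T\nu^2)\ge\epsilon_0^2/(4\nu^2)\ge 1>\gamma_1$, so \eqref{tailbound2} with $\tau_2=T(1+\gamma_1)\nu^2$ bounds $\Pr(\widehat{\mathfrak B}=\texttt{z}\mid\mathfrak B=\texttt{s})$ by $\mathrm e^{-\Omega(P_1)}$. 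For a true multi-ton, with $\ge 2$ distinct indices, a Hoeffding bound over the at most $P_1/2$ random-offset sign vectors $\bigl((-1)^{\langle\mathbf d_{c;t},m_i\rangle}\bigr)_{t\in[P_1]}$ (at most $P_1/2$ by $D$) shows that except with probability $\mathrm e^{-\Omega(P_1)}$ all pairwise inner products are $O(\sqrt{P_1})$; hence $\theta_0\ge 2\epsilon_0^2(1-o(1))/(4\nu^2)>\gamma_1$ and \eqref{tailbound2} again bounds $\Pr(\widehat{\mathfrak B}=\texttt{z}\mid\mathfrak B=\texttt{m})$ by $\mathrm e^{-\Omega(P_1)}$.

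Next, the single-ton verification step, which feeds the $\widehat{\mathfrak B}=\texttt{m}$ branch of $\Pr(\mathrm{SFN})$ and the $\mathfrak B=\texttt{m}$ branch of $\Pr(\mathrm{SFP})$. If $\mathfrak B=\texttt{s}$, condition on the decoder returning $\widehat m=m$ (the complement is the index error, bounded below); then the residual $U_{c;t}-(-1)^{\langle\mathbf d_{c;t},\widehat m\rangle}\widehat p_{\widehat m}$ equals $(-1)^{\langle\mathbf d_{c;t},m\rangle}$ times the projection of the independent offset-noise vector onto the $(P_1-1)$-dimensional complement of the sign direction, so it carries no signal and \eqref{tailbound1}, applied in dimension $P_1-1$ with $\tau_1=\tfrac{P_1}{P_1-1}T(1+\gamma_2)\nu^2\ge\nu'^2$, shows the single-ton test passes except with probability $\mathrm e^{-\Omega(P_1)}$. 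If instead $\mathfrak B=\texttt{m}$, then whatever $\widehat m$ the decoder outputs, at least one of the $\ge 2$ planted indices differs from it, so on the near-orthogonality event of the previous paragraph the residual retains signal energy at least $\epsilon_0^2 P_1(1-o(1))>T\gamma_2\nu^2 P_1$ (using $\epsilon_0^2\ge 4\nu^2>4\gamma_2\nu^2\ge T\gamma_2\nu^2$ from $T\le 4$ and $\gamma_2<1$), and \eqref{tailbound1} then forces the single-ton test to be rejected except with probability $\mathrm e^{-\Omega(P_1)}$, so the bin is correctly called \texttt{m}.

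Finally the index and value errors. By \Cref{lm:MLE}, with $T\le 4$, $\epsilon_0^2\ge 4\nu^2$, and monotonicity of $\mathbb P_m$ in $p_m$, each coded-offset sign bit is flipped, independently, with probability at most a fixed constant $\mathbb P<\tfrac12$ lying below the threshold required by \Cref{SO}; a Chernoff bound then shows the number of flipped coded-offset signs exceeds the correction radius of the $[P_2,2n,\beta P_2]$ code except with probability $\mathrm e^{-\Omega(P_2)}$, so $\widehat m=m$ and $\Pr(\mathrm I)\le\mathrm e^{-\Omega(P_2)}$. Conditioned on $\widehat m=m$, the estimate satisfies $\widehat p_{\widehat m}-p_m=\frac{1}{P_1}\sum_{t\in[P_1]}(-1)^{\langle\mathbf d_{c;t},m\rangle}W_{c;t}$, a mean-zero Gaussian of variance at most $4\nu^2/P_1$, so a Gaussian tail bound gives $\Pr(\mathrm V)\le 2\mathrm e^{-P_1/2}+\Pr(\mathrm I)$. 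Summing the five contributions and using $P_1=\Theta(n)$ and $P_2=\Theta(n)$ from \Cref{SO} yields $\Prob{E|D,V,H}\le O(1)\,\mathrm e^{-O(n)}$. I expect the main obstacle to be the $\mathfrak B=\texttt{m}$ branch of the single-ton verification: one must rule out that the code decoder's (essentially arbitrary) output $\widehat m$, paired with the least-squares amplitude, accidentally annihilates most of the residual energy, and this is where the uniform near-orthogonality estimate for the random-offset sign vectors—taken over all pairs of the (at most $P_1/2$) indices hashed into the bin—does the real work.
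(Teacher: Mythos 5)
Your overall architecture matches the paper's: condition on $D$, $V$, $H$ to invoke \Cref{lm:variance}, \Cref{lm:independence} and \Cref{maximum_T}, bound each failure mode of \Cref{def_error_category} via the tail bounds of \Cref{lem:tailbound} and the code-distance argument, and finish with a union bound over the $\mathrm{e}^{-\Omega(P_1)}$ and $\mathrm{e}^{-\Omega(P_2)}$ pieces. The zero-ton, single-ton, index and value branches are all handled essentially as in the paper (your projection view of the single-ton residual is a clean variant of the paper's conditioning on the event $E_0^c$).

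The gap is in the two multi-ton branches ($\Pr(\widehat{\mathfrak B}=\texttt{z}\mid\mathfrak B=\texttt{m})$ and $\Pr(\widehat{\mathfrak B}=\texttt{s}\mid\mathfrak B=\texttt{m})$), exactly where you predicted the "real work" would be. Your plan is to lower-bound the signal energy $\|\mathbf g\|^2=\|\mathbf S_c\alpha\|^2$ by showing all pairwise inner products of the sign vectors $\mathbf s_{c,m_i}$ are $O(\sqrt{P_1})$ except with probability $\mathrm e^{-\Omega(P_1)}$. These two requirements are incompatible: $\langle\mathbf s_{c,m_i},\mathbf s_{c,m_j}\rangle$ is a sum of $P_1$ i.i.d.\ signs, so Hoeffding gives $\Pr(|\langle\mathbf s_i,\mathbf s_j\rangle|>\tau)\leq 2\mathrm e^{-\tau^2/2P_1}$; an exponentially small failure probability forces $\tau=\Omega(P_1)$, which is vacuous, while $\tau=O(\sqrt{P_1})$ only holds with constant failure probability. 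At the achievable scale $\tau=\epsilon P_1$ the cross terms can be as large as $\epsilon P_1(L-1)\sum_i\alpha_i^2$ against a diagonal of $P_1\sum_i\alpha_i^2$, and since the bin degree $L$ is only controlled by $L\leq P_1/2$ under $D$, the cross terms are not dominated unless $\epsilon=O(1/L)=O(1/n)$, destroying the exponential rate. The paper avoids this worst-case pairing entirely: it works with the per-coordinate energy $X_i=(\sum_{j\in\mathcal L_0}(-1)^{\langle d_i,j\rangle}\alpha_j)^2$, whose mean is $\geq L\epsilon_0^2$ because the cross terms vanish in expectation, and then bounds $\Var{R_i}$ of the cross part by a constant multiple of $\mathbb E[X_i]^2$ using a balls-and-bins count of the resonant four-tuples $u+v+w+x=0$ in the null space of $\mathbf M_c^T$; the resulting relative deviation is $\Theta(1)$ independently of $L$, so Hoeffding over the $P_1$ independent coordinates delivers $\mathrm e^{-\Omega(P_1)}$ uniformly in the bin occupancy. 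You would need to replace your near-orthogonality step with a second-moment argument of this kind (or otherwise exploit that the cross terms are centered) for the lemma to go through.
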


\begin{proof}
This theorem means that the bin detector algorithm succeeds with high probability whenever $D$, $V$ and $H$ occur. 
To show it, we have to bound the failure probabilities for each failure mode of the bin detection algorithm and then apply the union bound. 
We will prove most of our statements by bounding failure probabilities with expressions of the form $\mathrm{e}^{-O(P_1)}$. 
This is equivalent to a bound of the form $\mathrm{e}^{-O(n)}$ since $P_1 = \Theta(n)$ by \Cref{SO}. 
Also note that conditioning on events $D$, $V$ and $H$ allows the use of \Cref{lm:variance} and \Cref{maximum_T}, specifically that the variance of noise in each row of this bin is $T\nu^2$ from \Cref{lm:variance}, and that this $T$ is no more than $4$ according to \Cref{maximum_T}. 

We first consider the \textit{single-ton false negative} probability in \Cref{def_error_category}. 
Note in this case the underlying bin contains only one Pauli error rate along with noise, that is
\begin{gather}\label{underlying}
    \mathbf{U}_c = p_m (-1)^{\langle \mathbf{D}_c,m\rangle} +\mathbf{W}.
\end{gather}
Let $f_1 = \Prob{\widehat{\mathfrak{B}}=\texttt{z}\,|\,\mathfrak{B}=\texttt{s}}$. 
Then by Line~\ref{algline:zero-ton} in \Cref{alg:bin_detect}, the probability can be upper bounded by the probability of a single-ton bin passing the zero-ton verification:
\begin{align*}
	&f_1
	\leq \Prob{\frac{1}{P_1}\left\|p_m \mathbf{s}_{c,m}+\mathbf{W}\right\|^2\leq T(1+\gamma_1)\nu^2},
\end{align*}
where $\mathbf{s}_{c,m}$ is the vector such that $\mathbf{s}_{c,m}[t] = (-1)^{\langle m,d_{c;t}\rangle}$. 
Since here the noise vector $\mathbf{W}$ comes from the sum of noise $w$, it's obvious that all the elements of $\mathbf{W}$ are Gaussian distributed with variance $T\nu^2$.
Therefore, according to the tail bounds of \Cref{lem:tailbound}, the following holds as long as $\gamma_1<\epsilon_0^2/T\nu^2$.
\begin{align}
	f_1
	\leq \mathrm{e}^{-\frac{P_1}{4}\frac{\left(\epsilon_0^2/T\nu^2-\gamma_1\right)^2}{1+2\epsilon_0^2/T\nu^2}}.
\end{align}

Now let $f_2 = \Prob{\widehat{\mathfrak{B}}=\texttt{m}|\mathfrak{B}=\texttt{s}}$. 
This kind of failure happens if and only if the single-ton bin fails during single-ton verification,
\begin{gather*}
    	f_2=\Prob{\frac{1}{P_1}\left\|\mathbf{U}_c-\widehat{p}_{\widehat{m}}\mathbf{s}_{c,\widehat{m}}\right\|^2 \geq  T(1+\gamma_2)\nu^2}.
\end{gather*}
Considering the underlying structure of this bin \eqref{underlying}, this probability can be bounded using a conditional probability. 
We first denote the event $\{|\widehat{p}_{\widehat{m}} - p_m|>\sqrt{4\nu^2}~\textrm{or}~\widehat{m}\neq m\}$ by $E_0$. 
Then we observe 
\begin{align*}
	f_2&\leq
	\Prob{E_0}	+\Prob{\frac{1}{P_1}\left\|\mathbf{U}_c-\widehat{p}_{\widehat{m}}\mathbf{s}_{c,\widehat{m}}\right\|^2 \geq  T(1+\gamma_2)\nu^2 \Big|E_0^c}.
\end{align*}
Using the tail bound \eqref{tailbound1}, we have that
\begin{align}
	&\Prob{\frac{1}{P_1} \left\|\mathbf{U}_c-\widehat{p}_{\widehat{m}}\mathbf{s}_{c,\widehat{m}}\right\|^2 \geq  T(1+\gamma_2)\nu^2 \Big|E_0^c}\notag\\
	&\leq \mathrm{e}^{-\frac{P_1}{4}\left(\sqrt{1+2\gamma_2}-\sqrt{1+2\times4/T}\right)^2}.\label{etabound}	
\end{align}
Then using union bound, we can deal with the first term
\begin{align}
	\Prob{E_0}&\leq
	\Prob{|\widehat{p}_{\widehat{m}} - p_m|>\sqrt{4\nu^2}}+\Prob{\widehat{m}\neq m}\notag\\
	&\leq\Prob{|\widehat{p}_{\widehat{m}} - p_m|>\sqrt{4\nu^2}\Big|\widehat{m}=m}
	+2\Prob{\widehat{m}\neq m}.\label{prob law}
\end{align}
Note above, the estimated Pauli error rate can be calculated according to \Cref{alg:bin_detect}, so we obtain the bound
\begin{align}
	&\Prob{|\widehat{p}_{\widehat{m}} - p_m|>\sqrt{4\nu^2}\Big|\widehat{m}=m}\notag\\
	=&\Prob{\left|\frac{\mathbf{s}_{c,m}^T\mathbf{U}_c}{P_1}-p_m\right|>\sqrt{4\nu^2}}\notag\\
	=&\Prob{|Y|/P_1>\sqrt{4\nu^2}}\leq2\mathrm{e}^{-\frac{4P_1}{2T}},\label{bound}
\end{align}
where $Y$ is the sum of $P_1$ i.i.d.\ Gaussian variables with $\mathcal{N}\left(0,\left(T+\frac{(P_1-1)B}{N}\right)\nu^2\right)$ like \eqref{eq:variance of hat_p}, and the last inequality comes from the Chernoff-Hoeffding bound \cite{Hoeffding1963}. 
According to \Cref{maximum_T}, exponents in \eqref{etabound} and \eqref{bound} are both scaling linearly with $P_1$, thus the probabilities decay exponentially with $P_1$.

Since the second term in \eqref{prob law}, $\Prob{\widehat{m}\neq m}$, is essentially the probability of the index error, the failure probability of such a decoding process also decays exponentially with $P_2$. 
In accordance with \eqref{eq:sign_code}, the sign vector $[\sgn{U_{P_1}},\cdots,\sgn{U_{P-1}}]^T$ is the sum of a codeword $\langle\mathbf{G},m\rangle$ and a vector of noise. 
Since the decoding process fails only if the weight of the noise is larger than the code distance $\beta P_2$ and each element of the noise is an independent Bernoulli random variable with error probability upper bounded by $\mathbb{P}$, the index error probability can be bounded by the Chernoff-Hoeffding bound:
\begin{gather}\label{crossed_error}
    \Prob{\widehat{m}=m}\leq \mathrm{e}^{-\frac{(\beta/\mathbb{P}-1)^2}{3}P_2}.
\end{gather}
Moreover, as noted in Remark~\ref{strictly_small}, we have $\mathbb{P}_m< \frac{1}{2}$ for all $m\in\mathbb{F}_2^{2n}$. 
Given the assumptions of $D$, $V$ and $H$, we choose the maximum $\mathbb{P}_m$ to be $\mathbb{P} = \max_{m} \mathbb{P}_m$.
Therefore, using the law of total probability we have
\begin{align}
    f_2\leq& \mathrm{e}^{-\frac{P_1}{4}\left(\sqrt{1+2\gamma_2}-\sqrt{1+2\times4/T}\right)^2}\notag
    +2\mathrm{e}^{-\frac{4P_1}{2T}}\\
    &+2\mathrm{e}^{-\frac{(\beta/\mathbb{P}-1)^2}{3}P_2}.\label{eq:f_2}
\end{align}
Recall from \Cref{SO} that $P_1$ and $P_2$ are proportional to $n$, so we have a bound $f_2 = \mathrm{e}^{-O(n)}$. 

We now turn to the case that the bin detection algorithm incorrectly recognizes a zero-ton or a multi-ton bin as a single-ton bin, i.e., we consider the \textit{single-ton false positive} probability. 
For this, we need to consider the general underlying bin structure
\begin{align}
	\mathbf{U}_c = \mathbf{S}_c\mathbf{p}+\mathbf{W},
\end{align}
where $\mathbf{U}_c$ is either zero-ton or multi-ton, and only contains the $P_1$ fully random offsets when choosing as in \Cref{SO}. 
Here $\mathbf{S}_c\in\{\pm 1\}^{P_1\times N/B}$ is the sign matrix constructed according to \Cref{lm:rewritten_prop_hashing_obs}.

Now consider the probability of the bin detector falsely detecting a zero-ton as a single-ton, and denote $\Prob{\widehat{\mathfrak{B}}=\texttt{s}|\mathfrak{B}=\texttt{z}}$ by $f_3$. 
By Line~\ref{algline:zero-ton} in \Cref{alg:bin_detect}, the probability of $f_3$ can be bounded by the probability of zero-ton verification failing
\begin{gather*}
    f_3\leq \Prob{\frac{1}{P_1}\left\|\mathbf{W}\right\|^2\geq T\times (1+\gamma_1)\nu^2}.
\end{gather*}
According to the tail bound \eqref{tailbound1}, this failure probability can be bounded by an exponentially decaying function
\begin{align}
    f_3\leq \mathrm{e}^{-\frac{P_1}{4}(\sqrt{1+2\gamma_1}-1)^2}.
\end{align}

Now let $f_4 = \Prob{\widehat{\mathfrak{B}}=\texttt{s}|\mathfrak{B}=\texttt{m}}$. 
This error probability can be evaluated under the multi-ton model when it passes the single-ton verification step for some estimated index-value pair $(\widehat{m},\widehat{p}_{\widehat{m}})$. 
Using Line~\ref{algline:single-ton_verification} in \Cref{alg:bin_detect}, 
\begin{gather*}
    f_4\leq\Prob{\frac{1}{P_1}\left\|\mathbf{U}_c-\widehat{p}_{\widehat{m}}\mathbf{s}_{c,\widehat{m}}\right\|^2 \leq T\times(1+\gamma_2)\nu^2}.
\end{gather*}
Let 
\begin{align}
    \mathbf{g} = \mathbf{S}_c\mathbf{p}-\widehat{p}_{\widehat{m}}\mathbf{s}_{c,\widehat{m}},
\end{align}
and let the sample error be $\mathbf{W} = \mathbf{k}$.
Then the law of total probability can be used as follows:
\begin{align}
    f_4=&\Prob{\frac{1}{P_1}\left\|\mathbf{g}+\mathbf{k}\right\|^2\leq T(1+\gamma_2)\nu^2}\notag\\
    \leq& \Prob{\frac{1}{P_1}\left\|\mathbf{g}+\mathbf{k}\right\|^2 \leq  T(1+\gamma_2)\nu^2\Big|\frac{\left\|\mathbf{g}\right\|^2}{P_1}\geq 2T\gamma_2\nu^2}\notag\\
	&\ +\Prob{\frac{\left\|\mathbf{g}\right\|^2}{P_1}\leq 2T\gamma_2\nu^2}.
\end{align}
Note that the first term can be bounded by \eqref{tailbound2} since the conditional part shows the lower bound of the parameter $\theta_0$ as defined in \Cref{lem:tailbound}
\begin{align}
    &\Prob{\frac{1}{P_1}\left\|\mathbf{g}+\mathbf{k}\right\|^2 \leq  T(1+\gamma_2)\nu^2\Bigg|\frac{\left\|\mathbf{g}\right\|^2}{P_1}\geq 2T\gamma_2\nu^2}\notag\\
    &\leq \mathrm{e}^{-\frac{P_1}{4}\frac{\gamma_2^2}{1+4\gamma_2}}.
\end{align}
The second term can be bounded as follows. 
Let $\alpha = \mathbf{p}-\widehat{p}_{\widehat{m}}\mathbf{e}_{\widehat{m}}$, and we have
\begin{gather*}
    \Prob{\frac{\left\|\mathbf{g}\right\|^2}{P_1}\leq 2T\gamma_2\nu^2}
	=\Prob{\frac{\left\|\mathbf{S}_c\mathbf{\alpha}\right\|^2}{P_1}\leq 2T\gamma_2\nu^2}.
\end{gather*}
Here $\mathbf{e}_{k}$ is the vector with support only on the $k^\text{th}$ element.
We denote the support set of the vector $\alpha$ by $\mathcal{L}_0$, and define the $\epsilon_0$-essential support of $\mathbf{\alpha}$ to be 
\begin{align}
    \mathcal{L} = \bigl\{ i \in \mathbb{F}_{2}^{2n}\,\big|\, |\alpha_i| \geq \epsilon_0 \bigr\}.
\end{align}
Denote the cardinality of $\mathcal{L}$ by $L$. 
Then the above probability can be bounded by an application of the Chernoff-Hoeffding bound.

With the same argument as in the proof of \Cref{lm:independence}, the sample error in each row in vector $\mathbf{g}$ is independent, and so is the square of that error. 
When we calculate $\|\mathbf{g}\|^2$, we can regard it as a sum of $P_1$ independent random variables. 
Also, each term in this sum contains the same structure, and identically distributed parameters, so we can claim each term is identically distributed. 

Therefore, we first analyze the expected value $E$ of each variable in this sum. 
Take one of these terms $X_i$ as an example,
\begin{gather}\label{eq:definition_of_term}
    X_i:=\left(\sum_{j\in\mathcal{L}_0}(-1)^{\langle d_i,j\rangle}\alpha_j\right)^2,
\end{gather}
where $\{d_i\}$ is a set of independent random $2n$-bit strings.
The expected value $E$ of $X_i$ satisfies the following bound
\begin{align}
    E=\mathbb{E}(X_i)\geq L\epsilon_0^2.
\end{align}
Note that above we used the fact that any random strings are independent.

Moreover, since we want to show this term will be large with high probability, we should consider the random cross terms in each $X_i$, and that is
\begin{align}
    R_i=\sum_{\substack{u>v\\u,v\in\mathcal{L}_0}}(-1)^{\langle d_i,u+v\rangle}2\alpha_u\alpha_v,
\end{align}
where the order is in lexicographical order. 
Note the remaining part of $X_i$ is a deterministic one, so we only calculate the variance for this $R_i$. 
It's straightforward that we have
\begin{align}
    \Var{R_i}=\mathbb{E}[R_i^2],
\end{align}
and the only contributed terms are those without random signs in $R_i^2$.
For example, if we consider a specific $u,v$ and the term $(\alpha_u\alpha_v)\times(\alpha_w\alpha_x)$ with some $(w,x)\neq(u,v)$ (assume $w>x$), this term will contribute to the expected value only if $w+x+u+v=0$. 
Therefore, the only potential effective terms are those with four different Pauli error rates.
Moreover, since $w+x+u+v$ must be in the null-space of $M_c^T$ according to \Cref{lm:prop_hashing_obs}, of which the size is $\frac{N}{B}=\frac{1}{\eta}\mathrm{e}^{(1-\delta)n}$, we can estimate this probability using a Balls and Bins model.

Regardless of the square terms, the number of potential terms is $\binom{L}{4}$. 
Let $G_i$ be the probability that the number of terms in bin 0 is at least a chosen constant $\eta_0$. 
Then $G_i$ can be bounded as
\begin{align*}
    \Prob{G_i}\leq \binom{L}{4}\times\left(\frac{B}{N}\right)^{\eta_0}=\binom{L}{4}\cdot\frac{\mathrm{e}^{-(1-\delta)n\eta_0}}{\eta},
\end{align*}
which is decaying exponentially with $n$. 
Given that the complementary event $G_i^c$ happens and defining the set of contributing terms to be $\mathcal{A}_i$, the variance of $R_i$ is the sum
\begin{align*}
    \Var{R_i}=&\sum_{\substack{(u,v,w,x)\in\mathcal{A}_i\\u>v>w>x}}8\alpha_u\alpha_v\alpha_w\alpha_x+\sum_{\substack{u>v\\u,v\in\mathcal{L}_0}}4\alpha_u^2\alpha_v^2
    \end{align*}
Then by Cauchy-Schwarz we have $(\alpha_u \alpha_v)^2 + (\alpha_w \alpha_x)^2 \geq 2 \alpha_u \alpha_v \alpha_w \alpha_x$. 
Averaging this over the other distinct partitions and using $|\mathcal{A}_i| \le\eta_0$, we find
\begin{align*}
    \Var{R_i}\leq& \frac{\eta_0+3}{3}\sum_{\substack{u>v\\u,v\in\mathcal{L}_0}}4\alpha_u^2\alpha_v^2.
\end{align*}
Now we can use the Hoeffding bound to obtain
\begin{align*}
    &\Prob{\frac{\left\|\mathbf{g}\right\|^2}{P_1}\leq 2T\gamma_2\nu^2}\\ 
   \leq &\Prob{\frac{\left\|\mathbf{g}\right\|^2}{P_1}\leq 2T\gamma_2\nu^2\Bigg|G^c}+\Prob{G}\\
    \leq&\mathrm{e}^{-\frac{3P_1(L\epsilon_0^2-2T\gamma_2\nu^2)^2}{2(3+\eta_0)L^2\epsilon_0^4}}+O\left(P_1L^4\mathrm{e}^{-(1-\delta)n\eta_0}\right).
\end{align*}
The last inequality uses the fact that 
\begin{align*}
    \mathbb{E}[X_i]^2\geq \sum_{\substack{u>v\\u,v\in\mathcal{L}_0}}4\alpha_u^2\alpha_v^2\geq2L(L-1)\epsilon_0^4\,.
\end{align*}
For any nontrivial signal, we have that $1\leq L<P_1/2$. 
As long as $0<\gamma_2<\epsilon_0^2/2T\nu^2$ and choosing $\eta_0=6$, for any multi-ton we have 
\begin{align}
 	f_4\leq&\mathrm{e}^{-\frac{P_1}{4}\frac{\gamma_2^2}{1+4\gamma_2}}+\mathrm{e}^{-\frac{P_1(\epsilon_0^2-2T\gamma_2\nu^2)^2}{6\epsilon_0^4}}+O\left(\mathrm{e}^{-6(1-\delta)n}\right).
\end{align}
Therefore $f_4 \le O(1) \mathrm{e}^{-O(n)}$. 

Next we will consider the \textit{multi-ton--zero-ton confusion} probability
\begin{align*}
    \Pr(\mathrm{MZ}) := \Pr(\widehat{\mathfrak{B}} = \texttt{z}\, |\, \mathfrak{B} = \texttt{m})+\Pr(\widehat{\mathfrak{B}} = \texttt{m}\, |\, \mathfrak{B} = \texttt{z}).
    \end{align*}
Denote the first term $\Pr(\widehat{\mathfrak{B}} = \texttt{z}\, |\, \mathfrak{B} = \texttt{m})$ by $f_5$ and the second $\Pr(\widehat{\mathfrak{B}} = \texttt{m}\, |\, \mathfrak{B} = \texttt{z})$ by $f_6$. 
For $f_5$, recognizing a multi-ton as a zero-ton, we have the following inequality, 
\begin{align*}
    f_5\leq\Prob{\frac{1}{P_1}\left\|\mathbf{U}\right\|^2 \leq T\times(1+\gamma_1)\nu^2}.
\end{align*}
Note this probability can be analyzed in just the same way as $f_4$, and the only difference is that when we consider $f_5$, the $\alpha$ is just based on several underlying Pauli error rates without any subtraction, so $L\geq 2$ for this case. 
As long as $0<\gamma_1<\epsilon_0^2/T\nu^2$, then for any multi-ton we have the bound
\begin{align}
    f_5\leq&\mathrm{e}^{-\frac{P_1}{4}\frac{\gamma_1^2}{1+4\gamma_1}}+\mathrm{e}^{-P_1\frac{(\epsilon_0^2-T\gamma_1\nu^2)^2}{6\epsilon_0^4}}+O\left(\mathrm{e}^{-6(1-\delta)n}\right),
\end{align}
so $f_5 \leq O(1) \mathrm{e}^{-O(n)}$. 
Moreover, it is clear that the failure probability of recognizing a zero-ton bin as a multi-ton bin, namely $f_6$, is smaller than $f_3$.

Next, consider the \textit{index error} probability, and denote $\Prob{\widehat{\mathfrak{B}}=\texttt{s},\widehat{m}\neq m|\mathfrak{B}=\texttt{s},m}$ by $f_7$. 
This probability can be bounded by the probability of estimating a wrong index $\widehat{m}$ and some Pauli error rate, and still passing the single-ton verification 
\begin{align}
    f_7&\leq\Prob{(\widehat{m}\neq m)~\wedge~(\widehat{m},\widehat{p}_{\widehat{m}}){\rm\ passes\ verification}}\notag\\
    &\leq \Prob{\widehat{m}\neq m}\leq  \mathrm{e}^{-\frac{(\beta/\mathbb{P}-1)^2}{3}P_2}. \label{eq:f_7}
\end{align}
Note the last inequality is just \eqref{crossed_error}, and according to Remark~\ref{strictly_small}, $\mathbb{P}_m< \frac{1}{2}$ for all $m\in\mathbb{F}_2^{2n}$ given that events $D$ and $V$ happen and we choose the maximum one to be $\mathbb{P}$.

Finally, let's consider the \textit{value error} probability, and denote $\Pr(\widehat{\mathfrak{B}}=\texttt{s},|\hat{p}_{\hat{m}} - p_m|> \sqrt{4\nu^2} |\mathfrak{B}=\texttt{s}, m, p_m)$ by $f_8$. 
Note that we have chosen $\sqrt{4\nu^2}$ as the error bound for the Pauli error rate, so similar to the \textit{index error} probability, this $f_8$ can be bounded by the probability of estimating a noisy Pauli error rate and passing the single-ton verification. 
We can loosen this bound by only considering the first event, and we obtain the inequality
\begin{align}
    f_8&\leq\Prob{|\widehat{p}_{\widehat{m}} - p_m|>\sqrt{4\nu^2}\Big|\widehat{m}=m}
	+\Prob{\widehat{m}\neq m}\notag\\
	&\leq2\mathrm{e}^{-\frac{4P_1}{2T}}+\mathrm{e}^{-\frac{(\beta/\mathbb{P}-1)^2}{3}P_2} \leq 3 \mathrm{e}^{-O(n)}. \label{eq:f_8}
\end{align}
Note the middle inequality comes from a combination of \eqref{bound} and \eqref{crossed_error}. 
According to Remark~\ref{strictly_small}, we again have $\mathbb{P}_m< \frac{1}{2}$ for all $m\in\mathbb{F}_2^{2n}$ given that events $D$, $V$ and $H$ happen and we choose the maximum one to be $\mathbb{P}$.

Following the taxonomy in \Cref{def_error_category}, we have treated all of the failure cases of the bin detector algorithm. 
Using the union bound, we can get the following inequality
\begin{gather*}
    \Prob{E}\leq\sum_{i=1}^8f_i.
\end{gather*}
As we illustrated at the beginning of this proof, events $D$, $V$ and $H$ have shown that the variance of the noise in each row of this bin is $T\nu^2$ from \Cref{lm:variance}, and that this $T$ is no more than $4$ according to \Cref{maximum_T}. 
Furthermore, they imply that $\theta_m$ is strictly smaller than $\frac{1}{2}$ for all $m\in\mathbb{F}_{2}^{2n}$ in \eqref{eq:f_2}, \eqref{eq:f_7} and \eqref{eq:f_8}. 
Since constraining the peeling graph $G$ to obey this event is independent of the above analysis of the failure probabilities of the bin detector, it follows that
\begin{gather*}
    \Prob{E|D,V,H}\leq O(1) \mathrm{e}^{-O(P_1)}.
\end{gather*}
This completes the proof.
\end{proof}

\section{Conclusion}\label{sec:conclusion}

We have shown that for sparse Pauli channels we can learn all the significant Pauli errors, even those associated with high-weight Pauli strings, using realistic experimental resources that scale with the sparsity of the Pauli errors rather than the dimension. 
In particular we have demonstrated that using only a few local two-qubit gates and a number of quantum experiments that scales linearly (with a factor of about 4), we can recover up to $4^{\delta n}$ of the largest error rates, where $\delta \lesssim 0.25$. 
Our numerical analysis indicates in the regime where  $0.25 < \delta < 0.5$ we can still recover these errors with a number of experiments that only scales as $O(n^2)$.

We support these experimental protocols by defining and analyzing an algorithm with rigorous performance guarantees. 
This provable algorithm confirms that, with explicitly stated assumptions, high-precision reconstruction is possible when querying only a number of Pauli eigenvalues that scales like $O(sn)$. 
Moreover, the heuristic practical circuits used above are able to approximate the relevant noisy eigenvalue queries with sufficient precision $\xi$ using only $O(n^2/\xi^2)$ measurements. 
These circuits exploit the protocols presented in \cite{Flammia2019} and \cite{Harper2019} to learn up to $2^n$ commuting Pauli eigenvalues per experiment, and greatly reduces the required experimental resources.

This work provides an experimentally realizable method of identifying the relevant Pauli errors in large-scale quantum devices even if there are unexpected long-range correlations between the qubits. 
The ability to do so will be vital as we seek to mitigate the errors in such devices, to learn the noise patterns that exist when such devices are operated holistically and will allow better designing and tailoring of error correction and fault tolerance in such devices.

Many interesting open questions remain. 

For example, what about very large scale devices? 
The practicality of the algorithm in the regime of greater than (say) $30$ qubits, where memory storage becomes an issue, could potentially be addressed as follows. 
We keep the protocol executed on the device identical as system size increases. 
However, we can take advantage of the fact that the WHT commutes with the marginalization of the observed probabilities and the process of fitting required to ascertain the SPAM-free eigenvalues (see \cite{Harper2019}). 
The actual observations only require $n$ bits of data to store. 
We can, therefore, marginalize the observations to obtain overlapping sets of $2^m$ eigenvalues (where we choose $m$ to be the largest computationally tractable number for our classical computer). 
This will mean that we have multiple sets of $2^m$ bins, each potentially containing $2^{2n-m}$ Pauli error rates.
Given this, the $s$-sparse assumption now becomes $s<2^m$. 
It would be extremely interesting to implement this version of the algorithm on real data. 

Another approach to dealing with very large scale devices is to incorporate our algorithm as a subroutine in a larger algorithm that builds a globally consistent Pauli error distribution from estimations of marginal error rates. 
For example, as proposed in Ref.~\cite{Flammia2019}, one could efficiently estimate a Markov random field description of a Pauli channel if the underlying graphical model has bounded degree correlations. 
This idea has been performed experimentally on $14$ qubits in Ref.~\cite{Harper2019}. 
We believe using the algorithm presented here would improve the estimation of the core subroutines and lead to better performance of the global reconstruction. 

There are also several open mathematical questions about the reconstruction of sparse (or approximately sparse) Pauli channels. 
For example, it would be interesting to relax the random sparsity assumption on the support, or to allow for prior information in the distribution of the support. 
It would also be interesting to treat more general noise on the eigenvalue oracle. 
In particular, treating the case of noise with bounded variance seems to be the most relevant for providing recovery guarantees that relate to practical experimental capabilities. 
It might also be possible to weaken our assumptions about the signal-to-noise ratio. 
A lower bound would help to clarify where the limits are to these types of algorithms. 

A further important open question is understanding the power of the structured circuits that we use for eigenvalue estimation. 
When using shallow depth Clifford circuits to prepare stabilizer bases for eigenvalue estimation, what recovery guarantees are possible? 
Is it still possible to efficiently reconstruct arbitrary sparse Pauli channels? 
Our heuristics suggest that pseudo-random and relatively shallow Clifford circuits allow sufficient randomness in the support that the algorithm can still have provable convergence guarantees, but it would be interesting to establish this rigorously. 

Finally, the most important open problem is to use our algorithms on real experiments to characterize noise, improve calibration of a device, or customize an error correction procedure.

\acknowledgements
This work was supported by the US Army Research Office grant numbers W911NF-14-1-0098 and W911NF-14-1-0103, and the Australian Research Council Centre of Excellence for Engineered Quantum Systems (EQUS) grant number CE170100009.

\appendix

\section{Walsh-Hadamard ordering}\label{app:WHTordering}
In this paper we use a variant of the Walsh-Hadamard transform where the ordering is determined by the commutation relations between the Paulis. 
The natural (bit-wise) ordering of a WHT matrix can be calculated from the tensor product as:
\begin{equation}
    \text{WHT (natural ordering) } = \begin{pmatrix}1&1\\1&-1\end{pmatrix}^{\otimes n}\,.
\end{equation}
In this case, like the sequency order and dyadic order variants of the WHT, we reorder the columns of the transform matrix. 
Unless otherwise expressly noted, we use a WHT where the $(i,j)$th entry of the Hadamard transform matrix is given by $(-1)^{\langle i,j\rangle}$, where the inner product is the symplectic inner product introduced above. 
The advantages of using this variant of the WHT is that when it is used to transform eigenvalues to error rates and vice-versa (\Cref{eq:WHT,eq:reverse:WHT}), the position of each Pauli in the transformed vector remains constant.

\bibliographystyle{apsrev4-2}
\bibliography{sparse}

\end{document}